\renewcommand{\cite}{\citep}
\def\url#1{\expandafter\string\csname #1\endcsname}
\newtheorem{thm}{Theorem}
\newtheorem{lem}{Lemma}
\newtheorem{cor}{Corollary}
\newtheorem{defn}{Defnition}
\newtheorem{post}{Postulate}
\newtheorem{prop}{Proposition}
\newtheorem{assum}{Assumption}
\crefname{section}{Section}{Section}
\crefname{subsection}{Section}{Section}
\crefname{figure}{Fig.}{Fig.}
\crefname{defn}{Definition}{Definition}
\crefname{post}{Postulate}{Postulate}
\crefname{lem}{Lemma}{Lemma}
\crefname{assum}{Assumption}{Assumption}
\crefname{thm}{Theorem}{Theorems}
\Crefname{thm}{Theorem}{Theorem}
\crefname{cor}{corollary}{corollary}
\Crefname{cor}{Corollary}{Corollary}
\crefname{prop}{Proposition}{Proposition}
\crefname{algorithm}{Algorithm}{Algorithm}
\Crefname{assum}{Assumption}{Assumption}
\renewcommand\vec[1]{\overset{\rightarrow}{#1}}
\newcommand{\cov}{\mathbb{C}{\rm ov}}
\newcommand{\Cov}{\mathbb{C}{\rm ov}}
\newcommand{\E}{\mathbb{E}}
\newcommand{\bX}{{\bf X}}
\newcommand{\bY}{{\bf Y}}
\newcommand{\X}{{\bf X}}
\newcommand{\Y}{{\bf Y}}
\newcommand{\h}{{\bf h}}
\newcommand{\G}{\mathcal{G}}
\newcommand{\R}{{\mathbb R}}
\newcommand{\C}{{\mathbb C}}
\newcommand\domi[1]{{\color{brown}Dominik: #1}}
\renewcommand\domi[1]{}
\newcommand\michel[1]{{\color{red}Michel: #1}}
\renewcommand\michel[1]{}
\newcommand\michelmarg[1]{\marginpar{\tiny\textcolor{red}{Michel: #1}}}
\renewcommand\michelmarg[1]{}
\newcommand\domimarg[1]{\marginpar{\tiny\color{brown}Dominik: #1}}
\renewcommand\domimarg[1]{}
\newcommand\naji[1]{{\color{green}Naji: #1}}
\renewcommand\naji[1]{}
\newcommand\najiq[1]{{\color{brown}Naji: #1}}
\renewcommand\najiq[1]{}
\title{Theoretical foundations of the principle of Spectral Independence for analyzing causation in time series \\ \small }
\title[Cause-effect inference via Spectral independence]{Cause-effect inference through spectral independence in linear dynamical systems: theoretical foundations.
}
\begin{document}
\maketitle

\begin{abstract}
	Distinguishing between cause and effect using time series observational data is a major challenge in many scientific fields. A new perspective has been provided based on the principle of \textit{Independence of Causal Mechanisms} (ICM), leading to the \textit{Spectral Independence Criterion} (SIC), postulating that the power spectral density (PSD) of the cause time series is {\it uncorrelated} with the squared modulus of the frequency response of the filter generating the effect. 
	Since SIC rests on methods and assumptions in stark contrast with most causal discovery methods for time series, it raises questions regarding what theoretical grounds justify its use. % that mainly exploit time lags (e.g. Granger causality) or properties of the noise (e.g. additive noise models). 
	In this paper, we provide answers covering several key aspects. After providing an information theoretic interpretation of SIC, we present an identifiability result that sheds light on the context for which this approach is expected to perform well. We further demonstrate the robustness of SIC to downsampling -- an obstacle that can spoil Granger-based inference.  Finally, an invariance perspective allows to explore the limitations of the spectral independence assumption and how to generalize it. Overall, these results support the postulate of Spectral Independence is a well grounded leading principle for causal inference based on empirical time series.
\end{abstract}
\begin{keywords}%
  Time Series, Independence of Causal Mechanisms, Information Geometry, Concentration of measure.
\end{keywords}

%%%%%%%%%%%%%%%%%%%%%%%%%%%%%%%%%%%%%

\section{Introduction}
%Complex systems are often challenging to understand because they are composed of elements whose mutual interactions are unknown. 
One purpose of causal inference is to estimate the directionality of cause-effect relationships between different parts of a system, to provide insights about the underlying mechanisms and how to intervene on them to influence the overall behavior. 
Since interventions are often challenging, or unethical to perform, a number of causal inference techniques have been developed to infer the causal relationships from observational data only \cite{spirtes2010introduction,Pearl2000causality}. %To be able to build causal knowledge, 
For this aim, they rely on key assumptions pertaining to the mechanisms generating the observed data. 
Classical constraint-based search methods such as the PC algorithm \cite{Spirtes1993} address this question by assuming successive observations are independent samples from the same unmanipulated population density in order to characterize it and infer compatible causal graphical models. However, these methods cannot infer the causal direction when the graphical model consists only of two variables. Moreover, observed data from complex natural system are often not i.i.d. and time dependent information reflects key aspects of these systems. 
On the other hand, causal relations between time series are typically explored via Granger-causality type methods 
\cite{granger1969investigating}, which require causal sufficiency. 
To overcome this limitation, more recent approaches \cite{Mastakouri2021} (and references therein) employ Markov condition and causal faithfulness to identify characteristic patterns of conditional (in)dependences that witness causal influence even in the presence of 
hidden common causes. 

Most common practical implementations of Granger causality rely on assumptions involving 
vector autoregressive structural equations and exogenous i.i.d. random variables called the \textit{innovations} of the process \cite{granger1969investigating,peters2012causal}. These methods can successfully estimate causal relationships when empirical data is generated according to the assumptions, but the results can be misleading when the model is misspecified. In particular, Granger causality may fail to infer the true direction of causation %in the presence of measurement noise \cite{friston14,vinck15}, instantaneous effects or 
when the sampling of the time series is not fast enough to capture the dynamical interactions precisely \cite{geweke1982measurement,gong2015discovering}, an issue that also spoils approaches like \cite{Mastakouri2021}, which also --like Granger-- relies on 
observations that refer to measurements at well-defined points in time. 

A different approach for inferring causal directions in time series, the Spectral Independence Criterion (SIC), has been introduced in \cite{shajarisale2015}. In contrast to the above mentioned Granger-like methods, %which exploit the additive noise structure of a model, 
SIC relies on the philosophical principle that the mechanism that generates the observed cause variable and the mechanism that generates the effect from the cause are chosen independently by Nature, such that these two mechanisms do not inform each other.  Possible formalizations of this  postulate of \textit{Independence of Causal Mechanisms} (ICM), have been proposed in  \cite{janzing2010causal,LemeireJ2012} (where `independence' amounts to algorithmic independence) and \cite{schoelkopf2012} 
(where
`independence' amounts to semi-supervised learning being useless). These abstract and general ideas have been further exploited to design several concrete domain-specific causal inference methods  \cite{daniusis2010inferring,janzing2012information,JanzingHS2010,Zscheischler2011,SgoJanHenSch15,shajarisale2015} among which 
SIC was the first to address the case of time series. It introduces an equation to formulate the ICM postulate when both the cause and the effect are stationary time series and the cause generates the effect trough a linear time invariant filter. This framework leads to defining a quantity called Spectral Density Ratio (SDR), to quantify from data to which extent the ICM postulate is approximately satisfied. The SDR is exploited to decide the direction of causation when a pair of time series is considered, notably in the context of neural times series \cite{shajarisale2015,ramirez2021coupling}, as well as to assess the ICM and extrapolation capabilities of convolutive generative models \cite{besserve2021theory}. Despite these works, the theoretical underpinnings of SIC remain largely unexplored. 

The present paper aims at establishing indentifiability results as well as connections to other existing frameworks in causality. 
After introducing the basic definitions of our causal inference framework (Sec.~\ref{sec:model}), we introduce the SIC assumption in Sec.~\ref{sec:sic}. %Sections~\ref{sec:interp} shows two interpretations of SIC: as a limiting case of the so-called Trace Method \cite{icml2010_062} (which infers the causal direction between two finite-dimensional random vectors) and as an orthogonality principle from an Information-Geometric perspective, extending ideas from   \cite{daniusis2010inferring,janzing2012information} to stochastic processes. Section~\ref{sec:identifiability}  justifies theoretically (with a generative model satisfying the ICM assumption) that the true causal direction can be identified using SIC in the generic case, barring the degenerate case of filters with constant modulus frequency response. This leads to a causal inference algorithm for noiseless settings, and an adaptation of this approach to noisy observations. 
%Section~\ref{sec:var} shows the application of SIC in the context of vector autoregressive (VAR) models used for Granger causal inference, which enables us to compare the principles behind both approaches.  
%Section~\ref{sec:downsamp} describes conditions under which SIC identifiability still holds when inference based on a downsampled version of data generating stochastic process model. 
%\michel{introduce group theoretic framework?}\naji{I am not sure what the last sentence means here}
 %Finally, section~\ref{sec:exp} explores the performance of SIC-based inference by experiments with simulated and real data. In particular, we show our approach outperforms Granger causality to estimate the direction of causation between two sub-structures of rat hippocampus using Local Field Potential (LFP) recordings.
We then provide an information geometric perspective on SIC (Sec.~\ref{sec:igci}), as well as theoretical guaranties for identifiability of the causal direction (Sec.~\ref{sec:identifiability}) that are robust to time-series downsampling (Sec.~\ref{sec:downsamp}). Finally, an invariance perspective on SIC allows to define generalizations of SIC adapted to specific application domains (Sec.~\ref{sec:invar}). All proofs are provided in Appendix~\ref{app:proofs}. Overall, these result clarify the condition under which SIC is applicable, its limitations and potential extensions. 
%The promising empirical results suggest the ICM postulate is a reasonable assumption for several application domains. 
%As they rely on different but not mutually exclusive principles, SIC-based inference and Granger-based approaches could be used as complementary information to perform reliable causal inference. 

\section{Background and Model description \label{sec:model}}

\subsection{Fourier transform of sequences}
Consider a sequence of real or complex numbers ${\bf a}=\{a_t , t\in \mathbb{Z}\}$ as a deterministic vector from the $\ell^1(\mathbb{Z})$ sequence space of bounded $\ell^1$ norm, $\|{\bf a}\|_1=\sum_{t\in\mathbb{Z}} |a_t|<\infty$. The support of the sequence is the subset $\mbox{Supp}({\bf a})=\{k\in \mathbb{Z}, |a_k|>0\}\subset \mathbb{Z}$.
Its Discrete-Time Fourier Transform (DTFT) is defined as
$$
\widehat{\rm a}(\nu)=\sum_{t\in \mathbb{Z}} a_t \exp(-\mathbf{i}2\pi\nu t),\, \nu \in \mathbb{R}
$$
Note that the DTFT of such sequence is a continuous 1-periodic function of the normalized frequency $\nu$ and can be characterized by its values on any unit length interval. We will use the unit interval centered around zero $\left[-1/2,\,1/2\right]=:\mathcal{I}$. In case the original sequence is real-valued, its Fourier transform is conjugate symmetric ($\widehat{a}(-\nu)=\widehat{a}^*(\nu)$) such that its modulus is an even function. This will be the case in the remainder of this paper. The squared $\ell^2$-norm $\|{\bf a}\|_{2}^2=\sum_t |a_t|^2$ is often called {\it energy}. By Parseval's theorem, it can be expressed in the Fourier domain by $\|\textbf{a}\|_{2}^2=\int_{-1/2}^{1/2}|\widehat{a}(\nu)|^2d\nu$. To simplify notations, we will denote by $\langle .\rangle$ the integral (or average) of a function over the unit interval $\mathcal{I}$, such that  $\|\textbf{a}\|_{2}^2=\langle    |\widehat{\rm a}|^2 \rangle$.

\subsection{Linear filters \label{subsec:math}}
 We assume that the causal mechanism corresponding linking a cause time series $\textbf{x}$ to an effect time series $\textbf{y}$ is a (deterministic) linear time invariant filter, such that 
\begin{equation}\label{eq:conv}
\textbf{y}=\{y_t\}:=\{\sum_{\tau\in \mathbb{Z}}h_\tau x_{t-\tau} \,\}=\textbf{h}_{\textbf{x}\rightarrow \textbf{y}}*\textbf{x}, 
\end{equation}
where $\textbf{h}$ denotes the {\it impulse response} of the filter and $*$ denotes discrete time convolution. We assume that the filter satisfies the Bounded Input Bounded Output (BIBO) stability property \cite{proakis2001digital}, which boils down to the necessary and sufficient condition $\textbf{h}_{\textbf{x}\rightarrow \textbf{y}}\in \ell^1(\mathbb{Z})$.
A filter $\textbf{h}$ is called {\it causal} whenever $h_\tau=0$ for $\tau<0$ and \textit{Finite Impulse Response} (FIR) when the transfer function has a finite support. 
\begin{comment}
Among the causal filters, the notation $\mathcal{S}=ARMA({\bf a},{\bf b})$ denotes a family of linear filters parameterized by vectors $(\textbf{a},\textbf{b})\in\mathbb{R}^{BO(\mathcal{S})+1}\times\mathbb{R}^{FO(\mathcal{S})+1}$ and defined by an input-output difference equation: 
\begin{gather}
\label{eq:diff}
y_n  = \frac{1}{a_{0}}\left(\sum_{i=0}^{FO(\mathcal{S})} b_{i} x_{n-i} + \sum_{j=1}^{BO(\mathcal{S})} a_{j} y_{n-j}\right).
\end{gather}
For these filters, $FO(\mathcal{S})$ and $BO(\mathcal{S})$ ($FO$ and $BO$ for short) are the feedforward and feedback order respectively. Further, $a_i$ denotes feedforward and $b_i$ the feedback coefficients. Note that when $FO=0$, the filter is called an autoregressive filter. Alternatively, whenever $BO=0$, i.e. $\{ARMA({\bf a},0),\,{\bf a}\in \mathbb{R}^{{n+1}}\}$ corresponds to the family of all causal FIR filters having their impulse response support included in $[0\, ..\, n]$. Whenever $BO\neq0$, the filter has Infinite Impulse Response (IIR), i.e. an impulse response function with infinite support, commonly called \textit{IIR filter}.
\end{comment}
\subsection{Stationary sequences}
 We assume that the input time series $\textbf{x}$ is a sample drawn from a real-valued zero mean weakly stationary process \cite{brockwell2009time}, $\{X_t,t\in \mathbb{Z}\}$,  
 %For a given index $t$, $X_t$ represents the random variable at index $t$. We use $\textbf{X}$ to represent the complete stochastic process and we use $\textbf{X}_{t:s}$ to indicate the random vector corresponding to the restriction of the time series to the integer interval $[t \,..\, s]$.
 %We will assume in the remainder of the paper that $\textbf{X}$ is a zero mean weakly stationary process \cite{brockwell2009time}, 
 and denote by $C_{xx}(\tau) = \mathbb{ E} [ X_t X_{t+\tau}] $  the {\it autocovariance function} of the process, which does not depend on $t$ due to stationarity. We also assume that $\sum_{\tau\in Z} \left|C_{xx}(\tau)\right|<+\infty$ such that its {\it Power Spectral Density} (PSD) $S_{xx}$, defined as the DTFT of the autocovariance function, is well defined.
Under these assumptions, $S_{xx}$ is bounded, continuous, and its average corresponds to the power of the process $\mathcal{P}(\textbf{X})=\mathbb{E}(|X_t|^2)=\langle S_{xx}\rangle$ which is thus finite. 
When such a sequence is fed into a filter of impulse response ${\bf h}_{\textbf{X}\rightarrow \textbf{Y}}$ as defined above, the stochastic output ${\bf Y}$ %is characterized by the following result:
%\michel{rewrite informally}
%\begin{prop}\label{prop:spec-multiply}
%Assume the weakly stationary input ${\bf X}$ with summable autocovariance is filtered by the BIBO linear system of impulse response ${\bf h}_{{\bf X}\to {\bf Y}}$ to provide the output ${\bf Y}$. Then ${\bf Y}$ 
is weakly stationary with summable autocovariance such that
\begin{equation}
S_{yy}(\nu)=|\widehat{\rm h}_{{\bf X}\to {\bf Y}}(\nu)|^2 S_{xx}(\nu), \nu\in \mathcal{I}.\label{eq:ampli}\end{equation}
where $\widehat{\rm h}_{{\bf X}\to {\bf Y}}$ denotes the Fourier transform of the impulse response, called the \textit{frequency response} of the system.
%\end{prop}
%\michel{add a sketch of the proof} 
This follows from elementary properties of the Fourier transform and Proposition 3.1.2. in \cite{brockwell2009time}.  If such a BIBO-stable filtering relationship exists in only one direction (i.e. when the frequency response is not invertible at some frequency), it is natural to assign causality to the stable direction. %\michel{we could justify by saying that small perturbation would lead to non stationary signal with probability one..., although this would require finding some relevant result in the literature}. 
If BIBO-stable filters and thus impulse responses can be defined for both directions, we can denote them ${\bf h}_{\bf {X}\to {\bf Y}}$ and ${\bf h}_{{\bf Y}\to {\bf X}}$, respectively, and
their Fourier transforms are continuous and linked by 
\[
\widehat{ \rm h}_{{\bf X}\to {\bf Y}}=\frac{1}{\widehat{\rm h}_{{\bf X}\to {\bf Y}}}\,.
\]
In such a situation, both the forward and backward filtering models are plausible structural causal models to explain the observed time series. Therefore a more sophisticated criterion is needed for causal inference. We focus on this challenging situation in the present work and will require basic assumptions for the causal models in the remainder of the paper, summarized as follows.
\begin{assum}[Invertible causal model]\label{assum:model}
The cause ${\bf X}$ is a weakly stationary time series with $C_{xx}\in \ell^1(\mathbb{Z})$ and such that $S_{xx}$ is strictly positive at all frequencies. The mechanism is an invertible BIBO-stable filter with impulse response ${\bf \rm {h}}_{{\bf X}\to {\bf Y}}$ such that its inverse is also BIBO-stable. The effect is defined as 
\[
{\bf Y}={\bf \rm h}_{{\bf X}\to {\bf Y}}*{\bf X}\,.
\]
\end{assum}

\section{Spectral Independence Criterion (SIC) \label{sec:sic}}
\subsection{Definition}

Given the two stationary processes ${\bf X}:=\{X_t\colon t\in \mathbb{Z}\}$ and ${\bf Y}:=\{Y_t\colon t\in \mathbb{Z}\}$ such that $\textbf{X}$ causes $\textbf{Y}$ through a linear filter. The ICM hypothesis introduced by \citet{shajarisale2015} can be stated as:
\begin{post}[Spectral Independence Criterion (SIC)]
A causal model satisfying Assumption 1 satisfies spectral independence whenever
\begin{equation}\label{eq:sic}
\langle S_{xx} |\widehat{\rm h}_{\mathbf{X}\rightarrow \mathbf{Y}}|^2 \rangle =\langle S_{xx}\rangle  \langle |\widehat{\rm h}_{\mathbf{X}\rightarrow \mathbf{Y}}|^2\rangle \,.
\end{equation}
\end{post}
% \begin{post}[Spectral Independence Criterion (SIC)]
% The causal model generates ${\bf Y}$ from the weakly stationary time series ${\bf X}$ (with summable autocovariance) using a BIBO-stable linear translation invariant system of impulse response ${\bf h}_{\mathbf{X}\rightarrow \mathbf{Y}}$ such that
% \begin{equation}\label{eq:sic}
% \langle S_{xx} |\widehat{\rm h}_{\mathbf{X}\rightarrow \mathbf{Y}}|^2 \rangle =\langle S_{xx}\rangle  \langle |\widehat{\rm h}_{\mathbf{X}\rightarrow \mathbf{Y}}|^2\rangle \,.
% \end{equation}
% \end{post}
In practice, this postulate is hypothesized to hold only approximately, and we will explain what is meant by that in \cref{sec:identifiability}. However, in Secs.~\ref{sec:sic}-\ref{sec:interp}, we will develop theoretical results based on the \textit{perfect} spectral independence postulate, that is, where~\cref{eq:sic} holds exactly. As \cref{eq:ampli} indicates, the filter applies an amplifying factor to the input power at each frequency to provide the output power, and \cref{eq:sic} makes a statement on the average amplification achieved across frequencies. Indeed, the left hand side of \cref{eq:sic} is the average PSD of the output signal $\{Y_t,t\in Z\}$ over all frequencies, i.e. the power $\mathcal{P}({\bf Y})$. Hence, SIC states that the output power can be computed by applying the frequency-averaged amplifying factor to the input power. This suggests that the amplification implemented by the mechanism is not adapted to the specific values of the input PSD at each frequency. %This intuition will be further developed in the next subsections.
Note that the postulate of \cref{eq:sic} can be rephrased using only the PSDs of ${\bf X}$ and ${\bf Y}$ alone using
(\ref{eq:ampli}) and under Assumption~\ref{assum:model} as, %which are closer to observable quantities than $\widehat{\rm h}_{\mathbf{X}\rightarrow \mathbf{Y}}$:
%\begin{prop}[SIC in terms of PSDs only]
%Under Assumption~\ref{assum:model}, condition \eqref{eq:sic} can be rephrased as:
\begin{gather}\label{eq:sicob}
\langle S_{yy} \rangle =\langle S_{xx}\rangle  \langle S_{yy}/S_{xx} \rangle\,.
\end{gather}
%\end{prop}
%\michel{add a sketch of the proof}
%Note that \eqref{eq:sicob} has to be used with care: in noisy systems, 
%$S_{yy}/S_{xx}$ is not a property of the filter alone but is influenced by the noise amplitude (see \cref{sec:noise}). However, unless mentioned otherwise, we will assume absence of measurement noise in the remainder of the paper and use this expression.

\subsection{Measuring spectral dependence}
\citet{shajarisale2015} then introduce scale invariant quantity $\rho_{\textbf{X}\to \textbf{Y}}$
measuring the departure from the SIC assumption, i.e. the dependence between input power spectrum and the frequency response of the filter:
 the Spectral Dependency Ratio (SDR) from $\textbf{X}$ to $\textbf{Y}$ is,  under \cref{assum:model}, given by
\begin{equation}
\label{eq:sdr}
\rho_{\textbf{X}\to \textbf{Y}}:=\frac{\langle S_{yy}\rangle  }{\langle S_{xx}\rangle \langle |\widehat{\rm h}_{\textbf{X}\to \textbf{Y}}|^2 \rangle }=\frac{\langle S_{yy}\rangle  }{\langle S_{xx}\rangle \langle S_{yy}/S_{xx}\rangle }\,.
\end{equation}
%which can be written as
%\begin{equation}
%\rho_{\textbf{X}\to \textbf{Y}}:=\frac{\langle S_{yy}\rangle  }{\langle S_{xx}\rangle \langle %S_{yy}/S_{xx}\rangle }\,.
%\end{equation}
\begin{comment}
The SDR is one when spectral independence is satisfied, which becomes more obvious by rewriting (\ref{eq:sdr}) as
\[
\rho_{\textbf{X}\to \textbf{Y}} = \frac{\Cov[S_{xx},|\widehat{\rm h}_{\textbf{X}\to \textbf{Y}}|^2]}{ \langle S_{xx}\rangle \langle |\widehat{\rm h}_{\textbf{X}\to \textbf{Y}}|^2\rangle } +1\,,
\]
where the covariance definition is taken from \cref{eq:covSIC}. 
\end{comment}
Moreover $\rho_{\textbf{X}\to \textbf{Y}}$ can be written in terms of total signal powers and energy of the impulse response:
\begin{gather}\label{eq:sicaspowers}
\rho_{\textbf{X}\to \textbf{Y}}=\frac{{\mathcal{P}}(\textbf{Y})}{{\mathcal{P}}(\textbf{X})\|{\bf h}_{\textbf{X}\to \textbf{Y}}\|_{2}^2}\,.
\end{gather}
To interpret \eqref{eq:sicaspowers}, note that 
the filter ${\bf h}_{\textbf{X}\to \textbf{Y}}$ amplifies the power of a signal 
whose power spectrum is constant, i.e. a \textit{white noise}, by a factor   $\|\textbf{h}\|_{2}^2$. Thus, the SDR measures how much the filter amplifies the power of the cause signal compared to the case in which the cause was a white noise. 
We then define $\rho_{\textbf{Y}\to \textbf{X}}$ 
by exchanging the roles of $\textbf{X}$ in the above equations. We provide a probabilistic interpretation of spectral independence in Appendix~\ref{sec:probainterp} and an intuitive example illustrating its meaning in Appendix~\ref{subsec:extremeExample}.

%and $\textbf{Y}$:
%\begin{gather}\label{eq:delta_infty2}
%\rho_{\textbf{Y}\to \textbf{X}}:=
%\frac{\langle S_{xx}\rangle  }{\langle S_{yy}\rangle \langle |\widehat{\rm h}_{\textbf{Y}\to %\textbf{X}}|^2\rangle }
%\end{gather}
%Note that we expect dependence values below $1$ for the wrong causal directions due to (\ref{eq:sicviol}).

\section{Information geometric interpretation}\label{sec:interp} \label{sec:igci}
\begin{wrapfigure}{h}{.45\textwidth}
	\includegraphics[width=.36\textwidth]{./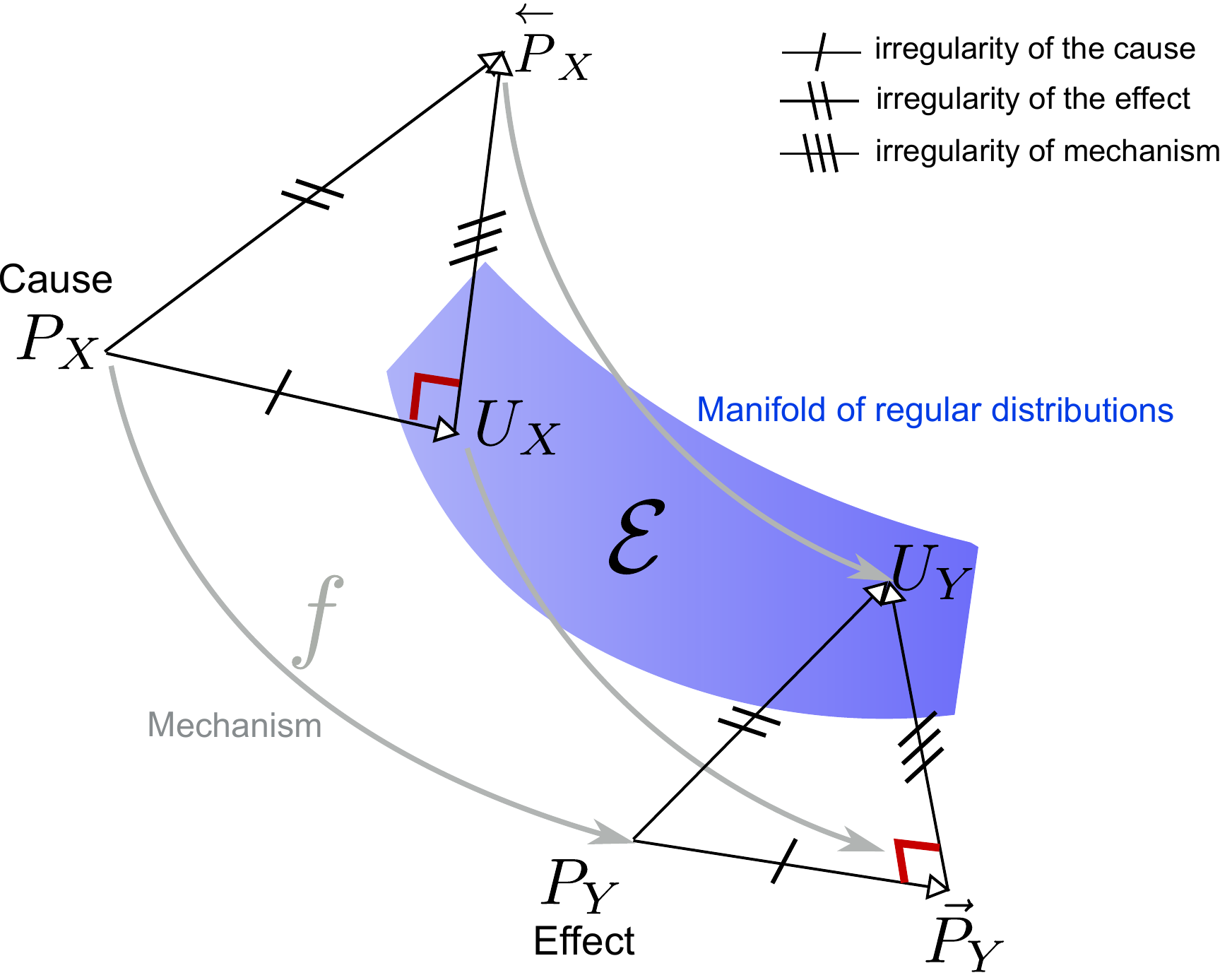}
	\caption{Illustration of eq.~(\ref{eq:orthoirreg}). 
	%Grey arrow indicate the application of the mechanism $f$ to convert a distribution to another one. The blue area indicates the set of reference distributions $\mathcal{E}$ and the red right angles indicates with vectors are orthogonal. 
	Matching symbols on the segments indicate congruency of KL-divergence values \label{fig:IGCI}. %\footnote{Here $\overset{\leftarrow}{P_X}$ is the image of $U_Y$ under $f^{-1}$.}
	}
\end{wrapfigure}

 While \citet{shajarisale2015} elaborated on the connection of SIC with the Trace Method \cite{icml2010_062}, we now investigate its relation to another type of ICM-based approch: Information-Geometric Causal Inference (IGCI) \cite{daniusis2010inferring}. 
We get inspiration from \citet{janzing2012information} who established a connection between IGCI for linear relationships and the Trace Method. At the heart of this derivation lies an information geometric interpretation of the principle of independence of cause and mechanism for probability distributions. After introducing this view, we will show how SIC can be casted into the same framework, in the context of Gaussian processes. 
\subsection{Information geometry background}
Information Geometry is a discipline where ideas from differential geometry are applied to probability theory. Probability distributions are represented as points from a Riemannian manifold, known as \textit{statistical manifold}. Equipped with Kullback-Leibler divergence as premetric\footnote{A premetric on a set $\mathcal{X}$ is a function $d:\mathcal{X}\times \mathcal{X}\to \mathbb{R}^{+}\cup\{0\}$ such that (i) $d(x,y) \geq 0$ for all $x$ and $y$ in $\mathcal{X}$ and (ii) $d(x,x)=0$ iff $x=0$. Unlike a metric, it is not required to be symmetric.}, one can study the geometrical properties of the statistical manifold. For more on this, see e.g. \cite{amari2007methods}.

For two probability distributions $P,Q$,
$D(P\|Q)$ will denote their Kullback-Leibler divergence, also called the relative entropy distance. Given a deterministic causal structural equation of the form $Y:=f(X)$, and given $P_X$ and $P_Y$, the distributions of the cause and effect, respectively, we assume the irregularity of each distribution can be quantified by evaluating their divergence to a reference set $\mathcal{E}$ of ''regular'' distributions\footnote{Here ''regular'' is only meant in an intuitive sense, not implying any further mathematical notion. If $\mathcal{E}$ is the set of Gaussians, for instance, the distance from  $\mathcal{E}$ measures non-Gaussianity.}  
$$
D(P_X\|\mathcal{E})=\inf_{U\in\mathcal{E}} D(P_X\|U),\; \quad D(P_Y\|\mathcal{E})=\inf_{U\in\mathcal{E}} D(P_Y\|U).
$$
We assume moreover that these infima are reached at a unique point, their projection on $\mathcal{E}$
$$
U_X=\arg\min_{U\in\mathcal{E}} D(P_X\|U),\;U_Y=\arg\min_{U\in\mathcal{E}} D(P_Y\|U).
$$
Assuming $X$ and $Y$ are $n-dimensional$ multivariate Gaussian random vectors and $\mathcal{E}$ is the set of isotropic Gaussian distributions, the ICM-based Trace Condition \cite{icml2010_062} has been shown to be equivalent to:
$$
D(P_Y\|U_Y)=D(P_Y\|\vec{P}_Y)+D(\vec{P}_Y\|U_Y)
$$
where $\vec{P}_Y$ is the distribution of $f(X)$ when $X$ is distributed according to $U_X$. This relation can be interpreted as an orthogonality principle by considering the Kullback-Leibler divergences as a generalization of the squared Euclidean norm for the vectors $\overrightarrow{P_YU_Y}$, $\overrightarrow{P_Y\vec{P_Y}}$ and $\overrightarrow{\vec{P_Y}U_Y}$ as illustrated in \cref{fig:IGCI}.
Since applying the bijection $f^{-1}$ preserves the divergences, we get the equivalent relation
\begin{equation}\label{eq:orthoirreg}
D(P_Y\|U_Y)=D(P_X\|U_X)+D(\vec{P}_Y\|U_Y).
\end{equation}
This orthogonality principle thus reflects the additivity of irregularities, in the sense that $D(P_Y\|U_Y)=D(P_Y\|\mathcal{E})$ corresponds to the irregularity of $Y$. It measures the distance to the set of ``regular'' distributions, while $D(P_X\|U_X)=D(P_X\|\mathcal{E})$ measures the irregularity of $X$ in the same way. In addition, it can be shown that $D(\vec{P}_Y\|U_Y)=D(\vec{P}_Y\|\mathcal{E})$ and it thus measures the irregularity of the mechanism $f$ indirectly, via the ``irregularity'' of the distribution resulting from applying $f$ to a regular distribution $U_Y$.
We now show that spectral independence encodes a similar relation for discrete time
%purely non-deterministic \domi{what does purely non-deterministic mean?\url{https://math.stackexchange.com/questions/1320431/purely-nondeterministic-weakly-stationary-processes}} 
stationary Gaussian processes.% (with some additional regularity assumptions). 
%The steps taken are quite similar to \cite{icml2010_062}. 
\subsection{Information geometry of stochastic processes}
The generalization of KL-divergence to a pair of stationary time series $({\bf X},{\bf Y}) $ is called the relative entropy rate defined as \cite{ihara1993information}
$$
	\bar{D}(P_{{\bf X}}\| P_{{\bf \tilde{X}}}):= \lim_{N\rightarrow+\infty} \frac{1}{N} D(P_{{\bf X}_{1:N}}\|P_{{\bf Y}_{1:N}}).
$$
In the case of Gaussian processes, this quantity can be computed explicitly.
\begin{prop}\label{lem:klprocess} Let ${\bf X}$ and ${\bf \tilde{X}} $ be zero mean weakly stationary discrete Gaussian processes with PSD's $S_{xx}$ and $S_{\tilde{x}\tilde{x}}$, respectively, with $C_{xx}, C_{\tilde{x}\tilde{x}}\in \ell^1(\mathbb{Z})$ and such that $S_{xx}, S_{\tilde{x}\tilde{x}}$ are strictly positive at all frequencies (see also \cref{assum:model}) 
%\domi{Assumption 1 is about a cause effect relation, confusing to mention it here, here it's about purely mathematical properties of stochastic processes}
the relative entropy rate is given by %\michel{repeat what is required from assumption 1 and refer additionally to it in parenthesis.}
	%\begin{gather*}
	\begin{equation}\label{eq:KLprocess}
	\bar{D}(P_{{\bf X}}\| P_{{\bf \tilde{X}}})=\frac{1}{2}\int_{-\frac{1}{2}}^{\frac{1}{2}}\Bigl(\frac{S_{xx}(\nu)}{S_{\tilde{x}\tilde{x}}(\nu)}-1-\log\frac{S_{xx}(\nu)}{S_{\tilde{x}\tilde{x}}(\nu)}\Bigr)d\nu\,.
	%\end{gather*}
	\end{equation}
\end{prop}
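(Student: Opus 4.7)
The plan is to reduce the infinite-dimensional relative entropy rate to a limit of finite-dimensional Gaussian KL divergences, compute those explicitly in closed form using the Toeplitz covariance structure, and then pass to the limit using asymptotic spectral results for Toeplitz matrices (Szegő's theorem).

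First, I would note that since ${\bf X}$ and ${\bf \tilde{X}}$ are zero-mean Gaussian and weakly stationary with summable autocovariances, the finite marginals $P_{{\bf X}_{1:N}}$ and $P_{{\bf \tilde{X}}_{1:N}}$ are zero-mean multivariate Gaussians on $\mathbb{R}^N$ with covariance matrices given by the symmetric Toeplitz matrices $T_N(S_{xx})$ and $T_N(S_{\tilde{x}\tilde{x}})$ whose entries are the autocovariance coefficients $C_{xx}(i-j)$ and $C_{\tilde{x}\tilde{x}}(i-j)$ respectively. Under the strict positivity assumption on the PSDs, both matrices are positive definite, so I can apply the standard closed-form formula
\[
D\bigl(\mathcal{N}(0,\Sigma)\,\|\,\mathcal{N}(0,\tilde\Sigma)\bigr)=\tfrac{1}{2}\bigl[\tr(\tilde\Sigma^{-1}\Sigma)-N+\log\det\tilde\Sigma-\log\det\Sigma\bigr]\,,
\]
yielding
\[
\tfrac{1}{N}D(P_{{\bf X}_{1:N}}\|P_{{\bf \tilde{X}}_{1:N}})=\tfrac{1}{2N}\bigl[\tr\bigl(T_N(S_{\tilde{x}\tilde{x}})^{-1}T_N(S_{xx})\bigr)-N+\log\det T_N(S_{\tilde{x}\tilde{x}})-\log\det T_N(S_{xx})\bigr].
\]

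Next, I would invoke the Szegő limit theorem: for a continuous, strictly positive symbol $S$, one has $\frac{1}{N}\log\det T_N(S)\to \int_{-1/2}^{1/2}\log S(\nu)\,d\nu$. Applied to both $S_{xx}$ and $S_{\tilde{x}\tilde{x}}$, this handles the log-determinant contribution and produces the $-\log(S_{xx}/S_{\tilde{x}\tilde{x}})$ term in the integrand. For the trace term I would use the Avram--Parter generalization of Szegő's theorem for products of Toeplitz matrices: since $S_{\tilde{x}\tilde{x}}$ is continuous and bounded away from zero on $\mathcal{I}$, the symbol $1/S_{\tilde{x}\tilde{x}}$ is well-defined and continuous, and
\[
\tfrac{1}{N}\tr\bigl(T_N(S_{\tilde{x}\tilde{x}})^{-1}T_N(S_{xx})\bigr)\;\longrightarrow\;\int_{-1/2}^{1/2}\frac{S_{xx}(\nu)}{S_{\tilde{x}\tilde{x}}(\nu)}\,d\nu.
\]
Combining these three limits with the $-N/(2N)=-1/2$ constant gives exactly the integrand $\frac{1}{2}\bigl(\tfrac{S_{xx}}{S_{\tilde{x}\tilde{x}}}-1-\log\tfrac{S_{xx}}{S_{\tilde{x}\tilde{x}}}\bigr)$.

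The main obstacle is the trace limit, because $T_N(S_{\tilde{x}\tilde{x}})^{-1}$ is not itself a Toeplitz matrix with symbol $1/S_{\tilde{x}\tilde{x}}$, so the equality holds only asymptotically. Justifying it rigorously requires checking the hypotheses of Avram--Parter (or equivalently approximating $T_N(S_{\tilde{x}\tilde{x}})^{-1}$ by $T_N(1/S_{\tilde{x}\tilde{x}})$ with a vanishing error in normalized trace norm), and here the assumption that $S_{\tilde{x}\tilde{x}}$ is strictly positive and continuous on the compact interval $\mathcal{I}$ — which by continuity means it is uniformly bounded away from zero — is precisely what controls the operator norm of the inverse uniformly in $N$. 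The log-determinant limit, in contrast, is a direct application of the classical first Szegő theorem under the same assumptions, and the finite-$N$ KL formula is standard. Once all three asymptotic pieces are in place, dividing by $N$ and taking $N\to\infty$ in the displayed equation yields \cref{eq:KLprocess}.
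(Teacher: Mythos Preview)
Your argument is correct and is in fact the standard route to this formula: write the finite-block Gaussian KL explicitly, then pass to the limit via Szeg\H{o}'s theorem for the log-determinant term and an Avram--Parter/asymptotic-equivalence argument for the trace term. Your identification of the only delicate point---that $T_N(S_{\tilde{x}\tilde{x}})^{-1}$ is not Toeplitz and must be replaced asymptotically by $T_N(1/S_{\tilde{x}\tilde{x}})$---is exactly right, and the strict positivity and continuity of $S_{\tilde{x}\tilde{x}}$ on $\mathcal{I}$ are precisely the hypotheses that make this replacement work.

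The paper, however, does not prove this proposition at all: it simply states that ``The proof is a direct application of [Theorem~2.4.4., Ihara~1993]'' and moves on. So you have supplied substantially more than the paper does; your sketch is essentially a reconstruction of what that cited theorem contains. If you wanted to align with the paper you could just cite Ihara, but what you have written is a genuine self-contained outline and is more informative.
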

%\michel{Provide a sketch of the proof}
The proof is a direct application of \cite[Theorem 2.4.4.]{ihara1993information}. 
\begin{comment}
For a better understanding of \eqref{eq:KLprocess}, note that the function $z-1$ is the linear approximation
of $\log z$ at $z=1$. Since $\log$ is a strictly concave function, it is strictly smaller than its linear approximation for all  positive $z\neq 1$ and thus $z-1-\log z$ is positive and gets larger
the more $z$ deviates from $1$. Therefore, \cref{eq:KLprocess} measures how much
$S_{xx}/S_{\tilde{x}\tilde{x}}$ differs from $1$ on average over all frequencies.  
\end{comment}

\subsection{SIC as an orthogonality principle}  \label{ssec:orthoirreg}
%\michel{possibly move lemma to supplement}
We consider $\bar{\mathcal{E}}$ the manifold of Gaussian white noises: i.e. Gaussian processes with constant PSD, as the set of regular distributions replacing ${\mathcal{E}}$. Let $\mathbf{P}_{\bar{\mathcal{E}}} {\bf X}$ be the information geometric projection $\mathbf{P}_{\bar{\mathcal{E}}} {\bf X}$ of ${\bf X}$ on $\bar{\mathcal{E}}$. This is then a Gaussian white noise of power ${\mathcal{P}}({\bf X})$ (see \cref{lem:klisotropic}). We then have:
\begin{thm}\label{thm:GPIGICI}
	Let \cref{assum:model} hold and further assume that ${\bf X}$ is a Gaussian process. Set $U_{\bf X}=\mathbf{P}_{\bar{\mathcal{E}}} {\bf X}$, $U_{\bf Y}=\mathbf{P}_{\bar{\mathcal{E}}} {\bf Y}$, and let $\overset{\rightarrow}{P_{{\bf Y}}}$ be the distribution obtained by feeding a white noise process with distribution $U_{\bf X}$ into $\mathcal{S}$ (thus convolving by ${\rm h}_{{\bf X}\rightarrow{\bf Y}}$). Then we have:
	\begin{equation}\label{eq:SICIGCI}
	\textstyle \bar{D}(P_{{\bf Y}}||\bar{\mathcal{E}})=\bar{D}(P_{{\bf X}}||\bar{\mathcal{E}})+\bar{D}(\overset{\rightarrow}{P_{{\bf Y}}}||U_{{\bf Y}})+\frac{1}{2}\left(1-\frac{\langle  S_{xx}\rangle\langle \frac{S_{yy}}{S_{xx}}\rangle}{\langle S_{yy}\rangle}\right). 
	\end{equation}
\end{thm}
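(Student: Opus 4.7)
The plan is to evaluate all three relative entropy rates in the claimed identity using the explicit formula from Proposition~\ref{lem:klprocess}, then verify the algebraic identity directly. Everything hinges on Proposition~\ref{lem:klprocess} applied to three pairs of Gaussian processes.

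First I would fix the PSDs of the three auxiliary processes that appear. By \cref{lem:klisotropic} (invoked in the statement), the information projections $U_{\bf X}$ and $U_{\bf Y}$ are Gaussian white noise processes whose constant PSDs equal the powers $\langle S_{xx}\rangle$ and $\langle S_{yy}\rangle$, respectively. The process $\overset{\rightarrow}{P_{\bf Y}}$ is obtained by feeding the white noise $U_{\bf X}$ through ${\rm h}_{{\bf X}\to{\bf Y}}$, so by \cref{eq:ampli} its PSD is $|\widehat{\rm h}_{{\bf X}\to{\bf Y}}|^2\langle S_{xx}\rangle=(S_{yy}/S_{xx})\langle S_{xx}\rangle$, using Assumption~\ref{assum:model} to express $|\widehat{\rm h}|^2=S_{yy}/S_{xx}$.

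Next I would apply Proposition~\ref{lem:klprocess} to each of the three terms. Because the integrand has the form $z-1-\log z$ and because $\int S_{\cdot\cdot}\,d\nu=\langle S_{\cdot\cdot}\rangle$, the $z-1$ parts collapse nicely: a direct calculation yields
\begin{align*}
\bar{D}(P_{\bf X}\|U_{\bf X})&=\tfrac{1}{2}\bigl(\log\langle S_{xx}\rangle-\langle \log S_{xx}\rangle\bigr),\\
\bar{D}(P_{\bf Y}\|U_{\bf Y})&=\tfrac{1}{2}\bigl(\log\langle S_{yy}\rangle-\langle \log S_{yy}\rangle\bigr),\\
\bar{D}(\overset{\rightarrow}{P_{\bf Y}}\|U_{\bf Y})&=\tfrac{1}{2}\Bigl(\tfrac{\langle S_{xx}\rangle\langle S_{yy}/S_{xx}\rangle}{\langle S_{yy}\rangle}-1-\langle\log S_{yy}\rangle+\langle\log S_{xx}\rangle-\log\langle S_{xx}\rangle+\log\langle S_{yy}\rangle\Bigr),
\end{align*}
where in the last line I used $\langle\log|\widehat{\rm h}|^2\rangle=\langle\log S_{yy}\rangle-\langle\log S_{xx}\rangle$ and $\int |\widehat{\rm h}|^2\langle S_{xx}\rangle\,d\nu=\langle S_{xx}\rangle\langle S_{yy}/S_{xx}\rangle$.

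Finally I would add the right-hand side of \cref{eq:SICIGCI}: the quantity $\tfrac{\langle S_{xx}\rangle\langle S_{yy}/S_{xx}\rangle}{\langle S_{yy}\rangle}$ cancels between $\bar{D}(\overset{\rightarrow}{P_{\bf Y}}\|U_{\bf Y})$ and the remainder term, the constants $\pm 1$ cancel, $\pm\log\langle S_{xx}\rangle$ cancel, and $\pm\langle\log S_{xx}\rangle$ cancel. What remains is exactly $\tfrac{1}{2}(\log\langle S_{yy}\rangle-\langle\log S_{yy}\rangle)=\bar{D}(P_{\bf Y}\|U_{\bf Y})=\bar{D}(P_{\bf Y}\|\bar{\mathcal{E}})$, proving the identity. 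There is no real obstacle beyond bookkeeping: the only care needed is to confirm that the integrability/positivity hypotheses of Proposition~\ref{lem:klprocess} transfer from ${\bf X}$ to ${\bf Y}$ and $\overset{\rightarrow}{P_{\bf Y}}$, which follows from Assumption~\ref{assum:model} and BIBO-stability of ${\rm h}_{{\bf X}\to{\bf Y}}$ and its inverse.
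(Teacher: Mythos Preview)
Your proposal is correct and follows essentially the same approach as the paper: identify the PSDs of $U_{\bf X}$, $U_{\bf Y}$, and $\overset{\rightarrow}{P_{\bf Y}}$ via \cref{lem:klisotropic} and \cref{eq:ampli}, then apply Proposition~\ref{lem:klprocess} to each relative-entropy-rate term and check the resulting algebraic identity. Your write-up is in fact more explicit than the paper's, which, after stating the formula for $\bar{D}(\overset{\rightarrow}{P_{\bf Y}}\|U_{\bf Y})$, simply declares ``which completes the proof'' and leaves the cancellation you spell out to the reader.
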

As a consequence the following corollary shows that SIC corresponds to orthogonality of PSD variations in information space, in the case of weakly stationary Gaussian processes.
\begin{cor}\label{cor:ortho}
	Under assumptions of \cref{thm:GPIGICI}, the SIC postulate is equivalent to the orthogonality of irregularities relative to white-noise Gaussian processes :
	\begin{equation}\label{eq:orthirregSIC}
	\bar{D}(P_{{\bf Y}}||\bar{\mathcal{E}})=\bar{D}(P_{{\bf X}}||\bar{\mathcal{E}})+\bar{D}(\overset{\rightarrow}{P_{{\bf Y}}}||\bar{\mathcal{E}})\,.
	\end{equation}
\end{cor}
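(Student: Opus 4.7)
The plan is to start from the identity already established in Theorem 1 and transform its ``cross term'' $\bar{D}(\overset{\rightarrow}{P_{{\bf Y}}}\|U_{{\bf Y}})$ into $\bar{D}(\overset{\rightarrow}{P_{{\bf Y}}}\|\bar{\mathcal{E}})$, absorbing the discrepancy into the additive constant in \eqref{eq:SICIGCI}. Once this is done, the orthogonality \eqref{eq:orthirregSIC} will become equivalent to the vanishing of the single remaining scalar term, which can then be identified with SIC.

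First I would read off the PSDs of the four processes at play: $U_{\bf X}$ is white with constant PSD $\mathcal{P}({\bf X})=\langle S_{xx}\rangle$; $U_{\bf Y}$ is white with constant PSD $\mathcal{P}({\bf Y})=\langle S_{yy}\rangle$; feeding $U_{\bf X}$ through the filter yields $\overset{\rightarrow}{P_{{\bf Y}}}$ with PSD $|\widehat{\rm h}_{\textbf{X}\to \textbf{Y}}|^2\mathcal{P}({\bf X})$ and total power $\mathcal{P}({\bf X})\|{\bf h}\|_2^2$; and (using the lemma \texttt{klisotropic} already cited for the projection onto $\bar{\mathcal{E}}$) the white-noise projection of $\overset{\rightarrow}{P_{{\bf Y}}}$ has constant PSD $\mathcal{P}({\bf X})\|{\bf h}\|_2^2$.

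Next, I would apply the closed form \eqref{eq:KLprocess} from Proposition 1 to compute $\bar{D}(\overset{\rightarrow}{P_{{\bf Y}}}\|U_{{\bf Y}})$ and $\bar{D}(\overset{\rightarrow}{P_{{\bf Y}}}\|\bar{\mathcal{E}})$ directly. In both expressions the integrands involve $\log|\widehat{\rm h}_{\textbf{X}\to \textbf{Y}}|^2$, but these terms are identical and cancel when I subtract, so only constant factors remain. Writing $r=\mathcal{P}({\bf X})\|{\bf h}\|_2^2/\mathcal{P}({\bf Y})=1/\rho_{{\bf X}\to{\bf Y}}$ (by~\eqref{eq:sdr}--\eqref{eq:sicaspowers}), the difference simplifies to
\begin{equation*}
\bar{D}(\overset{\rightarrow}{P_{{\bf Y}}}\|U_{{\bf Y}})-\bar{D}(\overset{\rightarrow}{P_{{\bf Y}}}\|\bar{\mathcal{E}})=\tfrac{1}{2}\bigl(r-1-\log r\bigr).
\end{equation*}

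Substituting this into \eqref{eq:SICIGCI}, the correction $\tfrac{1}{2}(1-r)$ already present in Theorem 1 exactly cancels the $\tfrac{1}{2}(r-1)$ that appears, leaving the clean identity
\begin{equation*}
\bar{D}(P_{{\bf Y}}\|\bar{\mathcal{E}})=\bar{D}(P_{{\bf X}}\|\bar{\mathcal{E}})+\bar{D}(\overset{\rightarrow}{P_{{\bf Y}}}\|\bar{\mathcal{E}})+\tfrac{1}{2}\log\rho_{{\bf X}\to{\bf Y}},
\end{equation*}
which holds unconditionally under Assumption 1 and Gaussianity. The orthogonality \eqref{eq:orthirregSIC} is therefore equivalent to $\log\rho_{{\bf X}\to{\bf Y}}=0$, i.e.\ $\rho_{{\bf X}\to{\bf Y}}=1$, which by \eqref{eq:sdr} is precisely SIC. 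Neither direction requires a separate argument since everything follows from a single algebraic identity. The only real ``obstacle'' is the careful bookkeeping of the two power constants ($\mathcal{P}({\bf Y})$ versus $\mathcal{P}({\bf X})\|{\bf h}\|_2^2$) that distinguishes $U_{\bf Y}$ from the information projection of $\overset{\rightarrow}{P_{{\bf Y}}}$; no new analytic input beyond Proposition 1 is needed.
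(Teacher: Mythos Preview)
Your proof is correct and arguably cleaner than the paper's. The paper treats the two implications separately: for SIC $\Rightarrow$ orthogonality it first drops the scalar term in \eqref{eq:SICIGCI} and then argues that under SIC the power of $\overset{\rightarrow}{P_{{\bf Y}}}$ equals $\mathcal{P}({\bf Y})$, so its information projection onto $\bar{\mathcal{E}}$ coincides with $U_{\bf Y}$; for the converse it combines \eqref{eq:orthirregSIC} with \eqref{eq:SICIGCI} and expands both divergences via \eqref{eq:KLprocess} and \eqref{eq:KLprocessKL} until the logarithmic form of SIC drops out. You instead compute once and for all the gap $\bar{D}(\overset{\rightarrow}{P_{{\bf Y}}}\|U_{\bf Y})-\bar{D}(\overset{\rightarrow}{P_{{\bf Y}}}\|\bar{\mathcal{E}})=\tfrac12(r-1-\log r)$ and substitute it into \eqref{eq:SICIGCI} to obtain the unconditional identity
\[
\bar{D}(P_{{\bf Y}}\|\bar{\mathcal{E}})=\bar{D}(P_{{\bf X}}\|\bar{\mathcal{E}})+\bar{D}(\overset{\rightarrow}{P_{{\bf Y}}}\|\bar{\mathcal{E}})+\tfrac12\log\rho_{{\bf X}\to{\bf Y}},
\]
from which the equivalence is a single line. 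Beyond economy, your route yields an explicit residual $\tfrac12\log\rho_{{\bf X}\to{\bf Y}}$ quantifying the departure from orthogonality, which is a small bonus not stated in the paper. The paper's forward argument, on the other hand, makes the geometric content more visible: SIC literally forces $U_{\bf Y}$ to be the white-noise projection of $\overset{\rightarrow}{P_{{\bf Y}}}$.
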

%\michel{interpretation: show centrality of reference measure plus high dimensional orthogonality...}
%This result shows how spectral independence can be seen as 
\section{Identifiability results \label{sec:identifiability}}
The proposal by \citet{shajarisale2015} to use SIC for inferring the causal direction is essentially based on two insights:
1/ an argument for why the SDR is expected to be close to $1$ for the {\it causal} direction; 
2/ an argument for why the SDR is \textit{not} expected to be close to $1$ in the anti-causal direction. After reviewing these arguments, we show for the first time that they can be exploited to show identifiability of the causal direction based on SDR in the following toy generative model. 

\subsection{Generative model \label{gen_model}} 

%The underlying postulate in SIC is that Nature generates the properties of the cause and of the mechanism independently. To illustrate how such an assumption leads to SIC, we now describe 
%Consider a model where the coefficients of ${\bf h}_{\bX \to \bY}$ (the mechanism), are randomly generated according to a distribution satisfying some simple symmetry properties. We then show that under this assumptions, for a fixed $\bX$ (and thus independent from ${\bf h}_{\bX \to \bY}$), the forward SDR $\rho_{\X\to\Y}$ is close to one with high probability.

 %Although we do not claim that this generative model provides an accurate description of how nature generates causal relations, we consider this toy model as supportive for the SIC assumption. To what extent SIC is satisfied in nature needs to be explored by extensive empirical evaluations.
 
 Assume a length $m$ Finite Impulse Response (FIR) $\textbf{h}$, such that $h_\tau = 0$ for all $\tau<k$ and $\tau \geq k+m$, for some $m$ and $k$. Then $\textbf{h}$ is given by a $m$ dimensional real vector ${\bf b}$ such that 
\[
h_{i+k}= b_{i} \quad i=0,\dots,m-1\,.
\]
We assume that ${\bf b}$ has been generated by nature as a single sample drawn from a spherically symmetric distribution (see for example \cite[Chapter 4]{bryc_1995-z5} for a rigorous definition). This implies that a random variable $\bf B$ from which $\bf b$ is sampled takes the form
\[
{\bf B} =  R{\bf U},
\]
where $R\geq 0$ is a real valued radius distribution, and $\bf U$ is uniformly distributed on the unit sphere in $\mathbb{R}^m$. $\bf U$ and $B$ are stochastically independent,
 which means that for any orthogonal transformation $U\in O(m)$, the distribution of ${U \bf b}$ is identical to the distribution of ${\bf b}$. An important family of spherically symmetric distributions is the zero mean isotropic gaussians (i.e. with a covariance matrix that is a multiple of the identity matrix). The general idea behind this assumption is that the distribution of the impulse response is invariant to a reparameterization of the vector space in a new system of coordinates that preserves the energy of the impulse response.
\Cref{thm_COM_FIR} shows that for large $m$ the resulting filter will approximately satisfy SIC 
with high probability. 
\begin{restatable}[{\bf Concentration of measure for FIR filters}]{thm}{comfir}
\label{thm_COM_FIR}
Assume a length $m$ mechanism's impulse response ${\rm h}_{{\bf X}\rightarrow {\bf Y}}$ with a spherically symmetric generative model and a fixed input PSD $S_{xx}$, then for any given $\varepsilon>0$, with probability at least $1-2\exp(-m^3\varepsilon^2)$, we have
\begin{gather*}
|\rho_{\textbf{X}\to\textbf{Y}}- 1|\leq \frac{8 \varepsilon}{\langle S_{xx}\rangle} \max\limits_{\nu} S_{xx}(\nu)\,.
\end{gather*}
\end{restatable}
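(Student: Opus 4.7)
The plan is to exploit the spherical symmetry of $\mathbf{b}$ to reduce $\rho_{\mathbf{X}\to\mathbf{Y}}$ to a quadratic form on the unit sphere $S^{m-1}$, and then apply a concentration-of-measure inequality. From \cref{eq:sicaspowers} both $\mathcal{P}(\mathbf{Y}) = \langle S_{xx}|\widehat{\rm h}|^2\rangle$ and $\|\mathbf{h}\|_{2}^2$ are homogeneous of degree two in $\mathbf{b}$, so $\rho_{\mathbf{X}\to\mathbf{Y}}$ depends only on the unit direction $\mathbf{U} := \mathbf{b}/\|\mathbf{b}\|_{2}$. Under the spherical-symmetry assumption $\mathbf{B} = R\mathbf{U}$, this direction is uniform on $S^{m-1}$ independently of $R$, so I may work as if $\|\mathbf{b}\|_{2} = 1$ throughout.

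The second step is to recognise the numerator of the SDR as a quadratic form over the Toeplitz autocovariance matrix of the input. Since the shift $h_{t+k} = b_t$ contributes only a unit-modulus phase to $\widehat{\rm h}(\nu)$, direct expansion together with the inverse DTFT identity $C_{xx}(s-t) = \int S_{xx}(\nu)\cos(2\pi\nu(s-t))\,d\nu$ gives
\[
\langle S_{xx}|\widehat{\rm h}|^2\rangle \;=\; \mathbf{b}^{\top} A\,\mathbf{b}, \qquad A_{st} := C_{xx}(s-t).
\]
By Parseval $\langle |\widehat{\rm h}|^2\rangle = \|\mathbf{b}\|_{2}^2$ and $\langle S_{xx}\rangle = C_{xx}(0)$, so on the sphere $\rho_{\mathbf{X}\to\mathbf{Y}} = \mathbf{U}^{\top} A\,\mathbf{U}/\langle S_{xx}\rangle$. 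Combining $\E[U_iU_j] = \delta_{ij}/m$ with $\tr(A) = m\,C_{xx}(0)$ immediately yields $\E[\rho_{\mathbf{X}\to\mathbf{Y}}] = 1$, so it remains only to control the fluctuations of $f(\mathbf{U}) := \mathbf{U}^{\top} A\,\mathbf{U}$ around its mean.

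For the concentration step I would invoke a Lévy-type inequality on $S^{m-1}$. The map $f$ has gradient $2A\mathbf{U}$ and is therefore $2\|A\|_{\mathrm{op}}$-Lipschitz, and a classical bound for Toeplitz matrices gives $\|A\|_{\mathrm{op}} \le \max_\nu S_{xx}(\nu)$: indeed $v^{\top} A v = \langle S_{xx}|\widehat v|^2\rangle \le \max_\nu S_{xx}(\nu)\|v\|_{2}^2$. Normalising by $\langle S_{xx}\rangle$ then yields a tail bound of the stated form $|\rho_{\mathbf{X}\to\mathbf{Y}}-1| \le 8\varepsilon\max_\nu S_{xx}(\nu)/\langle S_{xx}\rangle$. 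I expect the main obstacle to be matching the exponent $m^3\varepsilon^2$ in the stated rate, since plain Lévy concentration yields only $\exp(-c\,m\,\varepsilon^2)$; sharpening this likely requires working in the Gaussian representation $\mathbf{U} = \mathbf{Z}/\|\mathbf{Z}\|$ with $\mathbf{Z}\sim\mathcal{N}(0,I_m)$ and applying a Hanson--Wright inequality to the trace-zero matrix $A - C_{xx}(0)I$, together with independent concentration of $\|\mathbf{Z}\|^2$ around $m$, so as to exploit the small Frobenius norm of $A - C_{xx}(0)I$ in addition to its operator norm.
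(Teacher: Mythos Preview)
Your reduction is exactly the paper's: write $\rho_{\mathbf{X}\to\mathbf{Y}}=\mathbf{U}^\top A\,\mathbf{U}/\langle S_{xx}\rangle$ with $A$ the $m\times m$ Toeplitz matrix $A_{st}=C_{xx}(s-t)$, observe $\E[\mathbf{U}^\top A\,\mathbf{U}]=\tau_m(A)=\langle S_{xx}\rangle$, and bound $\|A\|_{\mathrm{op}}\le\max_\nu S_{xx}(\nu)$ (the paper derives this last bound from a Szeg\H{o}-type eigenvalue estimate for Toeplitz matrices rather than from your direct variational argument, but the content is identical). The divergence is only in the concentration step.

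The paper uses neither L\'evy on $S^{m-1}$ nor Hanson--Wright. It represents $\mathbf{U}=O\mathbf{x}_0$ with $O$ Haar-distributed on $SO(m)$ and $\mathbf{x}_0$ the first canonical basis vector, and applies a concentration inequality for Lipschitz functions on $SO(m)$ (Gromov--Milman type, quoted from Guionnet) to the \emph{normalized-trace} function $f(O)=\tau_m\bigl(\mathbf{x}_0\mathbf{x}_0^\top\,O^\top A\,O\bigr)$. Via Cauchy--Schwarz on the Frobenius inner product one shows that $O\mapsto\tau_m(POBO^\top)$ is $\tfrac{2}{m}\|P\|_F\,\|B\|_{\mathrm{op}}$--Lipschitz in the Frobenius metric on $SO(m)$; here $P=\mathbf{x}_0\mathbf{x}_0^\top$ is rank one with $\|P\|_F=1$ (not $\sqrt{m}$), so together with the $1/m$ from the normalized trace the Lipschitz constant is of order $1/m$. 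The $SO(m)$ tail bound combined with the change of scale $\delta\mapsto m\varepsilon$ then yields $|f(O)-\E f|\le 8\|A\|_{\mathrm{op}}\,\varepsilon$ with probability $1-2e^{-m^3\varepsilon^2}$, and the paper passes from there to the stated bound on $\rho_{\mathbf{X}\to\mathbf{Y}}$. So the missing idea in your outline is precisely this lift to $SO(m)$ and the use of the rank-one structure of $\mathbf{x}_0\mathbf{x}_0^\top$ in the Frobenius Lipschitz estimate.

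Your Hanson--Wright route would not reach $m^3$: the off-diagonal Frobenius mass satisfies $\|A-C_{xx}(0)I\|_F^2=\sum_{s\ne t}C_{xx}(s-t)^2$, which is of order $m$ for any non-white input, so the quadratic-form tail it delivers is again $e^{-cm\varepsilon^2}$, i.e.\ no better than plain L\'evy.
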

The exponential term of the probability bound shows that if $m$ is large enough, we can get a tight bound around 1 for the forward SDR with high probability. Such a \textit{concentration of measure} has been provided by \citet{shajarisale2015} (Theorem 1 therein), however the new bound that we prove here guaranties a much faster concentration as number of filter coefficients increases, due to the term in $m^3$ instead $m$ inside the exponential. 
%This provides a justification for using SIC provided the number of filter coefficients is large enough. The effect  of $m$ will be investigated in practice in the experimental section. Importantly, this result is very different from classical statistical estimation settings, since it provides a bound for the SDR estimation using a single sample of the filter coefficients. Such result is key to justify that the SIC principle can be used in practice, since the observation of two experimental time series corresponds to a single instantiation of the filter coefficients and of the input PSD.

\subsection{Forward-backward inequality}
The following result from \citep{shajarisale2015} shows additionally that the dependence measures in both directions are related, as there product can be bounded as a function of the coefficient of variation of $|\widehat{\rm h}_{\X\to\Y}|^2$ along the frequency axis, defined as the ratio of the standard deviation to the mean
\[
\textstyle CV(|\widehat{\rm h}_{\X\to\Y}|^2):=\frac{\left\langle \left(|\widehat{\rm h}_{\textbf{X}\to \textbf{Y}}|^2 -\left\langle |\widehat{\rm h}_{\textbf{X}\to \textbf{Y}}|^2 \right\rangle\right)^2\right\rangle^{1/2}}{\left\langle |\widehat{\rm h}_{\textbf{X}\to \textbf{Y}}|^2 \right\rangle}\,.
\]

\begin{restatable}[{\bf Forward-backward inequality}]{prop}{fbineq}
\label{lem:violation}
For a given linear filter $\mathcal{S}$ with impulse response ${\bf h}_{\textbf{X}\to \textbf{Y}}$, input PSD $S_{xx}$ and output PSD $S_{yy}$, 
%\begin{enumerate}[(i)]
%\item
If $\mathcal{S}$ has a non-constant modulus frequency response  $\widehat{\rm h}_{\textbf{X}\to \textbf{Y}}$ we have
%\begin{equation}
%\label{eq:sicviol}
%\rho_{\textbf{X}\to \textbf{Y}}.\rho_{\textbf{Y}\to \textbf{X}}%=\frac{1}{\left\langle |\widehat{\rm %h}_{\textbf{X}\to \textbf{Y}}|^2 \right\rangle\left\langle 1/|\widehat{\rm h}_{\textbf{X}\to %\textbf{Y}}|^2 \right\rangle}
%<1\,.
%\end{equation}
%\item
assume there exists $\alpha>0$ that that for all $\nu \in \mathcal{I}$, the inequality $|\widehat{\rm h}_{\textbf{X}\to \textbf{Y}}(\nu)|^2\leq (2-\alpha) \|{\bf h}_{\textbf{X}\to \textbf{Y}}\|_2^2\,$ holds. Then
\begin{gather}\label{eq:sicviol2}
\rho_{{\bf X}\to {\bf Y}}.\rho_{{\bf Y}\to {\bf X}}\leq \left[1+\alpha CV\left(|\widehat{\rm h}_{\bX \to \bY}|^2\right)^2 \right]^{-1}<1\,.
\end{gather}
%\end{enumerate}
\end{restatable}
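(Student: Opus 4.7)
The plan is to reduce the product $\rho_{\mathbf{X}\to\mathbf{Y}}\cdot\rho_{\mathbf{Y}\to\mathbf{X}}$ to an expression involving only the gain spectrum $H := |\widehat{\rm h}_{\mathbf{X}\to\mathbf{Y}}|^2$. Under \cref{assum:model} the backward model is well-defined with $\widehat{\rm h}_{\mathbf{Y}\to\mathbf{X}} = 1/\widehat{\rm h}_{\mathbf{X}\to\mathbf{Y}}$, so $|\widehat{\rm h}_{\mathbf{Y}\to\mathbf{X}}|^2 = 1/H$. Plugging this into the SDR formula \eqref{eq:sdr} for each direction, the total-power factors $\langle S_{xx}\rangle$ and $\langle S_{yy}\rangle$ cancel in the product, leaving
\[
\rho_{\mathbf{X}\to\mathbf{Y}}\cdot\rho_{\mathbf{Y}\to\mathbf{X}} \;=\; \frac{1}{\langle H\rangle\,\langle 1/H\rangle}\,.
\]
Writing $\mu := \langle H\rangle = \|\mathbf{h}_{\mathbf{X}\to\mathbf{Y}}\|_2^2$ (Parseval), the claim \eqref{eq:sicviol2} is therefore equivalent to the lower bound $\mu\,\langle 1/H\rangle \geq 1 + \alpha\,CV(H)^2$.

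Next, I would exploit the exact algebraic identity
\[
\frac{1}{H} \;=\; \frac{1}{\mu} \;-\; \frac{H-\mu}{\mu^2} \;+\; \frac{(H-\mu)^2}{\mu^2\,H}\,,
\]
verified by placing the three fractions on the common denominator $\mu^2 H$. Integrating against $\mathcal{I}$ and noting that the linear term vanishes since $\langle H - \mu\rangle = 0$, one obtains
\[
\mu\,\langle 1/H\rangle \;=\; 1 + \frac{1}{\mu}\Bigl\langle \frac{(H-\mu)^2}{H}\Bigr\rangle\,.
\]
This already recovers the AM--HM inequality $\mu\langle 1/H\rangle \geq 1$, with equality iff $H$ is constant; the remaining task is to strengthen the remainder into a coefficient-of-variation term.

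The boundedness hypothesis is precisely what converts the awkward $1/H$ weight into a genuine variance. Indeed, $H(\nu)\leq (2-\alpha)\mu$ gives $1/H \geq 1/((2-\alpha)\mu)$ pointwise, so
\[
\frac{1}{\mu}\Bigl\langle \frac{(H-\mu)^2}{H}\Bigr\rangle \;\geq\; \frac{\langle(H-\mu)^2\rangle}{(2-\alpha)\mu^2} \;=\; \frac{CV(H)^2}{2-\alpha}\,.
\]
Since $(1-\alpha)^2 \geq 0$ rearranges to $\alpha(2-\alpha)\leq 1$, i.e.\ $1/(2-\alpha) \geq \alpha$, one concludes $\mu\langle 1/H\rangle \geq 1 + \alpha\,CV(H)^2$, and inverting gives \eqref{eq:sicviol2}. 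Strictness is then immediate: a non-constant modulus response forces $CV(H)^2 > 0$.

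The main obstacle is conceptual rather than technical, namely noticing that the correct second-order decomposition of $1/H$ around its mean is the identity displayed above; once this is in hand, the hypothesis $H\leq(2-\alpha)\mu$ is exactly what is needed to replace $\langle (H-\mu)^2/H\rangle$ by a true variance, and the rest is arithmetic. As a side observation, the argument actually yields the sharper constant $1/(2-\alpha)$ in place of $\alpha$; the weaker form stated in the proposition is retained for readability.
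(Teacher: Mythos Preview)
Your proof is correct and follows essentially the same route as the paper: reduce the product to $1/(\langle H\rangle\langle 1/H\rangle)$, expand $1/H$ to second order about its mean $\mu$, and use the hypothesis $H\le(2-\alpha)\mu$ to convert the remainder into a $CV(H)^2$ term. The only cosmetic difference is that the paper, working with $s:=H/\mu-1$, invokes the pointwise inequality $-s/(1+s)\ge s^2-s^3-s$ and then $1-s\ge\alpha$, whereas you keep the exact remainder $(H-\mu)^2/(\mu^2 H)$ and bound $1/H\ge 1/((2-\alpha)\mu)$ directly before using $1/(2-\alpha)\ge\alpha$; as you observe, your packaging gives the slightly sharper intermediate constant $1/(2-\alpha)$.
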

%\michel{same result as in original paper}
Note that $\alpha<1$ because a random variable cannot be everywhere smaller than its mean. 
According to  (\ref{eq:sicviol2}), large fluctuations of $|\widehat{\rm h}_{\textbf{X}\to \textbf{Y}}|^2$ around its mean guaranties that the product of the independence measures will be bounded away from $1$. 
Assuming the SIC postulate is satisfied in the forward direction such that $\rho_{\textbf{X}\to \textbf{Y}}=1$, it follows naturally that $\rho_{\textbf{Y}\to \textbf{X}}<1$. 

\begin{figure}
\floatconts{fig:idRes}{\caption{(a) Principle of the generative model. (b) Illustration of the two steps leading to identifiability: the forward SDR concentrates around 1, and the foward-backward inequality bounds the backward SDR away from one. Dashed lines indicate high-probability bounds for the SDR values.}}
{
\subfigure[Generative model.][b]
{
\includegraphics[width=.58\linewidth]{\detokenize{./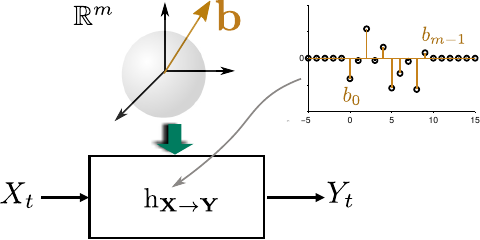}}
\label{fig:CoM}
}
\hfill
\subfigure[Identifiability results.][b]
{
\includegraphics[width=.36\linewidth]{\detokenize{./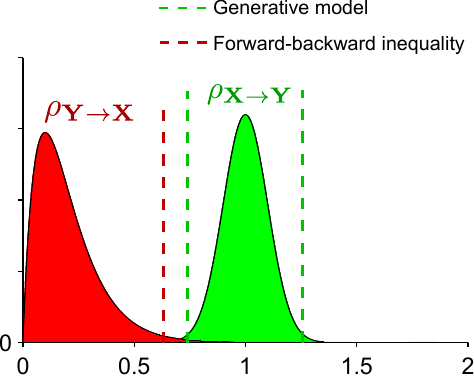}}
\label{fig:SDRillust}
}
}
\end{figure}

\subsection{Inference algorithm}
Based on the above two steps, \citet{shajarisale2015} have introduced a SIC-based algorithm to infer the direction of causation.
First \cref{thm_COM_FIR} guaranties that $\rho_{\textbf{X}\to\textbf{Y}}$ concentrates around one, such that with high probability %$\xi\leq\rho_{\textbf{X}\to\textbf{Y}}\leq 2-\xi$ (the closer is $\xi$ to one, the tighter is the bound) with $\xi<1$.
$1-\xi\leq\rho_{\textbf{X}\to\textbf{Y}}\leq 1+\xi$ (the closer is $\xi$ to zero, the tighter is the bound) with $\xi>0$.  
Second, \cref{lem:violation} implies a lower bound for the backward SDR based on the foward SDR value. Indeed,  starting from  \cref{lem:violation}, for some $\eta>0$, 
$%\[
\rho_{\textbf{Y}\to\textbf{X}}\rho_{\textbf{X}\to\textbf{Y}}<1-\eta \,, 
$%\]
 such that 
\[\rho_{\textbf{Y}\to\textbf{X}}<\frac{1-\eta}{\rho_{\textbf{X}\to\textbf{Y}}^{2}}\rho_{\textbf{X}\to\textbf{Y}}\,.
\]
%$\rho_{\textbf{Y}\to\textbf{X}}<\frac{\eta}{\rho_{\textbf{X}\to\textbf{Y}}^{2}}\rho_{\textbf{X}\to\textbf{Y}}$, \domi{why $\rho^2$ in the denominator?} 
%with $\eta<1$. 

\begin{wrapfigure}{t}{.48\textwidth}
\begin{minipage}{.48\textwidth}
\begin{algorithm}[H]
	\caption{SIC\_Inference}\label{alg:SIC}
	\begin{algorithmic}[1]
		\Procedure{SIC\_Inference({\bf X},{\bf Y})}{}
		%\REQUIRE {Two time series $\{{\bf X}\}$ and $\{{\bf Y}\}$ are given.}
		%\STATE $S_{xx} \gets \text{spectrum of } {\bf X}$
		%\STATE $S_{yy} \gets \text{spectrum of } {\bf Y}$
		\State Calculate $\rho_{{\bf X}\to {\bf Y}}$ and $\rho_{{\bf Y}\to{\bf X}}$ using (\ref{eq:sdr})
		%\STATE \emph{Inference Step}:
		\State \textbf{if} $\rho_{{\bf X}\to{\bf Y}}>\rho_{{\bf Y}\to {\bf X}}$ return ${\bf X}\to {\bf Y}$
		\State \textbf{else} return ${\bf Y}\to {\bf X}$
		\EndProcedure
	\end{algorithmic}
\end{algorithm}
\end{minipage}
\end{wrapfigure}
\noindent Thus, combining both results, we get $\rho_{\textbf{Y}\to\textbf{X}}<\frac{1-\eta}{(1-\xi)^{2}}\rho_{\textbf{X}\to\textbf{Y}}$ and obtain that $\rho_{\textbf{Y}\to\textbf{X}}$ is strictly smaller than $\rho_{\textbf{X}\to\textbf{Y}}$ with if we can have $1-\eta<(1-\xi)^2$.
This qualitative reasoning, illustrated in \cref{fig:SDRillust}, suggest that a good strategy to infer the direction of causation is to choose the direction with largest SDR. The corresponding causal inference algorithm is described in \cref{alg:SIC}.   %Another consequence of this result is that the tighter the bound of \cref{thm_COM_FIR} is (e.g. when the dimension of the vector of impulse response coefficients gets larger), the closer to one the bound of the forward backward inequality can be, allowing a smaller coefficient of variation (and thus frequency response modulus will be closer to one).

\subsection{Identifiability of the generative model}
While the previous justification provides insights about how \cref{alg:SIC}  can identify the direction of causation by bounding forward and backward SDR's, they do not guaranty identifiability of the above generative model using SIC based causal inference, which turns out to be highly non-trivial. On key issue is to bound the forward-backward inequality in this context, which we do as follows.
% for the special case of isotropic Gaussian. 

\begin{lem}\label{lem:asymptId}
Given the generative model from \cref{thm_COM_FIR} with fixed input PSD $S_{xx}$ and assume that the filter coefficients $b_1,b_2,\dots,b_m$ are independently drawn from 
some distribution $P_B$ with $\E[B]=0$ and variance $\E[B^2]=1$ and finite
$\E[|B|^3]$.  Defining the corresponding transfer function as earlier via
\[
\hat{h}_b^m(\nu):=\frac{1}{\sqrt{m}}\sum_{j=1} b_j e^{-i 2\pi \nu j}. 
\]
Then 
%\begin{equation}
%\rho_{X\to Y} \rho_{Y\to X} =\frac{1}{\int |\hat{h}(\nu)|^2 d\nu \int \frac{1}{|\hat{h}(\nu)|^2} d\nu}
%\end{equation}
$\rho_{X\to Y} \rho_{Y\to X}$ converges to zero in probability. 
\end{lem}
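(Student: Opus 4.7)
The proof reduces to a simple cancellation. Substituting $S_{yy}=|\widehat h|^2 S_{xx}$ into \eqref{eq:sdr} in both directions gives
\begin{gather*}
\rho_{\bX\to\bY}=\frac{\langle|\widehat h|^2 S_{xx}\rangle}{\langle S_{xx}\rangle\,\langle|\widehat h|^2\rangle}, \qquad
\rho_{\bY\to\bX}=\frac{\langle S_{xx}\rangle}{\langle|\widehat h|^2 S_{xx}\rangle\,\langle|\widehat h|^{-2}\rangle},
\end{gather*}
whose product collapses to
\[
\rho_{\bX\to\bY}\,\rho_{\bY\to\bX} \;=\; \frac{1}{\langle|\widehat h|^2\rangle\,\langle|\widehat h|^{-2}\rangle}.
\]
Notably, $S_{xx}$ drops out entirely. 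The first factor in the denominator is easy: by Parseval and the normalization of $\widehat h_b^m$, $\langle|\widehat h|^2\rangle=\tfrac{1}{m}\sum_{j=1}^m b_j^2$ tends to $\E[B^2]=1$ in probability by the weak law of large numbers. The heart of the proof is therefore to show that $\langle|\widehat h|^{-2}\rangle$ diverges in probability.

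The intuition is as follows. At any fixed frequency $\nu\in\I\setminus\{0,\pm 1/2\}$, $\widehat h_b^m(\nu)$ is a normalized sum of independent complex terms with zero mean, unit second moment and bounded third moment. The two-dimensional CLT, quantified by Berry-Esseen (this is precisely where the hypothesis $\E|B|^3<\infty$ enters), implies that $|\widehat h_b^m(\nu)|^2$ is close in distribution to an $\mathrm{Exp}(1)$ random variable. The reciprocal of an $\mathrm{Exp}(1)$ variable has infinite expectation; its tail satisfies $\mathbb P(X>T)\sim 1/T$. Combined with the classical asymptotic independence of periodogram ordinates at distinct Fourier frequencies $\nu_k=k/m$, one expects $\sum_k |\widehat h(\nu_k)|^{-2}\sim m\log m$, and hence $\langle|\widehat h|^{-2}\rangle\sim\log m\to+\infty$.

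To make this rigorous I would introduce a truncation level $T=T(m)$ growing slowly to infinity (for instance $T(m)=\log\log m$) and work with the lower bound
\[
\langle|\widehat h|^{-2}\rangle\;\ge\;\int_\I\min(|\widehat h(\nu)|^{-2},T)\,d\nu.
\]
A Berry-Esseen estimate yields $\E[\min(|\widehat h(\nu)|^{-2},T)]\ge c\log T$ uniformly in $\nu$ outside a small neighborhood of $\{0,\pm 1/2\}$ for $m$ sufficiently large, so the mean of the truncated integral grows like $\log T\to\infty$. A Chebyshev-type argument then finishes the job, provided one can bound the variance of the truncated integral by $o((\log T)^2)$. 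Controlling this variance is the main obstacle: it involves a double integral of the joint distribution of $(|\widehat h(\nu)|^2,|\widehat h(\nu')|^2)$, which is only asymptotically decorrelated and remains strongly correlated in the near-diagonal regime $|\nu-\nu'|\le C/m$. A convenient route is to first discretize onto the Fourier grid $\{\nu_k=k/m\}$, where decorrelation holds exactly in the Gaussian limit and Berry-Esseen provides a quantitative two-point bound, and then absorb the discretization error via a Lipschitz estimate on $\min(|\widehat h|^{-2},T)$ outside a small exceptional set.
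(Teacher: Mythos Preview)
Your reduction to $\rho_{\bX\to\bY}\,\rho_{\bY\to\bX}=\bigl(\langle|\widehat h|^2\rangle\,\langle|\widehat h|^{-2}\rangle\bigr)^{-1}$ and the law-of-large-numbers argument for $\langle|\widehat h|^2\rangle\to 1$ coincide with the paper's proof. The treatment of $\langle|\widehat h|^{-2}\rangle$ differs substantially. Rather than a pointwise Berry--Esseen bound at each frequency followed by a truncation-plus-Chebyshev scheme, the paper invokes Theorem~1 of \cite{janzing2017central}: regarding $\nu$ as a uniform random variable on $\mathcal{I}$, the law $Q_b^m$ of $\nu\mapsto\widehat h_b^m(\nu)$ converges, in probability over $b$ and with respect to a Kolmogorov-type pseudometric on $\mathbb{C}$, to that of an isotropic complex Gaussian $Z$. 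This directly controls the level-set measure $Q_b^m\bigl(|\widehat h_b^m|^2\le\epsilon\bigr)$, and the single inequality
\[
\langle|\widehat h|^{-2}\rangle\;\ge\;\frac{1}{\epsilon}\,Q_b^m\bigl(|\widehat h_b^m|^2\le\epsilon\bigr)
\]
is meant to finish. This route bypasses your variance obstacle entirely: no second-moment computation, no discretization onto the Fourier grid, no two-point correlation estimate.

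One caveat is worth flagging. The paper claims $P(|Z|^2\le\epsilon)>c\epsilon$ for arbitrarily large $c$, attributing this to the density of $|Z|^2$ having ``infinite slope at $0$''. But $|Z|^2\sim\mathrm{Exp}(1)$ has density $1$ at the origin, so $P(|Z|^2\le\epsilon)/\epsilon\to 1$, and the displayed bound as stated only yields $\langle|\widehat h|^{-2}\rangle\ge 1-o(1)$, not divergence. Your instinct that one genuinely needs $\E[1/\mathrm{Exp}(1)]=\infty$ (equivalently, the logarithmic growth of the truncated mean) is on target. A clean repair within the paper's framework is to sum the level-set bound over dyadic shells $\{2^{-k-1}<|\widehat h|^2\le 2^{-k}\}$ for $k=0,\dots,K$: each shell contributes a term of order one, the Kolmogorov convergence of $Q_b^m$ is uniform in the threshold, and one obtains $\langle|\widehat h|^{-2}\rangle\ge cK$ with high probability for any fixed $K$ once $m$ is large---still with no variance computation. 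Your Chebyshev route would also work once the discretized second moment is controlled, but is heavier machinery than the problem requires.
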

This lemma allows to show, for the first time, identifiability with high probability when the number of filter coefficients gets large.
\begin{thm}\label{thm:identifdir}
	Given the sampling of coefficients in \cref{lem:asymptId} and the assumptions there, then the probability for 
	$\rho_{{\bf X}\to {\bf Y}}<\rho_{{\bf Y}\to {\bf X}}$ tends to $1$ for $m\to \infty$ (as the dimension $m$ of the filter increases), i.e. the direction of causation is identifiable almost surely.
\end{thm}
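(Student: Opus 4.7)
My plan is to strengthen the product-level estimate of \cref{lem:asymptId} by separately pinning down the asymptotic limit of $\rho_{{\bf X}\to{\bf Y}}$, after which the behavior of $\rho_{{\bf Y}\to{\bf X}}$ follows by division. The lemma on its own only asserts $\rho_{{\bf X}\to{\bf Y}}\rho_{{\bf Y}\to{\bf X}}\xrightarrow{\mathbb{P}}0$, which is compatible with many splittings of that product and so does not by itself fix a strict ordering between the two SDRs.

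First I would show that $\rho_{{\bf X}\to{\bf Y}}\xrightarrow{\mathbb{P}}1$ as $m\to\infty$. Set $\phi_m(\nu):=|\hat h_b^m(\nu)|^2=\frac{1}{m}\sum_{j,k}b_j b_k\,e^{-i2\pi\nu(j-k)}$ and use the DTFT pairing $\langle S_{xx}(\cdot)e^{-i2\pi\tau\cdot}\rangle=C_{xx}(\tau)$ to write
\[
\langle\phi_m\rangle=\tfrac{1}{m}\sum_j b_j^2,\qquad \langle\phi_m S_{xx}\rangle=\tfrac{1}{m}\sum_{j,k}b_j b_k\,C_{xx}(j-k).
\]
The weak LLN yields $\langle\phi_m\rangle\xrightarrow{\mathbb{P}}\mathbb{E}[B^2]=1$. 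The second quantity is a quadratic form in the centered i.i.d. $b_j$ with mean $C_{xx}(0)=\langle S_{xx}\rangle$; its off-diagonal variance is $O(1/m)$ since $C_{xx}\in\ell^2$ (a consequence of $C_{xx}\in\ell^1$ bounded), and a truncation argument handles the diagonal part (see below). Chebyshev then gives $\langle\phi_m S_{xx}\rangle\xrightarrow{\mathbb{P}}\langle S_{xx}\rangle$, and Slutsky concludes $\rho_{{\bf X}\to{\bf Y}}\xrightarrow{\mathbb{P}}1$.

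Combining this with \cref{lem:asymptId} via Slutsky, $\rho_{{\bf Y}\to{\bf X}}=\bigl(\rho_{{\bf X}\to{\bf Y}}\rho_{{\bf Y}\to{\bf X}}\bigr)/\rho_{{\bf X}\to{\bf Y}}\xrightarrow{\mathbb{P}}0$. Consequently, for any $\delta\in(0,1/2)$ the event $\{\rho_{{\bf X}\to{\bf Y}}>1-\delta,\ \rho_{{\bf Y}\to{\bf X}}<\delta\}$ has probability tending to one, and on that event the two SDRs are strictly separated, in the direction that \cref{alg:SIC} uses to return the true causal direction ${\bf X}\to{\bf Y}$. This yields the identifiability statement of the theorem (the strict inequality between the two SDRs holding with probability $\to 1$, with $\rho_{{\bf Y}\to{\bf X}}$ being the strictly smaller one that the algorithm then rejects).

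The main obstacle is controlling the quadratic-form variance under the minimal moment hypothesis $\mathbb{E}|B|^3<\infty$, since the naive Chebyshev computation produces a diagonal contribution proportional to $\mathbb{E}[B^4]$. I would handle this by truncating each $b_j$ at a level $\tau_m\to\infty$ with $m\,\mathbb{P}(|B|>\tau_m)\to 0$ (arranged by Markov applied to $\mathbb{E}|B|^3$, e.g. $\tau_m\sim m^{1/3}$), running the second-moment concentration on the truncated coefficients (whose fourth moment is controlled by $\tau_m\cdot\mathbb{E}|B|^3$), and bounding the replacement error with Markov. Once Step 1 is secured in this way, the remainder is bookkeeping plus the direct application of \cref{lem:asymptId}.
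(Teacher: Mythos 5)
Your argument is correct and reaches the same conclusion as the paper, but it replaces one of the paper's two ingredients with a different one. The paper's proof is a two-line union bound: it quotes \cref{thm_COM_FIR} to get $P(\rho_{{\bf X}\to {\bf Y}}<1-\epsilon)\to 0$, quotes \cref{lem:asymptId} for $\rho_{{\bf X}\to {\bf Y}}\rho_{{\bf Y}\to {\bf X}}\xrightarrow{\mathbb{P}}0$, and bounds $P(\rho_{{\bf Y}\to {\bf X}}\geq \rho_{{\bf X}\to {\bf Y}})$ by $P(\rho_{{\bf X}\to {\bf Y}}\leq 1/2)+P(\rho_{{\bf X}\to {\bf Y}}\rho_{{\bf Y}\to {\bf X}}\geq 1/4)+P(\rho_{{\bf X}\to {\bf Y}}\leq 1/2)$. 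You instead prove the forward concentration $\rho_{{\bf X}\to {\bf Y}}\xrightarrow{\mathbb{P}}1$ from scratch, via the law of large numbers for $\langle|\hat h_b^m|^2\rangle=\frac{1}{m}\sum_j b_j^2$ and a Chebyshev/truncation bound on the quadratic form $\frac{1}{m}\sum_{j,k}b_jb_kC_{xx}(j-k)$, then divide. This is more work, but it buys something real: \cref{thm_COM_FIR} assumes a \emph{spherically symmetric} coefficient vector, whereas \cref{lem:asymptId} samples the $b_j$ i.i.d.\ from a general $P_B$ with $\E[B^2]=1$ and $\E|B|^3<\infty$; these coincide only in the Gaussian case, so the paper's direct invocation of \cref{thm_COM_FIR} glosses over a model mismatch that your self-contained LLN argument avoids. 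Two small points of care: (i) your off-diagonal variance bound needs $\sum_\tau C_{xx}(\tau)^2<\infty$, which indeed follows from $C_{xx}\in\ell^1(\mathbb{Z})$ and boundedness; (ii) with plain Markov the choice $\tau_m\sim m^{1/3}$ only gives $m\,P(|B|>\tau_m)=O(1)$ rather than $o(1)$ -- you should either take $\tau_m$ growing strictly faster than $m^{1/3}$, or invoke the sharper fact that $\E|B|^3<\infty$ implies $t^3P(|B|>t)\to 0$. Finally, note that both your proof and the paper's establish $P(\rho_{{\bf X}\to {\bf Y}}>\rho_{{\bf Y}\to {\bf X}})\to 1$, i.e.\ the inequality as literally printed in the theorem statement appears to have its direction reversed; your explicit remark that $\rho_{{\bf Y}\to {\bf X}}$ is the strictly smaller one is the intended reading.
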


\section{Robustness to downsampling \label{sec:downsamp}}

The issue of robustness of causal inference methods to resampling
issues has been raised in several papers \cite{gong2015discovering,palm1984missing,harvey1990forecasting}.
In particular, Granger causality is known to have issues with downsampling, and specific correction procedures are required \cite{gong2015discovering} to infer causal relations.
Here we provide a theoretical result supporting that SIC causal inference
is robust to the classical decimation procedure performed in discrete
signal processing using an ideal anti-aliasing low pass filter \cite{crochiere1981interpolation}.

\subsection{Decimation procedure}
\begin{figure}
\centering
\includegraphics[width=.8\textwidth]{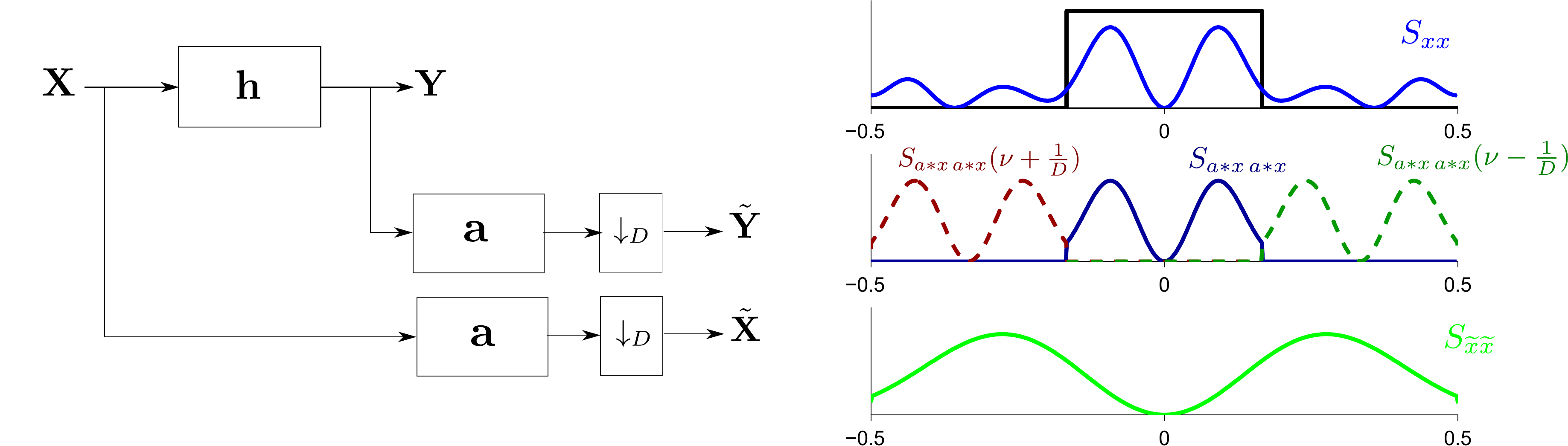}\caption{Left: Schema of the decimation procedure. Right: Illustration of how
the decimation and ideal anti-aliasing filter affects the input PSD.\label{fig:decimation}}
\end{figure}
Our decimation setting is provided in \cref{fig:decimation},
showing that both the cause and the effect time series are low-pass
filtered with an ideal anti-aliasing filter and then decimated by
a factor $D$. The frequency response of the ideal anti-aliasing filter satisfies
\[
|\hat{{\bf a}}(\nu)|=\mathbbm{1}_{|f|<1/(2D)},\,|\nu|\leq1/2,
\]
and the decimation blocks converts a signal $\textbf{s}=\{s_{k}\}_{k\in \mathbb{Z}}$ into $\tilde{\textbf{s}}=\downarrow_{D}\textbf{s}=\{s_{kD}\}_{k\in \mathbb{Z}}$ (i.e. by picking the value of one time point of every D points). %\domi{introduce $f$ and $k$} 
Using classical decimation results \cite{crochiere1981interpolation}, %\domi{you may want to cite the specific theorem / lemma rather than the whole paper?} 
we can derive that
\[
 S_{\tilde{x}\tilde{x}}(\nu)=1/D\sum_{k=-\infty}^{\infty}S_{a*x\,a*x}\left(\frac{\nu}{D}-\frac{k}{D}\right)=1/D\sum_{k=-\infty}^{\infty}S_{xx}\left(\frac{\nu}{D}-\frac{k}{D}\right)\left|\widehat{{\bf a}}\left(\frac{\nu}{D}-\frac{k}{D}\right)\right|^{2}
\]
Noticing the absence of overlap (see illustration Fig.~\ref{fig:decimation}
between the support of each term of the above periodic summation),
we get
\[
S_{\tilde{x}\tilde{x}}(\nu)=\frac{1}{D}S_{xx}\left(\frac{\nu}{D}\right)\,\text{ and }\,S_{\tilde{y}\tilde{y}}(\nu)=\frac{1}{D}S_{yy}\left(\frac{\nu}{D}\right)\,,\,\nu\in[-1/2,\,1/2].
\]
%In a similar fashion
%\[
%S_{\tilde{y}\tilde{y}}(\nu)=\frac{1}{D}S_{yy}\left(\frac{\nu}{D}\right),\,\nu\in[-1/2,\,1/2].
%\]
We will now investigate whether the true causal direction $\X\to\Y$
can be inferred from the decimated observations only.

\subsection{Identifiability of the decimated model}

Although the decimation certainly involves a loss of information of
the measured time series, the following theorem suggests that the SIC can
be preserved by such an operation.

\begin{restatable}{thm}{downsample}
\label{thm:subsampling}
Assume the forward generative model of the previous section, and $(\widetilde{{\bf X}},\widetilde{{\bf Y}})$
the time series resulting from the decimation of this model by an
integer factor D using an ideal anti-aliasing filter. Then for any given $\epsilon$ with $0<\epsilon<1/2$ we have
\[
|\rho_{\widetilde{{\bf X}}\rightarrow\widetilde{{\bf Y}}}-1|\leq\epsilon\left(K+\left(1+\epsilon K\right)\frac{2}{1-2\epsilon}\right),
\]
with probability $\delta:=(1-\exp(\kappa(m-1)\epsilon^{2}))^{2}$, where $\kappa$ is a positive global constant (independent of $m$, and $\epsilon$)
and $K=\frac{\max_{|\nu|<1/2D}S_{xx}(\nu)}{\int_{0}^{1/2D}S_{xx}(\nu)d\nu}$.
\end{restatable}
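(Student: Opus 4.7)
The plan is to recast the decimated SDR as a band-restricted version of the original SDR, and then transfer the concentration argument of \cref{thm_COM_FIR} to the reduced band.

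Step 1 is purely algebraic. Combining the decimation identities $S_{\widetilde{x}\widetilde{x}}(\nu)=\tfrac{1}{D}S_{xx}(\nu/D)$ and $S_{\widetilde{y}\widetilde{y}}(\nu)=\tfrac{1}{D}S_{yy}(\nu/D)$ with the change of variable $\mu=\nu/D$ in each of the three integrals that define $\rho_{\widetilde{X}\to\widetilde{Y}}$, and collecting the $D$ factors, one obtains
\[
\rho_{\widetilde{X}\to\widetilde{Y}}=\frac{\widetilde{\langle S_{xx}|\widehat{\rm h}_{{\bf X}\to{\bf Y}}|^{2}\rangle}_{D}}{\widetilde{\langle S_{xx}\rangle}_{D}\;\widetilde{\langle|\widehat{\rm h}_{{\bf X}\to{\bf Y}}|^{2}\rangle}_{D}},
\]
where $\widetilde{\langle f\rangle}_{D}:=D\int_{-1/(2D)}^{1/(2D)}f(\mu)\,d\mu$ is the average of $f$ under the normalized uniform measure on the reduced band. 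The decimated SDR is therefore structurally identical to the original SDR of \cref{eq:sdr} with integrations restricted to a subinterval of $\mathcal{I}$. Crucially, the impulse-response coefficients are not affected by decimation, so the spherically symmetric prior of \cref{thm_COM_FIR} continues to apply to the same random vector of dimension $m$.

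Step 2 is the concentration argument. I would rewrite $\rho_{\widetilde{X}\to\widetilde{Y}}-1$ as a normalized covariance between $S_{xx}$ and $|\widehat{\rm h}|^{2}$ under the reduced-band measure and bound it by a Cauchy--Schwarz-type estimate that factors out the sup-norm of $S_{xx}$ on the band: this is exactly where the constant $K$ enters, as the ratio of the sup-norm of $S_{xx}$ to its integral on the half-band. The two remaining quadratic functionals of the filter coefficients, namely $b\mapsto\widetilde{\langle|\widehat{\rm h}|^{2}\rangle}_{D}$ and $b\mapsto\widetilde{\langle S_{xx}|\widehat{\rm h}|^{2}\rangle}_{D}$, are Lipschitz on the unit sphere and can each be concentrated about their mean by L\'evy's inequality, yielding an exponential deviation bound of order $\kappa(m-1)\varepsilon^{2}$. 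The square in the stated probability $\delta$ then accounts for requiring both concentration events to hold jointly, via a union bound applied on their complements.

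Step 3 assembles the pieces algebraically. Denoting the relative deviations of the numerator and denominator by $\delta_{N}$ (bounded in magnitude by $\varepsilon K$, because the $S_{xx}$-weighting inflates the Lipschitz bound by the sup-norm of $S_{xx}$) and $\delta_{M}$ (bounded by $2\varepsilon$), the identity $\tfrac{1+\delta_{N}}{1+\delta_{M}}-1=\tfrac{\delta_{N}-\delta_{M}}{1+\delta_{M}}$ together with the lower bound $1+\delta_{M}\geq 1-2\varepsilon$ produces exactly the stated expression $\varepsilon\bigl(K+(1+\varepsilon K)\tfrac{2}{1-2\varepsilon}\bigr)$. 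The main technical obstacle lies in Step 2: restricting the integration to a subinterval of $\mathcal{I}$ replaces the effectively flat full-band averaging by a band-limited kernel whose Lipschitz constant on the sphere must be estimated uniformly, ideally independently of $D$ and of the fine structure of $S_{xx}$ on the reduced band. This careful bookkeeping of Lipschitz constants is also what causes the $m^{3}$ rate in \cref{thm_COM_FIR} to degrade to the $(m-1)$ rate obtained here, since the reduced band offers less averaging across frequencies than the full interval does.
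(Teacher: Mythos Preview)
Your Step~1 and the overall architecture (two concentration events, hence the squared probability, followed by an elementary ratio/product bound) match the paper. The factorization, however, is different: rather than concentrating the numerator $\widetilde{\langle S_{xx}|\widehat{\rm h}|^2\rangle}_D$ and the denominator term $\widetilde{\langle|\widehat{\rm h}|^2\rangle}_D$ separately as you propose, the paper writes
\[
\rho_{\widetilde{{\bf X}}\to\widetilde{{\bf Y}}}=\rho_{a*{\bf X}\to a*{\bf Y}}\cdot\frac{\|{\bf b}\|^{2}}{2D\int_{0}^{1/(2D)}|\widehat{\rm h}(\nu)|^{2}\,d\nu},
\]
where $\rho_{a*{\bf X}\to a*{\bf Y}}$ is the ordinary (full-band) SDR of the anti-alias-filtered pair. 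The advantage is that the first factor falls directly under the \emph{existing} concentration theorem from \citet{shajarisale2015} (Theorem~1 there), yielding the $\kappa(m-1)\varepsilon^{2}$ exponent without any new Lipschitz bookkeeping. For the second factor the paper does something concrete that your sketch glosses over: it approximates $\int_{0}^{1/(2D)}|\widehat{\rm h}|^{2}$ by a Riemann sum, rewrites that sum as $b^{\top}U^{\top}F_{M,m}^{H}L_{M,D}F_{M,m}Ub$ via truncated DFT matrices, applies the Janzing-type concentration to this finite quadratic form, and then sends the grid size $M\to\infty$. Your direct route (two Lipschitz quadratic forms on the sphere plus L\'evy) is certainly viable and arguably cleaner, but you would still need to exhibit the matrix $A$ realizing the band-restricted integral and bound its operator norm, which is essentially the same work the paper does via the DFT construction.

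Two small corrections. First, your explanation of why the rate degrades from $m^{3}$ to $(m-1)$ is not the reason in the paper: the $(m-1)$ exponent arises simply because the proof invokes the older concentration result of \citet{shajarisale2015} rather than \cref{thm_COM_FIR}; there is no claim that band restriction intrinsically limits the rate. Second, the probability $(1-e^{-\kappa(m-1)\varepsilon^{2}})^{2}$ is not obtained from a union bound on complements (which would give $1-2e^{-\kappa(m-1)\varepsilon^{2}}$); the paper states the squared form directly as the joint probability of the two events.
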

%The proof relies on the principle that computing the SDR of the decimated model corresponds to reducing the integration interval to a smaller portion of the frequency domain. 
As a consequence of \cref{thm:subsampling}, even after decimating the signal, the concentration of the the forward SDR around one is guarantied for high dimensional impulse responses. Note also that the forward-backward inequality also holds for decimated data, such that the overall identifiability
properties are preserved. However, as intuitively expected, identifiability
for an increasing decimation factor $D$ will progressively deteriorate,
as the decimation procedure stretches the input and output PSDs from
the interval $[0,\,1/2]$ to $[0,\,1/(2D)]$. 
The estimated frequency response is then also stretched, and
its coefficient of variation $CV(|\widehat{{\bf h}}_{\tilde{\X}\to\tilde{\Y}}|^{2})$ converges
to zero as $D$ increases, provided the frequency response respects some smoothness assumptions (e.g. bounded derivative), deteriorating the bound of the forward-backward inequality and thus making the identification of the true causal direction harder. Qualitatively, indentifiability properties are preserved as long as $S_{xx}$ has enough variance on the interval $[0,\,1/(2D)]$.

\section{Extension of SIC through invariance principles}\label{sec:invar}
While a coherent theoretical framework has been presented in the above sections, whether spectral independence is a valid assumption in a given empirical setting remains to be addressed. Notably, one can question the choice of white noise regular distributions introduced in \cref{sec:igci}. Indeed, the departure from this reference is exploited to quantify orthogonality, and as such, implicitly reflect an assumption about the considered problem. 
To shed light on this assumption, the group invariance perspective on ICM is helpful \cite{besserve2018group}, and we develop it for the case of SIC in Appendix~\ref{app:groupinv}. This view justifies the concept of whitening as a way to adapt the SIC framework to datasets where a different set of regular distributions is more appropriate. To ease readability, we justify this whitening in main text based on the following informal arguments. 
The IGCI perspective of Sec.~\ref{sec:igci} shows that,  
by application of \cref{cor:ortho}, causes that belong to the set of regular distributions trivially satisfy SIC for any choice of mechanism (i.e. filter).  These regular distributions satisfy an invariance property:   PSDs are invariant to translations along the frequency axis. It is thus natural to consider that SIC is suited to applications where signals which such invariance are uninformative. In contrast, it not suited to cases in which such invariance is atypical, as the irregularities quantified by SIC will be blind to this information. 

%\subsection{Whitening transform}
%\domi{for sake of readability I would suggest the following structure: just describe the whitening transform. It can be easily justified via an uncorrelatedness with respect to a different measure. 
%This can be nicely motivated by the general power law. This is easy to understand. Then you explain the alternative justification via group actions, which is much harder, but interesting. It could also be in the appendix} 

\subsection{SIC for power law biological signals}
Many biological signals exhibit power law-like PSDs, which decay proportionally to a positive power of the frequency. This is particularly the case for brain electrical activity \cite{he2010temporal}. Translation invariance PSDs appear clearly as atypical in such context. However, this can be fixed by preprocessing the recorded signals using an invertible \textit{whitening filter} ${\bf w}$ such that its squared frequency response $|\widehat{\bf w} (\nu)|^2 $ applies an amplifying factor to each frequency that corrects the tendency of the observed signal to have lower power in high frequencies. This procedure generates a proper group invariance framework as introduced by \citet{besserve2018group}, and corresponds to replacing white noises in Sec.~\ref{ssec:orthoirreg} by regular distributions with PSDs of the form
$$
\textstyle{S_{w}=\frac{\gamma}{|\widehat{\bf w} (\nu)|^2},\,\gamma>0}.
$$
which now are representative/typical of the power-law behavior of the considered signals. SIC can then simply by applied to the preprocessed signal in order to assess ICM in a way that is adapted to electrical brain activity.
%This shows that the intended purpose of generating a frequency dependent invariant measure is achieved. The associated contrast for this new CMU is $C'(y)=\int |\widehat{\bf w} (\nu)|^2 S_{yy} d\nu $ and the associated independence condition of (\ref{eq:sic}) becomes
%$$
%\textstyle{\int |\widehat{\bf w} (\nu)|^2 S_{yy}(\nu) d\nu=\int  |\widehat{h} (\nu)|^2 d\nu \int %|\widehat{\bf w} (\nu)|^2 S_{xx}(\nu) d\nu\, .}
%$$
\subsection{Experiments}
 %\vspace*{-1cm}
\begin{figure}
\floatconts{fig:hpexp}{\caption{SIC analysis of hippocampal neural data. (a) PSD before whitening. (b) PSD after whitening. (c) SDR before whitening. (d) SDR after whitening. }}
{
\subfigure[][b]
{
\includegraphics[height=2.4cm]{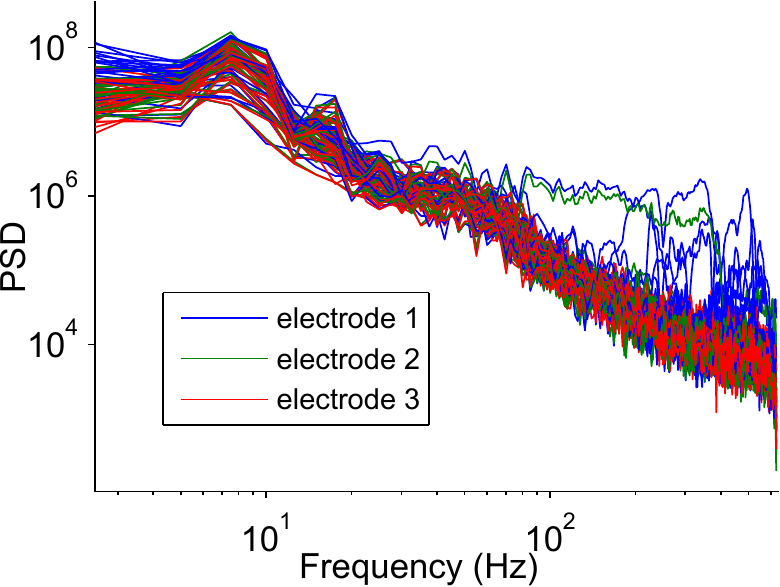}
{\label{fig:spectralLFP}}
}
\subfigure[][b]
{
\includegraphics[height=2.4cm]{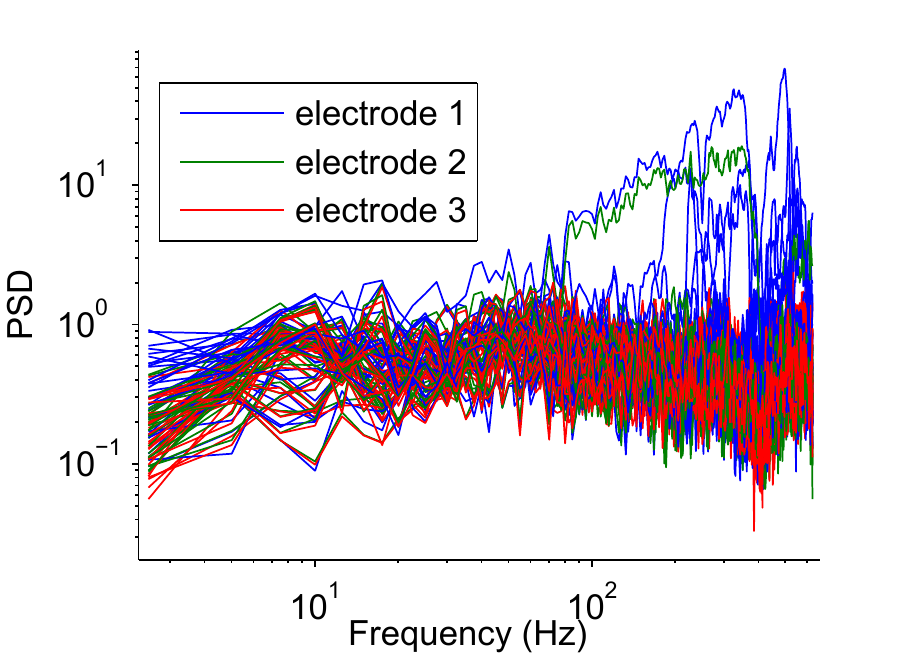}
{\label{fig:psd_white}}
}
%\begin{subfigure}[b]{.35\textwidth}
%\includegraphics[height=3.3cm]{CA3spectra_white.pdf}
%\subcaption{PSD examples after whitening.\label{fig:psd_white}}
%\end{subfigure}
%\\ \vspace*{.5cm}
\subfigure[][b]
{
\includegraphics[height=2.4cm]{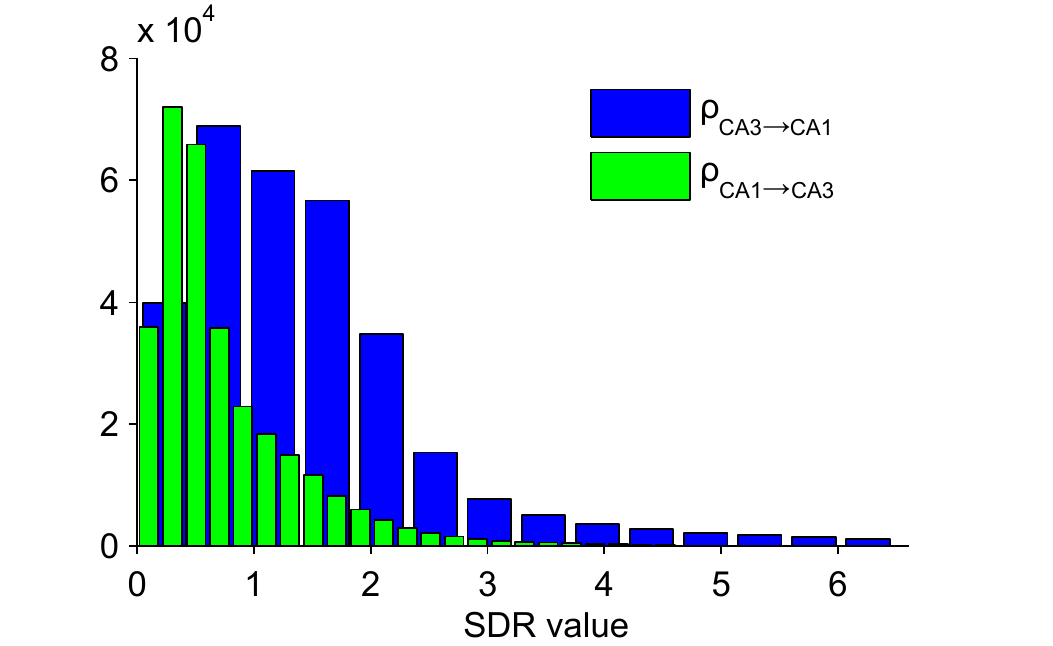}
{\label{fig:sdr_before}}
}
\subfigure[][b]
{
\includegraphics[height=2.4cm]{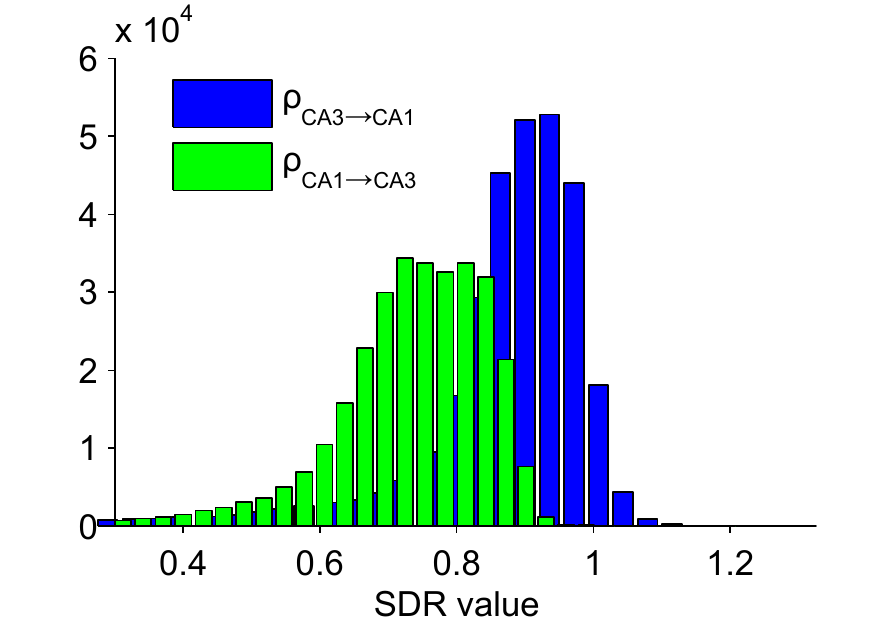}
{\label{fig:sdr_after}}
}
%\begin{subfigure}[b]{.3\textwidth}
%\includegraphics[height=3cm]{SDR_hist_before.pdf}
%\subcaption{SDR before whitening.\label{fig:sdr_before}}
%\end{subfigure}
%\begin{subfigure}[b]{.3\textwidth}
%\includegraphics[height=3.3cm]{SDR_hist_after.pdf}
%\subcaption{SDR after whitening.\label{fig:sdr_after}}
%\end{subfigure}
%\begin{subfigure}[b]{.3\textwidth}
%\includegraphics[height=3.3cm]{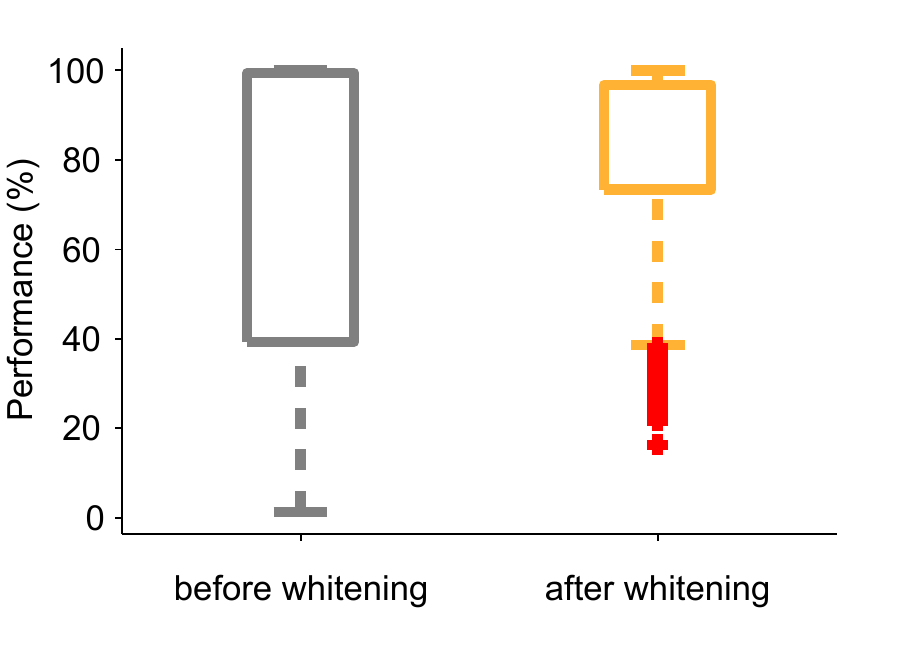}
%\subcaption{SIC performance.\label{}}
%\end{subfigure}
}
%\end{minipage)
\end{figure}
To test this approach, we use the neural recording dataset studied in \cite{shajarisale2015}. In short, these are recordings of Local Field Potentials (LFP) from two subfields of the rat hippocampus, CA3 and CA1, know to have a clear directed monosynaptic connections CA3$\rightarrow$ CA1 that we thus consider as the ground truth causal relationship. %The authors in \cite{shajarisale2015} have used the SIC approach to infer causal relationships. %To do so, they introduce a Spectral Density Ratio (SDR), which corresponds in our framework to the genericity ratio of (\ref{eq:ratio}), and should thus be close to 1 for the true causal direction. Due to identifiability results, the direction of causation leading to the largest SDR value was be chosen as the inferred causal direction. 
Figure~\ref{fig:spectralLFP} depicts the PSD values for recordings from several electrodes in CA3 computed using Welch's method. The empirical distribution of PSD values form this sample do not follow a (frequency)-translation invariant distribution, as the higher frequencies have a much lower power (note the logarithmic scale of the plot). To correct this property, we apply the above described whitening transformation, by choosing the inverse of the empirical average PSD over the full dataset (including both recordings form CA3 and CA1. %to avoid the whitening to bias the causal inference due to overfitting of CA3 spectra, however results are comparable when using only CA3 recordings). 
Examples of the resulting whitened PSDs are shown on Fig.~\ref{fig:psd_white}, where we can observe PSD imbalance across frequencies is largely attenuated, suggesting the distribution of whitened PSDs is closer to being frequency translation invariant. To apply the SIC causal inference approach, we estimated the SDR in both causal (CA3$\rightarrow$CA1) and anti-causal (CA1$\rightarrow$CA3) directions. While the SDR distribution of original data in the causal direction is widespread (Fig.~\ref{fig:sdr_before}), it becomes much more concentrated around 1 after whitening  (Fig.~\ref{fig:sdr_after}), suggesting that the SIC assumption is ``on average valid'' after whitening the signals, in line with eq.~(\ref{eq:ratio}) in Appendix~\ref{app:groupinv}. The SDR distribution in the anticausal direction also becomes bounded to values strictly inferior to 1, as predicted by identifiability results \cite{shajarisale2015}. Finally, the performance of causal inference using SIC before and after whitening is compared. The distribution of average performance results across time for all pairs of electrodes is represented by box plots on Supplemental Fig.~\ref{fig:boxplot}, showing that after whitening the performance is more consistent across electrodes than before: lower tale of the distribution spreads much less towards the low performance values, and average increases significantly from 69.0\% to 82.8\% after whitening (significant paired signed rank test, $p<10^{-25}$). 

%Overall, this results suggest that the group theoretic interpretation is useful to check whether empirical data satisfy the invariance properties implicitly assumed by a causal inference algorithm, and offer a way to adapt and improve the performance of this algorithm accordingly.

%To summarize, we have shown that a group theoretic framework could encompass many ICM based causal discovery approaches proposed in the literature. Moreover, we provide a connection between this view and the approaches relying on linear additive noise models. Finally, we showed this framework can be used to design new ICM based approaches taking into account important properties of empirical data.

%\input{experimentsv2}
%\vspace*{-.6cm}
\section{Discussion}
%\vspace*{-.3cm}
We investigated theoretical foundations of the SIC postulate. % for inferring cause-effect pairs of time-series.
The information geometric view shows the SIC corresponds to an orthogonality of irregularities relative to a set of regular distributions of Gaussian white noises. %, between irregularities of the cause and of the mechanisms, understood as deviation from references that are constant in the frequency domain. 
We further show that SIC allows identifying the true causal direction in a toy setting, and that this criterion is robust to downsampling. Finally, we show the choice of regular distribution can be adapted to the application domain. % SIC can be adapted to domains where appropriate references to measure irregularities are not constant in the frequency domain. 
This set of results clarifies the conditions under which SIC appropriately infers causality based on empirical data. Extending the framework to systems with noise, non-linearities, hidden confounders and multiple dimensions are key next steps to establish this methodology as standard tool for time series analysis.  

\section*{Acknowledgments}
This work was partially supported by the German Federal Ministry of Education and Research (BMBF): T\"ubingen AI Center, FKZ: 01IS18039B; and by the Machine Learning Cluster of Excellence, EXC number 2064/1 - Project number 390727645.

\bibliography{Bibliography,causalityv2}

\begin{thebibliography}{36}
\providecommand{\natexlab}[1]{#1}
\providecommand{\url}[1]{\texttt{#1}}
\expandafter\ifx\csname urlstyle\endcsname\relax
  \providecommand{\doi}[1]{doi: #1}\else
  \providecommand{\doi}{doi: \begingroup \urlstyle{rm}\Url}\fi

\bibitem[Amari and Nagaoka(2007)]{amari2007methods}
S.~Amari and H.~Nagaoka.
\newblock \emph{Methods of information geometry}, volume 191.
\newblock American Mathematical Soc., 2007.

\bibitem[Besserve et~al.(2018)Besserve, Shajarisales, Sch{\"o}lkopf, and
  Janzing]{besserve2018group}
M.~Besserve, N.~Shajarisales, B.~Sch{\"o}lkopf, and D.~Janzing.
\newblock Group invariance principles for causal generative models.
\newblock In \emph{International Conference on Artificial Intelligence and
  Statistics}, pages 557--565. PMLR, 2018.

\bibitem[Besserve et~al.(2021)Besserve, Sun, Janzing, and
  Sch{\"o}lkopf]{besserve2021theory}
M.~Besserve, R.~Sun, D.~Janzing, and B.~Sch{\"o}lkopf.
\newblock A theory of independent mechanisms for extrapolation in generative
  models.
\newblock In \emph{Proceedings of the Thirty-Fifth AAAI Conference on
  Artificial Intelligence}, 2021.

\bibitem[Brockwell and Davis(2009)]{brockwell2009time}
P.~J. Brockwell and R.~A. Davis.
\newblock \emph{Time Series: Theory and Methods}.
\newblock Springer Science \& Business Media, 2009.

\bibitem[Bryc(1995)]{bryc_1995-z5}
W.~Bryc.
\newblock \emph{The Normal Distribution}, volume 100 of \emph{Lecture Notes in
  Statistics}.
\newblock Springer New York, New York, NY, 1995.
\newblock ISBN 978-0-387-97990-8.
\newblock \doi{10.1007/978-1-4612-2560-7}.
\newblock URL \url{http://link.springer.com/10.1007/978-1-4612-2560-7}.

\bibitem[Crochiere and Rabiner(1981)]{crochiere1981interpolation}
R.E. Crochiere and L.R. Rabiner.
\newblock Interpolation and decimation of digital signals -- a tutorial review.
\newblock \emph{Proceedings of the IEEE}, 69\penalty0 (3):\penalty0 300--331,
  1981.

\bibitem[Daniusis et~al.(2010)Daniusis, Janzing, Mooij, Zscheischler, Steudel,
  Zhang, and Sch{\"o}lkopf]{daniusis2010inferring}
P.~Daniusis, D.~Janzing, J.~Mooij, J.~Zscheischler, B.~Steudel, K.~Zhang, and
  B.~Sch{\"o}lkopf.
\newblock Inferring deterministic causal relations.
\newblock In P.~Gr\"unwald and P.~Spirtes, editors, \emph{Proceedings of the
  26th Conference on Uncertainty in Artificial Intelligence (UAI 2010).}, pages
  143--150, 2010.

\bibitem[Geweke(1982)]{geweke1982measurement}
J.~Geweke.
\newblock Measurement of linear dependence and feedback between multiple time
  series.
\newblock \emph{Journal of the American statistical association}, 77\penalty0
  (378):\penalty0 304--313, 1982.

\bibitem[Gong et~al.(2015)Gong, Zhang, Sch\"olkopf, Tao, and
  Geiger]{gong2015discovering}
M.~Gong, K.~Zhang, B.~Sch\"olkopf, D.~Tao, and P.~Geiger.
\newblock Discovering temporal causal relations from subsampled data.
\newblock In \emph{Proceedings of the 32nd International Conference on Machine
  Learning (ICML-15)}, pages 1898--1906, 2015.

\bibitem[Granger(1969)]{granger1969investigating}
C.~W.~J. Granger.
\newblock Investigating causal relations by econometric models and
  cross-spectral methods.
\newblock \emph{Econometrica: Journal of the Econometric Society}, pages
  424--438, 1969.

\bibitem[Gray(2006)]{gray2006toeplitz}
R.~M. Gray.
\newblock \emph{Toeplitz and Circulant Matrices: A Review}.
\newblock Now Publishers Inc., 2006.

\bibitem[Guionnet(2009)]{guionnet2009large}
A.~Guionnet.
\newblock \emph{Large Random Matrices: Lectures on Macroscopic Asymptotics:
  {\'E}cole d'{\'E}t{\'e} de Probabilit{\'e}s de Saint-Flour XXXVI--2006}.
\newblock Springer, 2009.

\bibitem[Harvey(1990)]{harvey1990forecasting}
A.C. Harvey.
\newblock \emph{Forecasting, structural time series models and the Kalman
  filter}.
\newblock Cambridge university press, 1990.

\bibitem[He et~al.(2010)He, Zempel, Snyder, and Raichle]{he2010temporal}
B.J. He, J.M. Zempel, A.Z. Snyder, and M.E Raichle.
\newblock The temporal structures and functional significance of scale-free
  brain activity.
\newblock \emph{Neuron}, 66\penalty0 (3):\penalty0 353--369, 2010.

\bibitem[Ihara(1993)]{ihara1993information}
S.~Ihara.
\newblock \emph{Information theory for continuous systems}, volume~2.
\newblock World Scientific, 1993.

\bibitem[Janzing and Sch\"{o}lkopf(2010)]{janzing2010causal}
D.~Janzing and B.~Sch\"{o}lkopf.
\newblock {Causal inference using the algorithmic Markov condition}.
\newblock \emph{IEEE Transactions on Information Theory}, 56\penalty0
  (10):\penalty0 5168--5194, 2010.

\bibitem[Janzing et~al.(2010{\natexlab{a}})Janzing, Hoyer, and
  Sch{\"o}lkopf]{JanzingHS2010}
D.~Janzing, P.~O. Hoyer, and B.~Sch{\"o}lkopf.
\newblock Telling cause from effect based on high-dimensional observations.
\newblock In \emph{Proceedings of the 27th International Conference on Machine
  Learning (ICML 2010)}, pages 479--486, 2010{\natexlab{a}}.

\bibitem[Janzing et~al.(2010{\natexlab{b}})Janzing, Hoyer, and
  Sch{\"o}lkopf]{icml2010_062}
D.~Janzing, P.O. Hoyer, and B.~Sch{\"o}lkopf.
\newblock Telling cause from effect based on high-dimensional observations.
\newblock In Johannes F{\"u}rnkranz and Thorsten Joachims, editors,
  \emph{Proceedings of the 27th International Conference on Machine Learning
  (ICML-10)}, pages 479--486, Haifa, Israel, June 2010{\natexlab{b}}.
  Omnipress.
\newblock URL \url{http://www.icml2010.org/papers/576.pdf}.

\bibitem[Janzing et~al.(2012)Janzing, Mooij, Zhang, Lemeire, Zscheischler,
  Daniu{\v{s}}is, Steudel, and Sch{\"o}lkopf]{janzing2012information}
D.~Janzing, J.~Mooij, K.~Zhang, J.~Lemeire, J.~Zscheischler, P.~Daniu{\v{s}}is,
  B.~Steudel, and B.~Sch{\"o}lkopf.
\newblock Information-geometric approach to inferring causal directions.
\newblock \emph{Artificial Intelligence}, 182:\penalty0 1--31, 2012.

\bibitem[Janzing et~al.(2017)Janzing, Shajarisales, and
  Besserve]{janzing2017central}
D.~Janzing, N.~Shajarisales, and M.~Besserve.
\newblock A central limit like theorem for {Fourier} sums.
\newblock \emph{arXiv preprint arXiv:1707.06819}, 2017.

\bibitem[Lemeire and Janzing(2012)]{LemeireJ2012}
J.~Lemeire and D.~Janzing.
\newblock Replacing causal faithfulness with algorithmic independence of
  conditionals.
\newblock \emph{Minds and Machines}, pages 1--23, 7 2012.

\bibitem[Lopez-Paz et~al.(2015)Lopez-Paz, Muandet, Sch{\"o}lkopf, and
  Tolstikhin]{lopez2015towards}
D.~Lopez-Paz, K.~Muandet, B.~Sch{\"o}lkopf, and I.~Tolstikhin.
\newblock Towards a learning theory of cause-effect inference.
\newblock In \emph{International Conference on Machine Learning}, pages
  1452--1461, 2015.

\bibitem[Mastakouri et~al.(2021)Mastakouri, Schölkopf, and
  Janzing]{Mastakouri2021}
A.~Mastakouri, B.~Schölkopf, and D.~Janzing.
\newblock Necessary and sufficient conditions for causal feature selection in
  time series with latent common causes.
\newblock arXiv:2005.08543, to appear at ICML 2021, 2021.

\bibitem[Palm and Nijman(1984)]{palm1984missing}
F.C. Palm and Th.E. Nijman.
\newblock Missing observations in the dynamic regression model.
\newblock \emph{Econometrica: journal of the Econometric Society}, pages
  1415--1435, 1984.

\bibitem[Pearl(2000)]{Pearl2000causality}
J~Pearl.
\newblock \emph{Causality: Models, Reasoning and Inference}.
\newblock Cambridge Univ Press, 2000.

\bibitem[Peters et~al.(2013)Peters, Janzing, and
  Sch{\"o}lkopf]{peters2012causal}
J.~Peters, D.~Janzing, and B.~Sch{\"o}lkopf.
\newblock Causal inference on time series using restricted structural equation
  models.
\newblock In \emph{Advances in Neural Information Processing Systems 25 (NIPS
  2012)}, pages 154--162, 2013.

\bibitem[Proakis(2001)]{proakis2001digital}
J.~G. Proakis.
\newblock \emph{Digital Signal Processing: Principles Algorithms and
  Applications}.
\newblock Pearson Education India, 2001.

\bibitem[Ramirez-Villegas et~al.(2021)Ramirez-Villegas, Besserve, Murayama,
  Evrard, Oeltermann, and Logothetis]{ramirez2021coupling}
J.F. Ramirez-Villegas, M.~Besserve, Y.~Murayama, H.C. Evrard, A.~Oeltermann,
  and N.K. Logothetis.
\newblock Coupling of hippocampal theta and ripples with pontogeniculooccipital
  waves.
\newblock \emph{Nature}, 589\penalty0 (7840):\penalty0 96--102, 2021.

\bibitem[Sch\"olkopf et~al.(2012)Sch\"olkopf, Janzing, Peters, Sgouritsa,
  Zhang, and Mooij]{schoelkopf2012}
B.~Sch\"olkopf, D.~Janzing, J.~Peters, E.~Sgouritsa, K.~Zhang, and J.~Mooij.
\newblock On causal and anticausal learning.
\newblock In \emph{Proceedings of the 29th International Conference on Machine
  Learning (ICML 2012)}, pages 1255--1262, 2012.

\bibitem[Serre(2010)]{serre2010matrices}
D.~Serre.
\newblock \emph{Matrices: Theory and Applications}.
\newblock Graduate Texts in Mathematics. Springer, 2010.
\newblock ISBN 9781441976833.

\bibitem[Sgouritsa et~al.(2015)Sgouritsa, Janzing, Hennig, and
  Sch{\"o}lkopf]{SgoJanHenSch15}
E.~Sgouritsa, D.~Janzing, P.~Hennig, and B.~Sch{\"o}lkopf.
\newblock Inference of cause and effect with unsupervised inverse regression.
\newblock In \emph{Proceedings of the 18th International Conference on
  Artificial Intelligence and Statistics}. JMLR.org, 2015.

\bibitem[Shajarisales et~al.(2015)Shajarisales, Janzing, Sch\"{o}lkopf, and
  Besserve]{shajarisale2015}
N.~Shajarisales, D.~Janzing, B.~Sch\"{o}lkopf, and M.~Besserve.
\newblock Telling cause from effect in deterministic linear dynamical systems.
\newblock In \emph{International Conference on Machine Learning}, 2015.

\bibitem[Spirtes(2010)]{spirtes2010introduction}
P.~Spirtes.
\newblock Introduction to causal inference.
\newblock \emph{The Journal of Machine Learning Research}, 11:\penalty0
  1643--1662, 2010.

\bibitem[Spirtes et~al.(1993)Spirtes, Glymour, and Scheines]{Spirtes1993}
P.~Spirtes, C.~Glymour, and R.~Scheines.
\newblock \emph{Causation, Prediction, and Search (Lecture Notes in
  Statistics)}.
\newblock Springer-Verlag, New York, NY, 1993.

\bibitem[Wijsman(1967)]{wijsman1967cross}
R~A Wijsman.
\newblock Cross-sections of orbits and their application to densities of
  maximal invariants.
\newblock 1967.

\bibitem[Zscheischler et~al.(2011)Zscheischler, Janzing, and
  Zhang]{Zscheischler2011}
J.~Zscheischler, D.~Janzing, and K.~Zhang.
\newblock Testing whether linear equations are causal: A free probability
  theory approach.
\newblock In \emph{Proceedings of the 27th Conference on Uncertainty in
  Artificial Intelligence (UAI 2011)}, pages 839--846, 2011.

\end{thebibliography}

%%%%%%%%%%%%%%%%%%%%%%%%%%%%%%%%%%%%%%%%%%%%%%%%%%%%%%%%%%%%%%%%%%%%%%%%%%%%%%%%%%%%%%%%%%%%%%%%%%%%%%%%%%%%%%%%%%%%%%%%%%%%%%%%%%%%%%%%%%%%%%%%%%%%%%%%%%%%%%%%%%%%%%%%%%%%%%%%%%%%%%%%%%%%%%%%%%%%%%%%%%%%%%%%%%%%%%%%%%%%%%%%%%%%%%%%%%%%%%%%%%%%%%%%%%%%%%%%%%%%%%%%%%%%%%%%%%%%%%%%%%%%%%%%%%%%%%

\appendix
\newpage
\section{Proofs}\label{app:proofs}
This section includes proofs of main text results and additional necessary results.
\subsection{Consequence of Assumption~\ref{assum:model}}
Due to basic properties of the Fourier transform, we have the following implications for the model in Assumption~\ref{assum:model}. 
\begin{prop}\label{prop:bounded}
Assumption~\ref{assum:model} implies
\begin{itemize}
\item The input and output PSDs $S_{xx}$ and $S_{yy}$ are real 1-periodic 
 bounded continuous functions taking values on $[a,\max_{\nu\in \mathcal{I}} S_{xx}(\nu)]$ and $[b,\max_{\nu\in \mathcal{I}} S_{xx}(\nu)]$ respectively,  for some $a, b>0$.
\item The forward and backward frequency responses $\widehat{ \rm h}_{{\bf X}\to {\bf Y}}$ and $\widehat{ \rm h}_{{\bf Y}\to {\bf X}}$ are complex bounded 1-periodic continuous functions such that $|\widehat{ \rm h}_{{\bf X}\to {\bf Y}}|^2=\frac{S_{yy}}{S_{xx}}>c$ and $|\widehat{ \rm h}_{{\bf X}\to {\bf Y}}|^2=\frac{S_{xx}}{S_{yy}}>d$ for some $c,d>0$.
\end{itemize}
\end{prop}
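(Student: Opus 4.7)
The plan is to assemble standard Fourier-analytic facts about $\ell^1$ sequences and combine them with the two-sided BIBO-stability encoded in \cref{assum:model}.

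First I would handle the input PSD. Since $C_{xx}$ is a real, even sequence in $\ell^1(\mathbb{Z})$, its DTFT $S_{xx}$ is real (by the symmetry $C_{xx}(-\tau)=C_{xx}(\tau)$), automatically $1$-periodic by construction, and uniformly bounded by $\|C_{xx}\|_1$. Continuity follows from the Weierstrass $M$-test, because the complex exponentials have unit modulus and the series $\sum_\tau C_{xx}(\tau) e^{-i2\pi\nu\tau}$ is therefore normally convergent on $\mathcal{I}$. Strict positivity is by hypothesis, so compactness of $\mathcal{I}$ yields $a := \min_{\nu\in\mathcal{I}} S_{xx}(\nu) > 0$.

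Next I would treat the two impulse responses. BIBO-stability of the mechanism and of its inverse place both ${\rm h}_{\bX\to\bY}$ and ${\rm h}_{\bY\to\bX}$ in $\ell^1(\mathbb{Z})$, so the same DTFT argument makes $\widehat{\rm h}_{\bX\to\bY}$ and $\widehat{\rm h}_{\bY\to\bX}$ continuous, $1$-periodic, complex-valued functions bounded in modulus by the respective $\ell^1$ norms. The convolution identity ${\rm h}_{\bX\to\bY} * {\rm h}_{\bY\to\bX} = \delta_0$ encoded by invertibility, combined with $\widehat{f*g}=\widehat{f}\widehat{g}$, gives the pointwise relation $\widehat{\rm h}_{\bX\to\bY}(\nu)\,\widehat{\rm h}_{\bY\to\bX}(\nu) = 1$ for every $\nu$. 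This yields
\[
|\widehat{\rm h}_{\bX\to\bY}(\nu)| \;\ge\; \frac{1}{\|\widehat{\rm h}_{\bY\to\bX}\|_\infty} \;\ge\; \frac{1}{\|{\rm h}_{\bY\to\bX}\|_1} \;>\; 0,
\]
and symmetrically for $|\widehat{\rm h}_{\bY\to\bX}|$. Hence $|\widehat{\rm h}_{\bX\to\bY}|^2$ and $|\widehat{\rm h}_{\bY\to\bX}|^2$ are continuous, $1$-periodic and bounded both above and away from zero.

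Finally I would apply \cref{eq:ampli} to transfer these properties to $S_{yy}$. The factorisation $S_{yy} = |\widehat{\rm h}_{\bX\to\bY}|^2 S_{xx}$ is a product of two continuous, real, strictly positive, $1$-periodic, bounded functions, so $S_{yy}$ inherits all of these, with lower bound $b := (\min_\nu |\widehat{\rm h}_{\bX\to\bY}(\nu)|^2)\cdot a > 0$. Dividing the same identity by $S_{xx}$ (which is bounded away from zero) gives $|\widehat{\rm h}_{\bX\to\bY}|^2 = S_{yy}/S_{xx}$ with the constant $c$ above, and the inversion identity yields $|\widehat{\rm h}_{\bY\to\bX}|^2 = S_{xx}/S_{yy}$ with $d := 1/\max_\nu |\widehat{\rm h}_{\bX\to\bY}(\nu)|^2 > 0$. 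I anticipate no substantive obstacle: the proof is a careful bookkeeping of classical DTFT properties (normal convergence on $\ell^1$, compatibility of convolution with products, compactness of $\mathcal{I}$). The one point worth emphasising is that both the forward filter and its inverse are $\ell^1$, which is precisely the content of two-sided BIBO-stability; without the inverse being in $\ell^1$, the strictly positive lower bounds on $S_{yy}$, $c$, and $d$ would all fail.
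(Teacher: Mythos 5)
Your proof is correct and rests on the same ingredients as the paper's own argument: normal convergence of the DTFT for $\ell^1$ sequences, compactness of $\mathcal{I}$ giving a strictly positive minimum for the continuous, strictly positive $S_{xx}$, and BIBO-stability of both the filter and its inverse. The only difference is the order of deduction — the paper first establishes $\inf S_{yy}>0$ by contradiction from the boundedness of $S_{xx}/S_{yy}$ and then derives the frequency-response bounds from the PSD bounds, whereas you bound $|\widehat{\rm h}_{\bX\to\bY}|$ away from zero directly via the pointwise inversion identity $\widehat{\rm h}_{\bX\to\bY}\,\widehat{\rm h}_{\bY\to\bX}=1$ and then read off the lower bound on $S_{yy}$ from the product formula, which is an equally valid (and arguably cleaner) route to the same conclusion.
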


\begin{proof}
\michel{to be polished} For any sequence ${\bf a}\in \ell^1(\mathbb{Z})$ there is uniform convergence and boundedness of the Fourier transform series. As a consequence, all Fourier transforms considered well defined, periodic, continuous and bounded. $S_{xx}$ being strictly positive, by continuity on one period (as Fourier transform of the autocorrelation function), its minimum is strictly positive.  BIBO stability of $h_{X\rightarrow Y}$ entails that $S_{yy}$ is also well defined, periodic, continuous and bounded (e.g. using the above Proposition 1). Then BIBO stability of $h_{Y\rightarrow X}$ entails that its Fourier transform $S_{xx}/S_{yy}$ is bounded, which implies by contradition that $inf S_{yy}$ must be strictly positive.
Finally, strictly positive bounds on $S_{xx}$ and $S_{yy}$ imply strictly positive upper and lower bounds for the frequency responses. 
\end{proof}

%\subsection{Proof of \Cref{lem:klisotropic}}

\subsection{Proof of \Cref{thm:GPIGICI}}

First we start by finding the divergence rate for a given Gaussian process from the set of white Gaussian noises (i.e. the Gaussian processes with constant spectra).
\begin{lem}
	\label{lem:klisotropic}
	Let $\bar{\mathcal{E}}$ to be the set of discrete white Gaussian noises. %\domi{again there is a mismatch regarding use of Assumption 1} 
	Assuming ${\bf X}$	is a Gaussian process with $C_{xx}\in \ell^1(\mathbb{Z})$ and such that $S_{xx}$ is strictly positive at all frequencies, then the projection $\mathbf{P}_{\bar{\mathcal{E}}} {\bf X}$  of ${\bf X}$ on $\bar{\mathcal{E}}$ is a Gaussian white noise of power ${\mathcal{P}}({\bf X})$, and the corresponding divergence is
	%\begin{gather}
	\begin{equation}\label{eq:KLprocessKL}
	\bar{D}(P_{{\bf X}}\| \bar{\mathcal{E}})=-\frac{1}{2}\int_{-\frac{1}{2}}^{\frac{1}{2}}\ln\left(\frac{S_{xx}(\nu)}{{\mathcal{P}}({\bf X})}\right)d\nu
	%\end{gather}
	\end{equation}
	The formula also holds when exchanging ${\bf X}$ for ${\bf Y}$
\end{lem}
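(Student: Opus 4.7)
The plan is to apply Proposition~\ref{lem:klprocess} with $\tilde{\bf X}$ chosen as an arbitrary element of $\bar{\mathcal{E}}$, i.e., a Gaussian white noise with constant PSD $S_{\tilde x\tilde x}(\nu)\equiv \gamma$ for some $\gamma>0$, and then optimize over $\gamma$ to obtain the information geometric projection. Substituting this constant into \eqref{eq:KLprocess}, using $\int_{-1/2}^{1/2}d\nu=1$ together with $\langle S_{xx}\rangle = \mathcal{P}({\bf X})$, the relative entropy rate simplifies to
\[
\bar{D}(P_{\bf X}\|P_{\tilde{\bf X}}) = \frac{1}{2}\left[\frac{\mathcal{P}({\bf X})}{\gamma} - 1 - \langle \log S_{xx}\rangle + \log\gamma\right].
\]
Note that the term $\langle \log S_{xx}\rangle$ is finite under the standing hypotheses, since Proposition~\ref{prop:bounded} guarantees that $S_{xx}$ is continuous and bounded away from both zero and infinity on the compact frequency interval $\mathcal{I}$, so $\log S_{xx}$ is bounded.

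Next I would minimize the right-hand side over $\gamma>0$. This reduces to the elementary one-dimensional problem of minimizing $g(\gamma) := \mathcal{P}({\bf X})/\gamma + \log\gamma$. Setting $g'(\gamma)=-\mathcal{P}({\bf X})/\gamma^2+1/\gamma=0$ yields the unique critical point $\gamma^\star = \mathcal{P}({\bf X})$, and strict convexity of $g$ together with coercivity ($g(\gamma)\to+\infty$ as $\gamma\to 0^+$ or $\gamma\to+\infty$) confirm this is the global minimizer. This identifies the projection $\mathbf{P}_{\bar{\mathcal{E}}}{\bf X}$ as the Gaussian white noise whose (constant) PSD equals $\mathcal{P}({\bf X})$, i.e., the white noise of power $\mathcal{P}({\bf X})$, as claimed.

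Finally, substituting $\gamma^\star = \mathcal{P}({\bf X})$ into the expression above gives
\[
\bar{D}(P_{\bf X}\|\bar{\mathcal{E}}) = \frac{1}{2}\bigl[\log \mathcal{P}({\bf X}) - \langle \log S_{xx}\rangle\bigr] = -\frac{1}{2}\int_{-1/2}^{1/2}\log\frac{S_{xx}(\nu)}{\mathcal{P}({\bf X})}d\nu,
\]
which is \eqref{eq:KLprocessKL}. The same derivation applies verbatim with ${\bf Y}$ in place of ${\bf X}$, since under Assumption~\ref{assum:model} the output process inherits summable autocovariance and a strictly positive, bounded PSD via Proposition~\ref{prop:bounded}, so all quantities remain well defined. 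The proof is essentially routine once Proposition~\ref{lem:klprocess} is in hand; the only mild subtlety — required by the IGCI setup of Section~\ref{sec:igci} — is ensuring both existence and uniqueness of the minimizer, which is handled by the convexity and coercivity of $g$.
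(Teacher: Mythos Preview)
Your proof is correct and follows essentially the same approach as the paper's: parametrize elements of $\bar{\mathcal{E}}$ by their constant PSD value, apply the relative entropy rate formula of Proposition~\ref{lem:klprocess}, differentiate to locate the critical point $\gamma^\star=\mathcal{P}({\bf X})$, and substitute back. One minor correction: the function $g(\gamma)=\mathcal{P}({\bf X})/\gamma+\log\gamma$ is \emph{not} globally strictly convex on $(0,\infty)$ (its second derivative $(2\mathcal{P}({\bf X})-\gamma)/\gamma^3$ changes sign at $\gamma=2\mathcal{P}({\bf X})$); however, your coercivity observation together with the uniqueness of the critical point already secures that $\gamma^\star$ is the global minimizer, so the conclusion stands unchanged.
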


To get a intuition of \eqref{eq:KLprocessKL} note that 
$q(\nu):=\frac{S_{xx}(\nu)}{{\mathcal{P}}({\bf X})}$ can be formally interpreted as probability density
on $[-1/2,1/2]$ due to $\int_{-1/2}^{1/2} q(\nu) d\nu =1$. If $u$ with $u(\nu)=1$ denotes the constant density, \eqref{eq:KLprocessKL} can be rephrased as the relative entropy distance $\int_{-1/2}^{1/2} u(\nu) \log (q\nu)u(\nu) =D(u\|q)$. In other words, it measures to what extent $q$ deviates from the uniform distribution. 

Based on this Lemma, we follow the steps of \cite{janzing2012information} to prove the Theorem.
\begin{proof}
	We denote elements of the set of white Gaussian noise distributions $U_\alpha$, where $\alpha$ denotes the power of $U_{\alpha}$ and hence the constant value of its corresponding PSD. The Gaussian noise distribution with minimum distance corresponds to the value of $\alpha$ for which the derivative of $\bar{D}(P_{{\bf X}}\|{U_\alpha})$ vanishes. We can compute the distances via \cref{lem:klprocess} because condition $(ii)$ is satisfied:
	$$
	\frac{d\bar{D}(P_{{\bf X}}\|U_\alpha)}{d\alpha}=\frac{1}{2}\int_{-\frac{1}{2}}^{\frac{1}{2}}\Bigl(-\frac{S_{xx}(\nu)}{\alpha^2}+\frac{S_{xx}(\nu)}{\alpha^2}\frac{\alpha}{S_{xx}(\nu)}\Bigr)d\nu.
	$$
	Hence, the derivative vanishes for
	$
	\int_{-\frac{1}{2}}^{\frac{1}{2}}\frac{S_{xx}(\nu)}{\alpha^2}=\frac{1}{\alpha}
	$, which amounts to
	$\alpha= {\mathcal{P}}({\bf X})$.
	We thus get
	$$
	\bar{D}(P_{{\bf X}}\| \mathcal{E}_X)=\frac{1}{2}\int_{-\frac{1}{2}}^{\frac{1}{2}}\frac{S_{xx}(\nu)}{  {\mathcal{P}}({\bf X})}-1-\ln \left(\frac{S_{xx}(\nu)}{ {\mathcal{P}}({\bf X})}\right)d\nu. 
	$$
	Using the definition of $ {\mathcal{P}}$ we obtain:
	$$
	\frac{1}{2}\int_{-\frac{1}{2}}^{\frac{1}{2}}\left(\frac{S_{xx}(\nu)}{  {\mathcal{P}}({\bf X})}-1-\ln\left(\frac{S_{xx}(\nu)}{ {\mathcal{P}}({\bf X})}\right)\right)d\nu =
	\frac{1}{2}\frac{\int_{-\frac{1}{2}}^{\frac{1}{2}}S_{xx}(\nu)}{  {\mathcal{P}}({\bf X})}-\frac{1}{2}-\frac{1}{2}\int_{-\frac{1}{2}}^{\frac{1}{2}}\ln\left(\frac{S_{xx}(\nu)}{ {\mathcal{P}}({\bf X})}\right)d\nu 
	$$
	$$=-\frac{1}{2}\int_{-\frac{1}{2}}^{\frac{1}{2}}\ln\left(\frac{S_{xx}(\nu)}{ {\mathcal{P}}({\bf X})}\right)d\nu,
	$$
	which proves our claim. The formula also holds for ${\bf Y}$ because a linear filter maps a Gaussian process 
	on a Gaussian process. 
\end{proof}

\begin{proof}
	Using \cref{lem:klisotropic} we have
	$$
	\bar{D}(P_{{\bf X}}||\bar{\mathcal{E}})=-\frac{1}{2}\int_{-\frac{1}{2}}^{\frac{1}{2}}\ln\left(\frac{S_{xx}(\nu)}{ {\mathcal{P}}({\bf X})}\right)d\nu\qquad\mbox{and}\qquad
	\bar{D}(P_{{\bf Y}}||\bar{\mathcal{E}})=-\frac{1}{2}\int_{-\frac{1}{2}}^{\frac{1}{2}}\ln\left(\frac{S_{yy}(\nu)}{ {\mathcal{P}}({\bf Y})}\right)d\nu.
	$$
	Transforming $U_{\bf X}$ and $h_t$ to Fourier domain its easy to see that $\overset{\rightarrow}{P_{{\bf Y}}}$ is the distribution of a stationary Gaussian process with PSD $ {\mathcal{P}}({\bf X})|\hat{h}(\nu)|^2$ according to \cref{eq:ampli}. Therefore using \cref{lem:klprocess} we get:
	$$
	\bar{D}(\overset{\rightarrow}{P_{{\bf Y}}}||U_{{\bf Y}})=\frac{1}{2}\int_{-\frac{1}{2}}^{\frac{1}{2}}\Bigl(\frac{ {\mathcal{P}}({\bf X})|\hat{h}(\nu)|^2}{ {\mathcal{P}}({\bf Y})}-1-\log\frac{ {\mathcal{P}}({\bf X})|\hat{h}(\nu)|^2}{ {\mathcal{P}}({\bf Y})}\Bigr)d\nu,
	$$
	which completes the proof.
\end{proof}

\subsection{Proof of \Cref{cor:ortho}}
\begin{proof}
	Assume SIC is satisfied, then the last term of \cref{eq:SICIGCI} vanishes, yielding
	\[
	\bar{D}(P_{{\bf Y}}||\bar{\mathcal{E}})=\bar{D}(P_{{\bf X}}||\bar{\mathcal{E}})+\bar{D}(\overset{\rightarrow}{P_{{\bf Y}}}||U_{{\bf Y}}). 
	\]
	Moreover, SIC also implies that the projection of  $\overset{\rightarrow}{P}_{{\bf Y}}$ on $\bar{\mathcal{E}}$ is $U_{\bf Y}$. This is because SIC states that $\overset{\rightarrow}{P}_{{\bf Y}}$ and $P_{{\bf Y}}$ have the same power. Hence, we get
		\[
	\bar{D}(P_{{\bf Y}}||\bar{\mathcal{E}})=\bar{D}(P_{{\bf X}}||\bar{\mathcal{E}})+\bar{D}(\overset{\rightarrow}{P_{{\bf Y}}}||\bar{\mathcal{E}}). 
	\]
	%\begin{equation}\label{eq:EUequiv}
	%\bar{D}(\overset{\rightarrow}{P_{{\bf Y}}}||\bar{\mathcal{E}}) = 	\bar{D}(\overset{\rightarrow}{P_{{\bf Y}}}||U_{\bf Y}).
	%\end{equation} 
	
    For the converse implication, assume we have orthogonality of irregularities, combining \cref{eq:orthirregSIC} with \cref{eq:SICIGCI} we get
	$$
	\bar{D}(\overset{\rightarrow}{P_{{\bf Y}}}||\bar{\mathcal{E}})=\bar{D}(\overset{\rightarrow}{P_{{\bf Y}}}||U_{{\bf Y}})+\frac{1}{2}\left(1-\frac{\langle  S_{xx}\rangle\langle \frac{S_{yy}}{S_{xx}}\rangle}{\langle S_{yy}\rangle}\right).
	$$
	% using \cref{lem:klisotropic} and \cref{lem:klprocess}, this leads to
	Using \cref{lem:klisotropic} to get an analytic expression for the left-hand side of the above equation, and \cref{eq:KLprocess} for the right-hand side this yields
	\[
	-\frac{1}{2}\int_{-1/2}^{1/2} \log \frac{|\hat{h}(\nu)|^2}{\langle |\hat{h}|^2 \rangle} d\nu = \frac{1}{2}\int_{-\frac{1}{2}}^{\frac{1}{2}}\Bigl(\frac{ {\mathcal{P}}({\bf X})|\hat{h}(\nu)|^2}{ {\mathcal{P}}({\bf Y})}-1-\log\frac{ {\mathcal{P}}({\bf X})|\hat{h}(\nu)|^2}{ {\mathcal{P}}({\bf Y})}\Bigr)d\nu + 
	\frac{1}{2}\left(1-\frac{\langle  S_{xx}\rangle\langle \frac{S_{yy}}{S_{xx}}\rangle}{\langle S_{yy}\rangle}\right)
	%\frac{1}{2}\int_{-1/2}^{1/2} \left[\frac{\mathcal{P}({\bf X})}{\mathcal{P}({\bf Y})} \log \frac{|\hat{h}(\nu)|^2}{\langle |\hat{h}|^2 \rangle} \right] d\nu
	\]
	which after simplification, and noting that $\langle |\hat{h}|^2 \rangle=\langle\frac{S_{yy}}{S_{xx}}\rangle$, yields
	$$
	\log \langle \frac{S_{yy}}{S_{xx}}\rangle
	=\log \frac{\left\langle {\mathcal{P}}({\bf Y})  \right\rangle}{\left\langle {\mathcal{P}}({\bf X})  \right\rangle}\,,
	$$
	which is equivalent to SIC.
\end{proof}

\subsection{Proof of \Cref{lem:asymptId}}
\noindent
Proof: Roughly speaking, Theorem~1 in \cite{janzing2017central} states that the sequence of 
complex random variables $\nu \mapsto \hat{h}^m_b(\nu)$ converges for $n\to \infty$ to an isotropic two-dimensional Gaussian $Z$ on the complex plane where real and imaginary parts are independent with variance $1/2$. More precisely, the distance
between the distributions of  $H_b^n$ and $Z$ converges in probability to zero
for any semi-metric that is well-behaved in the sense of Definition~1 in \cite{janzing2017central}. We now define a semi-metric on the set of distributions on $\C$ by
\begin{equation}\label{eq:d}
d(P,Q) := \sup_{x\in \R^+} \left|P (|Z|^2 < x) - Q(|Z|^2<x)\right|.  
\end{equation}
This pseudo-metric defines a well-behaved pseudo-metric because
\[
d'(P,Q) :=\sup_{x\in \R} \left|P (X < x) - Q(X<x)\right|
\] 
is well-behaved for the set of distributions on $\R$, see remarks after Definition~1 in \cite{janzing2017central}. 

We now show that
\[
\int |\hat{h}_b^m(\nu)|^2 d\nu \int 1/|\hat{h}_b^m(\nu)|^2 d\nu \to 0
\]  
in probability. First note that 
\begin{equation}\label{eq:asconv}
\int |\hat{h}_b^m(\nu)|^2 d\nu = \frac{1}{m} \sum_{j=1}^m b_j^2 \quad \stackrel{a.s.}{\to} \quad \E[B]=1,
\end{equation}
due to the strong law of large numbers. Recall that $|Z|^2$ is $\chi^2$-distributed which implies that its
density has infinite slope at $0$. Hence, for any arbitrarily large $c$ we can
always find a sufficiently small $\epsilon$ such that 
\[
P(|Z|^2 \leq \epsilon) > c \epsilon.
\]
We observe
\begin{equation}\label{eq:1overbound}
\int 1/|\hat{h}_b^m(\nu)|^2 d\nu \geq  \frac{1}{\epsilon} Q^m_b(|h_b^m| \leq \epsilon), 
\end{equation}
where $Q^m_b$ denotes the distribution of the random variable $h^n_b$ for fixed
$n,b$.  
For sufficiently large $m$ we thus can ensure  that
\begin{equation}\label{eq:ceps}
Q^m_b(|h_b^m| \leq \epsilon) \geq c \epsilon,
\end{equation}
with probability at least $1-\delta$ for any arbitrarily small $\delta$.
This is because Theorem~1 in \cite{janzing2017central} states that
$Q^m_b$ converges to the distribution of $Z$ with respect to any well-behaved
pseudo-metric, in  particular with respect to $d$ defined in
\eqref{eq:d}.
Combining \eqref{eq:ceps} and \eqref{eq:1overbound} shows that
$\int 1/|\hat{h}_b^m(\nu)|^2 d\nu > c$
with  probability at least $1-\delta$.
Using also \eqref{eq:asconv} we conclude that 
\begin{equation}\label{eq:invers}
\int |\hat{h}_b^m(\nu)|^2 d\nu \int 1/|h_b^m(\nu)|^2 d\nu
\end{equation}
gets larger than $c/2$ also with probability $1-\delta$.  
Hence the inverse of \eqref{eq:invers} converges to zero in probability.
$\Box$

\subsection{Proof of \Cref{thm_COM_FIR}}
The following lemmas will be helpful in proving \cref{thm_COM_FIR}.
\begin{lem} \cite{serre2010matrices}
\label{cor:max-eig}
For a given Hermitian matrix $H$ and any principal submatrix of $H$, $H'$, their spectral radius $\rho_s$ satisfies
\begin{gather*}
\rho_s(H)\geq \rho_s(H').
\end{gather*}
\end{lem}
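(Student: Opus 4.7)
The plan is to reduce the claim to the variational (Rayleigh quotient) characterization of extreme eigenvalues for Hermitian matrices. Since $H$ is Hermitian, the spectral theorem gives $\rho_s(H)=\max(|\lambda_{\max}(H)|,|\lambda_{\min}(H)|)$, and by the Courant--Fischer theorem each of these extreme eigenvalues is the sup/inf of the Rayleigh quotient
\[
\mathcal{R}_H(x)=\frac{x^{*}Hx}{x^{*}x},\qquad x\in\mathbb{C}^{n}\setminus\{0\}.
\]
Equivalently, for any Hermitian $H$, one has $\rho_s(H)=\sup_{x\neq 0}|x^{*}Hx|/\|x\|^{2}$. I would take this as the starting point.

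Next, I would use the defining feature of a principal submatrix. If $H'$ is the principal submatrix of $H$ indexed by $S\subset\{1,\dots,n\}$, then for any $y\in\mathbb{C}^{|S|}$ let $\tilde{y}\in\mathbb{C}^{n}$ be its zero-padded extension (i.e.\ $\tilde{y}_i=y_i$ for $i\in S$ and $\tilde{y}_i=0$ otherwise). A direct expansion of the quadratic form, using that the off-$S$ entries of $\tilde{y}$ vanish, yields $\tilde{y}^{*}H\tilde{y}=y^{*}H'y$ together with $\|\tilde{y}\|=\|y\|$. Hence the Rayleigh quotients of $H'$ form a subset of the Rayleigh quotients of $H$:
\[
\{\mathcal{R}_{H'}(y):y\in\mathbb{C}^{|S|}\setminus\{0\}\}\subset\{\mathcal{R}_{H}(x):x\in\mathbb{C}^{n}\setminus\{0\}\}.
\]

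Taking suprema of the absolute value on both sides immediately gives
\[
\rho_s(H')=\sup_{y\neq 0}|\mathcal{R}_{H'}(y)|\leq \sup_{x\neq 0}|\mathcal{R}_{H}(x)|=\rho_s(H),
\]
which is the claim. (As a side remark, the same restriction-to-a-subspace argument yields the stronger Cauchy interlacing inequalities $\lambda_{\min}(H)\le \lambda_{\min}(H')$ and $\lambda_{\max}(H')\le \lambda_{\max}(H)$, from which the spectral-radius inequality also follows by taking maxima of absolute values.)

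There is no real obstacle here; the only point worth being careful about is ensuring that the zero-padding operator is genuinely an isometry (so that $\|\tilde{y}\|=\|y\|$) and that restricting to vectors supported on $S$ preserves the quadratic form exactly — both facts are immediate from the definition of a principal submatrix. Because the excerpt cites this as a textbook fact from \citet{serre2010matrices}, presenting the two-line Rayleigh quotient argument (or equivalently invoking Cauchy interlacing) is sufficient.
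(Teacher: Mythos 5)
Your argument is correct. Note that the paper itself does not prove this lemma at all --- it is stated as a textbook fact with a citation to \citet{serre2010matrices} and used as-is in the proof of Theorem~\ref{thm_COM_FIR}, so there is no in-paper proof to compare against. Your Rayleigh-quotient argument is the standard self-contained justification: for a Hermitian matrix the spectral radius coincides with the numerical radius $\sup_{x\neq 0}|x^{*}Hx|/\|x\|^{2}$, the zero-padding map is an isometry that carries the quadratic form of the principal submatrix $H'$ onto a restriction of that of $H$, and taking suprema gives the inequality. The only implicit step worth making explicit is that $H'$ is itself Hermitian (immediate, since a principal submatrix of a Hermitian matrix is Hermitian), which is needed for the identity $\rho_s(H')=\sup_{y\neq 0}|\mathcal{R}_{H'}(y)|$ to apply on the submatrix side as well. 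Your side remark that the full Cauchy interlacing inequalities follow from the same restriction-to-a-subspace device is also accurate and gives a slightly stronger statement than the lemma requires.
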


\begin{lem}\label{lem:szego}\cite{gray2006toeplitz}
Let $f:[-\frac{1}{2},\frac{1}{2})\to \mathbb{R}$ $f\in L^1$ be a bounded function and suppose $t_k$ is its Fourier series coefficients, i.e.
\begin{gather*}
t_k=\int_{-\frac{1}{2}}^{\frac{1}{2}}f(\nu)e^{i2\pi k\nu}d\nu,\ \ \ \  t\in \mathbb{Z}.
\end{gather*}
Consider Toeplitz matrices $T_n$ defined as 
$$[T_n]_{ij}=t_{i-j} \ \ \ \  i,j\in\left\{0,...,n-1\right\}$$
with eigenvalues $\tau_{n,k} (0\leq k\leq  n-1)$. Then if $t_i$ are absolutely summable we get:
\begin{gather*}
\min\limits_{x\in [-\frac{1}{2},\frac{1}{2})}f(x)\leq\tau_{n,i}\leq\max\limits_{x\in [-\frac{1}{2},\frac{1}{2})}f(x)
\end{gather*}
\end{lem}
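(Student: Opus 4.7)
The plan is to exploit the variational (Rayleigh--Ritz) characterization of the eigenvalues of a Hermitian matrix. The very first observation is that $T_n$ is Hermitian: since $f$ is real-valued, $t_{-k}=\overline{t_k}$, so $[T_n]_{ji}=t_{j-i}=\overline{t_{i-j}}=\overline{[T_n]_{ij}}$. Hence the eigenvalues $\tau_{n,k}$ are real and, for any non-zero $\mathbf{v}\in\mathbb{C}^n$, satisfy
\[
\min_k \tau_{n,k}\;\leq\;\frac{\mathbf{v}^{*}T_n\mathbf{v}}{\mathbf{v}^{*}\mathbf{v}}\;\leq\;\max_k \tau_{n,k},
\]
with equality attained at the corresponding eigenvectors.

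Next I would rewrite the quadratic form in the Fourier domain. Using the definition of $t_{i-j}$ as a Fourier coefficient and swapping the (finite) sum with the integral, which requires no extra justification because we are dealing with a finite sum of integrable terms,
\[
\mathbf{v}^{*}T_n\mathbf{v}=\sum_{i,j=0}^{n-1}\overline{v}_i v_j\int_{-1/2}^{1/2} f(\nu)e^{\mathbf{i}2\pi(i-j)\nu}d\nu=\int_{-1/2}^{1/2}f(\nu)\left|V(\nu)\right|^{2}d\nu,
\]
where $V(\nu):=\sum_{j=0}^{n-1} v_j e^{-\mathbf{i}2\pi j\nu}$. Parseval's identity gives $\mathbf{v}^{*}\mathbf{v}=\sum_j |v_j|^{2}=\int_{-1/2}^{1/2}|V(\nu)|^{2}d\nu$, so the Rayleigh quotient can be recast as a weighted average of $f$:
\[
\frac{\mathbf{v}^{*}T_n\mathbf{v}}{\mathbf{v}^{*}\mathbf{v}}=\frac{\int_{-1/2}^{1/2}f(\nu)|V(\nu)|^{2}d\nu}{\int_{-1/2}^{1/2}|V(\nu)|^{2}d\nu}.
\]

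Since $|V(\nu)|^{2}\geq 0$ and $\mathbf{v}\neq 0$ implies the denominator is strictly positive, this weighted average lies between the essential infimum and essential supremum of $f$. Therefore
\[
\min_{\nu\in[-1/2,1/2)} f(\nu)\;\leq\;\frac{\mathbf{v}^{*}T_n\mathbf{v}}{\mathbf{v}^{*}\mathbf{v}}\;\leq\;\max_{\nu\in[-1/2,1/2)} f(\nu),
\]
for every non-zero $\mathbf{v}$, and combining with the Rayleigh--Ritz bounds from the first step yields the claimed inclusion of every eigenvalue $\tau_{n,i}$ in the interval $[\min f,\max f]$.

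I do not anticipate a genuine obstacle here: the crux is the identification of the quadratic form as an $f$-weighted $L^2$ energy of the trigonometric polynomial $V$, after which the conclusion is purely a weighted-average argument. The only points requiring a small amount of care are (i) verifying the Hermitian property from the real-valuedness of $f$, and (ii) interpreting $\min$ and $\max$ as essential inf/sup when $f$ is merely bounded rather than continuous, in which case the inequalities remain valid because $|V(\nu)|^{2}$ is a non-negative weight. The absolute summability hypothesis on $\{t_k\}$ is only needed to ensure that the Fourier representation of $t_k$ used above is meaningful and that $f$ coincides almost everywhere with the pointwise limit of the Fourier series.
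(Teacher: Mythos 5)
Your argument is correct and is exactly the standard proof of this result: the paper itself offers no proof but defers to the cited reference (Gray's Toeplitz monograph, Lemma 4.1), whose argument is precisely your Rayleigh--Ritz computation expressing $\mathbf{v}^{*}T_n\mathbf{v}/\mathbf{v}^{*}\mathbf{v}$ as an $|V|^2$-weighted average of $f$. Your side remarks are also apt -- for merely bounded $f$ the bounds should be read with essential inf/sup, and the absolute summability of the $t_k$ plays no real role in this particular inequality.
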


In addition, we exploit a concentration of measure result for Lipschitz continuous functions of random rotation matrices:
\begin{lem}\cite[Corollary 6.14]{guionnet2009large}\label{lem:com_rotmat}
For any differentiable function $f:SO(N)\rightarrow \mathbb{R}$ such that, for any $X,Y\in SO(N),|f(X)-f(Y)|\leq K\|X-Y\|_F$, where $\|\|_2$ denotes the Forbenius norm, we have for any random matrix $O$ Haar distributed on $SO(N)$ and for all $\delta\geq 0$
$$
\left|f(O)-\mathbb{E}_O[f(O)]\right|< 4K\delta
$$
with probability at least $1-2e^{-N\delta^2}$
\end{lem}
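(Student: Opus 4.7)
The plan is to prove this Gaussian-type concentration inequality on $SO(N)$ using the standard curvature-dimension/entropy method, combining a Ricci curvature bound on $SO(N)$ with Herbst's argument. This is the canonical route taken in Guionnet--Zeitouni and goes back to Gromov--Milman and Bakry--Emery.

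First I would equip $SO(N)$ with its bi-invariant Riemannian metric (whose normalized volume measure is precisely the Haar measure $\mu$). A standard Lie-algebraic computation using the Killing form shows that the Ricci curvature of $SO(N)$ is bounded below by a constant proportional to $N$: concretely $\mathrm{Ric} \geq \tfrac{N-2}{4}\, g$ in a convenient normalization. This curvature lower bound, scaling linearly with the parameter $N$, is the source of the $N$ that appears in the exponent of the final inequality.

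Second, I would invoke the Bakry--Emery $CD(\kappa,\infty)$ criterion: a manifold with $\mathrm{Ric} \geq \kappa\, g$ satisfies a logarithmic Sobolev inequality with constant $\kappa$, namely
\begin{equation*}
\mathrm{Ent}_\mu(f^2)\ \leq\ \frac{2}{\kappa}\int |\nabla f|^2\, d\mu
\end{equation*}
for all smooth $f$ on $SO(N)$. Specializing to $\kappa = c\,N$ gives an LSI whose constant deteriorates like $1/N$, which is exactly the concentration scale we want.

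Third, I would apply Herbst's argument to translate the LSI into a subgaussian tail bound. Given a $K$-Lipschitz $f$ (with respect to the geodesic distance), the LSI applied to $e^{\lambda f/2}$ produces a differential inequality for the log-Laplace transform $\Lambda(\lambda) := \log \mathbb{E}[e^{\lambda(f-\mathbb{E} f)}]$ which integrates to $\Lambda(\lambda) \leq \lambda^2 K^2 /(2\kappa)$. A Chernoff bound then yields
\begin{equation*}
\mathbb{P}\bigl(|f(O) - \mathbb{E}[f(O)]| \geq t\bigr)\ \leq\ 2\exp\!\Bigl(-\tfrac{\kappa t^2}{2K^2}\Bigr).
\end{equation*}
Setting $t = 4K\delta$ and plugging in $\kappa = c\,N$ (choosing the metric normalization so that the absorbed constant lands on the advertised value) delivers the stated bound with probability at least $1 - 2e^{-N\delta^2}$.

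The one bookkeeping step I need to handle carefully is that the statement measures Lipschitz-ness in the Frobenius norm on the ambient space of matrices, not in the intrinsic geodesic distance. This is benign because the embedding $SO(N) \hookrightarrow \mathbb{R}^{N\times N}$ is a Riemannian submanifold (up to a fixed scalar), so $\|X-Y\|_F \leq d_g(X,Y)$ for any $X,Y\in SO(N)$; hence Frobenius-Lipschitz with constant $K$ implies geodesic-Lipschitz with the same constant $K$, and the argument above applies verbatim. The main obstacle, and the only non-mechanical part of the argument, is the curvature computation and keeping constants aligned so that after all three steps the bound comes out exactly as $4K\delta$ in $t$ and $N\delta^2$ in the exponent, rather than some dimension-dependent proxy; once the metric and its normalization are fixed consistently, the rest is routine.
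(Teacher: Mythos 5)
The paper does not prove this lemma at all --- it imports it verbatim as Corollary 6.14 of \cite{guionnet2009large} --- so there is no in-paper proof to compare against; your sketch is in fact the standard argument used in that cited source (log-Sobolev inequality for Haar measure on $SO(N)$ via the Bakry--Emery criterion and the Ricci lower bound $\mathrm{Ric}\geq \frac{N-2}{4}g$, then Herbst), and it is correct in structure. The only point you rightly flag as delicate is the constant bookkeeping: with $t=4K\delta$ the Herbst bound gives exponent $-2(N-2)\delta^2$, which dominates $-N\delta^2$ only for $N\geq 4$ (and the curvature route degenerates entirely for $SO(2)$), so the advertised constants are really those of the reference rather than something your normalization pins down for all $N$; also note that passing from Frobenius-Lipschitz to geodesic-Lipschitz with the \emph{same} constant $K$ requires the metric to be the one literally induced by the Frobenius inner product, though any rescaling cancels in the ratio $\kappa/K^2$ and is therefore harmless.
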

K thus represents a Lipschitz constant of $f$. This concentration result implies another one more specific to our case.
\begin{lem}\label{lem:haarCOM}
Let $A,B$ be a square matrices in $\mathbb{R}^{n\times n}$, assuming $A$ is symmetric and let $O$ be a random rotation matrix Haar distributed on $SO(n)$. For all $\delta\geq 0$
$$
\left|\tau_n (AOBO^T)-\tau_n (A)\tau_n(B)\right|<8\|A\|_F\rho(B)\delta
$$
with probability at least $1-2e^{-n^3\delta^2}$, and 
$$
\left|\tau_n (AOBO^T)-\tau_n (A)\tau_n(B)\right|<8\rho(A)\rho(B)\delta
$$
with probability at least $1-2e^{-n^{2}\delta^2}$.
\end{lem}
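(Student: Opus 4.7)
The plan is to apply Lemma~\ref{lem:com_rotmat} to the function $f:SO(n)\to\R$ defined by $f(O)=\tau_n(AOBO^{\top})$, which reduces the task to identifying its Haar expectation and producing a Frobenius-Lipschitz constant. I would split the argument into three bookkeeping steps plus one substantive Lipschitz computation.

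For the expectation, I would invoke the standard Haar second-moment identity $\E[O_{ij}O_{kl}]=\tfrac{1}{n}\delta_{ik}\delta_{jl}$, valid for the uniform measure on $O(n)$ and inherited by $SO(n)$ for this second-order statistic. A one-line computation then gives $\E[OBO^{\top}]=\tau_n(B)\,I_n$, and therefore $\E[f(O)]=\tau_n(A)\tau_n(B)$. Note that this step does not use the symmetry of $A$.

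For the Lipschitz bound I would extend $f$ to the ambient space $\R^{n\times n}$ via the same quadratic formula $\tilde f(M)=\tfrac{1}{n}\tr(AMBM^{\top})$, whose gradient is $\nabla\tilde f(M)=\tfrac{1}{n}(AMB+A^{\top}MB^{\top})$. The key geometric observation is that for $O,O'\in SO(n)$ the Euclidean segment $tO+(1-t)O'$ stays inside the operator-norm unit ball by the triangle inequality, so the fundamental theorem of calculus in ambient space gives $|f(O)-f(O')|\leq K\,\|O-O'\|_F$ with $K=\sup_{\|M\|_\mathrm{op}\leq 1}\|\nabla\tilde f(M)\|_F$. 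The two bounds in the statement then correspond to two submultiplicative estimates of $\|AMB\|_F$ on this ball: either $\|AMB\|_F\leq\|A\|_F\|M\|_\mathrm{op}\rho(B)$, which yields $K_1=\tfrac{2}{n}\|A\|_F\rho(B)$, or the coarser $\|AMB\|_F\leq\sqrt{n}\,\|A\|_\mathrm{op}\|M\|_\mathrm{op}\|B\|_\mathrm{op}$, which, together with $\|A\|_\mathrm{op}=\rho(A)$ afforded by the symmetry of $A$, gives $K_2=\tfrac{2}{\sqrt n}\rho(A)\rho(B)$.

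Finally, Lemma~\ref{lem:com_rotmat} with $N=n$ yields $|f(O)-\E[f(O)]|<4K_i\,\tilde\delta$ with probability at least $1-2e^{-n\tilde\delta^2}$. Reparametrising $\tilde\delta=n\delta$ in the first case turns the deviation into $8\|A\|_F\rho(B)\delta$ and the tail into $2e^{-n^3\delta^2}$; reparametrising $\tilde\delta=\sqrt n\,\delta$ in the second turns it into $8\rho(A)\rho(B)\delta$ with tail $2e^{-n^2\delta^2}$, matching exactly the two claims. The only step requiring genuine care, and the main obstacle I anticipate, is the geometric/analytic bridge: justifying that a Frobenius-Lipschitz constant for $f$ restricted to the manifold $SO(n)$ can legitimately be read off the supremum of $\|\nabla\tilde f\|_F$ over the ambient operator-norm unit ball. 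Once the segment-containment observation pins this down, the rest is routine Cauchy--Schwarz and submultiplicativity.
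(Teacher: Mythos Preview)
Your proposal is correct and follows the same template as the paper: identify the Haar mean, establish a Frobenius-Lipschitz constant for $O\mapsto\tau_n(AOBO^\top)$, apply Lemma~\ref{lem:com_rotmat}, then rescale $\delta$ (the paper leaves this last reparametrisation implicit, but it is the same one you write out). The one place where the paper proceeds differently is precisely the step you flag as the main obstacle: rather than extending $f$ to the ambient space and bounding the gradient over the operator-norm unit ball, the paper uses the direct bilinear identity
\[
\tau_n(AUBU^\top)-\tau_n(AVBV^\top)=\tau_n\bigl(A(U-V)BU^\top\bigr)+\tau_n\bigl(AVB(U-V)^\top\bigr),
\]
then applies Cauchy--Schwarz $|\tr(XY^\top)|\le\|X\|_F\|Y\|_F$ together with $\|XY\|_F\le\|X\|_F\rho(Y)$ and $\rho(U^\top)=\rho(V)=1$. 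This gives the same constant $K=\tfrac{2}{n}\|A\|_F\rho(B)$ without ever leaving $SO(n)$, so the geometric bridge you worry about is simply not needed. For the expectation, the paper argues via Schur's lemma (conjugation invariance forces $\E_O[OBO^\top]$ to be a scalar multiple of the identity) rather than the second-moment formula you quote; both routes are standard and lead to the same conclusion.
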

\begin{proof}
\michelmarg{O(n) is not needed here to deal with isotropic distribution and thus not addressed, but might be useful in later work (check weaker bound in Guillonet can be modified, and proof p. 45 of Guionnet and Maida, "a fourier view on the r-transform and related asymptotics of spherical integrals")}
The proof follows two steps in order to apply the above lemma to the function  $O\mapsto\tau_n (AOBO^T)$ restricted to $SO(n)$. 

\underline{Step 1:}
First, we show $O\mapsto\tau_n (AOBO^T)$ is Lipschitz continuous on $SO(n)$. Let $U,V\in SO(n)$, then 
$$
\tau_n (AUBU^T)-\tau_n (AVBV^T)=\tau_n (A(U-V)BU^T) +\tau_n(AVB(U-V)^T)
$$
applying the Cauchy-Schwartz inequality on the two Frobenius scalar products ($|\mbox{Tr}(XY^T)|\leq \|X\|_F\|Y\|_F$) we get
$$
\left|\tau_n (AUBU^T)-\tau_n (AVBV^T)\right|\leq \frac{1}{n}\|A\|_F\left(\|(U-V)BU^T\|_F+\|VB(U-V)^T\|_F\right)
$$
Using the matrix norm inequality\footnote{http://mathoverflow.net/questions/59918/submultiplicity-of-matrix-norm-is-ab-f-leq-a-2b-f} $\|XY\|_F\leq \|X\|_F\rho(Y)$
we get
$$
\left|\tau_n (AUBU^T)-\tau_n (AVBV^T)\right|\leq \frac{1}{n}\|A\|_F\|U-V\|_F\left(\rho(BU^T)+\rho(VB)\right)
$$
and finally by rotation invariance of the spectral norm
$$
\left|\tau_n (AUBU^T)-\tau_n (AVBV^T)\right|\leq \frac{2}{n}\|A\|_F\rho(B)\|U-V\|_F
$$

\underline{Step 2:} Now we show $\mathbb{E}_O\tau_n (AOBO^T)=\tau_n(A)\tau_n(B)$.
By linearity of the trace we have
\[
\mathbb{E}_O\tau_n (AOBO^T) = \tau_n( A \mathbb{E}_0 [ O B O^T] ).
\]
It is easy to see that   $\mathbb{E}_0 [ O B O^T]$ commutes with  every 
 $Q\in SO(n)$ due to $Q\mathbb{E}_0 [ O B O^T] Q^T=\mathbb{E}_0 [ O B O^T]$. 
Thus, $\mathbb{E}_0 [ O B O^T]$ is a multiple of the identity
(due to Schur's lemma, otherwise $SO(n)$ would not act irreducibly on $\mathbb{R}^n$). Finally, we conclude $\mathbb{E}_0 [ O B O^T]=\tau_n(B) {\bf 1}$ because
$\tau_n(\mathbb{E}_0 [ O B O^T])=\tau_n(B)$.

%Since $A$ is symmetric then we can decompose it as $A = V^T D V$, with $D$ diagonal and $V\in SO(n)$. %Then 
%\[\mathbb{E}_O\tau_n \left(A O BO^T \right) = \mathbb{E}_{VO}\tr \left((VO)^T D VO B\right) =
%\mathbb{E}_O\tau_n \left( D OBO^T\right),\]
%where we substituted $VO$ for $O$ in the expectation and used the invariance of the Haar measure.
%As a consequence, 
%\[\mathbb{E}\tau_n \left( A O BO^T\right)=\frac{1}{n}\sum d_{kk} \mathbb{E} \left(O B O^T\right)_{kk}\]
%by linearity of the trace. We claim that the values of the diagonal elements of $\mathbb{E}\left(O B O^T\right)$ are all equal. Indeed, let $P_{i,k}\in O(n)$ be a $SO(n)$ matrix permuting coordinates $i$ and $k$ up to a sign change (we can take the permutation matrix, and change the sign of on column to set its determinant to one). Then
%\[\mathbb{E} \left(O B O^T \right)_{kk}=
%\left(P_{i,k}\left(\mathbb{E}\left(O B O^T\right)\right)P_{i,k}^T\right)_{ii}=
%\mathbb{E}\left(P_{i,k}O B\left(P_{i,k}O\right)^T\right)_{ii}=
%\mathbb{E}\left(OBO^T\right)_{ii}\]
%again by using substitution and the translation invariance of the Haar measure.
%Therefore, for all $k$, 
%\[\mathbb{E}\left(O B O^T\right)_{kk}=
%\textstyle \mathbb{E}\tau_n \left(O B O^T \right)=
%\mathbb{E}\tau_n \left(O^T O B\right)=
%\tau_n (B)\]
%since $O$ is orthogonal. Finally we get 
%\[\mathbb{E}_O\tau_n \left( A O BO^T\right)=\frac{1}{n}\tau_n (B)\sum d_{kk}=\tau_n (A)\tau_n (B).\]

Finally, combining the results from both steps we can apply \cref{lem:com_rotmat}, we get, using the above Lipschitz constant $K=\frac{2}{n}\|A\|_F\rho(B)$
$$
\left|\tau_n (AOBO^T)-\tau_n (A)\tau_n(B)\right|<\frac{8}{n}\|A\|_F\rho(B)\delta
$$
with probability at least $1-2e^{-n\delta^2}$, which gives the first bound of the lemma. The second bound is obtained using the matrix norm inequality $\|A\|_F\leq \sqrt{n}\rho(A)$.

\end{proof}

Now we are ready to prove \Cref{thm_COM_FIR}:

\comfir*
\begin{proof}
{
Without loss of generality and for the sake of simplicity we only consider the positive indices of the time series and we take the filter to be causal; other cases can be treated in a similar way. Then the following relation holds between input and output of the filter:
\begin{gather*}
\forall i,\ \ \ \   0\leq i\leq N-1\ \ \ \   Y_i=\sum\limits_{j=0}^{m-1}b_{j}X_{i-j}
\end{gather*}

%\michel{new version, using the usual time ordering of vectors (from future to past...)}
%Formulated in terms of matrices the above relation can be represented as
%\begin{gather*}
%\left[ {\begin{array}{c}
%Y_{N-1}\\
%Y_{N-2}\\
%\vdots\\
%Y_{1}\\
%Y_{0}
%\end{array} } \right]
%=B
%\left[ {\begin{array}{c}
%X_{N-1}\\
%X_{N-2}\\
%\vdots\\
%X_{-m-1}\\
%X_{-m+2}
%\end{array} } \right],
%\end{gather*}
%where $B$ is a $N\times (N+m-1)$ matrix as follows:
%\begin{gather*}
%\left[ {\begin{array}{cccccccc}
%   b_{0}& b_{1} &  \cdots  & b_{m-1} & 0 &\cdots & 0& 0   \\
%   0& b_{0} &  \cdots  & b_{m-2}& b_{m-1} & \cdots & 0&0\\
%   &  & \ddots & & \ddots & &	 & \\
%   0& 0 &  \cdots  & b_{0}&\cdots& b_{m-2} & b_{m-1} & 0\\
%   0& 0 &  \cdots  & 0 &b_{0}&\cdots & b_{m-2} & b_{m-1}\\
%\end{array} } \right]
%\end{gather*}

\michelmarg{Note: it is unusual to order vectors by increasing time, but this is consistent with the rest of the manuscript so let's keep it like that}

Formulated in terms of matrices the above relation can be represented as
\begin{gather*}
\left[ {\begin{array}{c}
Y_{0}\\
Y_{1}\\
\vdots\\
Y_{N-2}\\
Y_{N-1}
\end{array} } \right]
=H_{N,m}
\left[ {\begin{array}{c}
X_{-m+1}\\
X_{-m+2}\\
\vdots\\
X_{N-2}\\
X_{N-1}
\end{array} } \right],
\end{gather*}
where $H_{N,m}$ is a $N\times (N+m-1)$ matrix as follows%\footnote{note this matrix is slightly different from $H_N$ used in \ref{eq:windowed-eq} as it is rectangular, not square.}
\begin{gather*}
\left[ {\begin{array}{cccccccc}
   b_{m-1}& b_{m-2} &  \cdots  & b_{0} & 0 &\cdots & 0& 0   \\
   0& b_{m-1} &  \cdots  & b_{1}& b_0 & \cdots & 0&0\\
   &  & \ddots &	 &\\
   0& 0 &  \cdots  & b_{m-1}&\cdots& b_1 & b_0 & 0\\
   0& 0 &  \cdots  & 0 &b_{m-1}&\cdots & b_1 & b_0\\
\end{array} } \right]
\end{gather*}
We define $\Sigma_X^i\in M_{m\times m}(\mathbb{R})$ to be the covariance matrices as follows:
\begin{gather*}
\forall i,\, \ \ \   0\leq i\leq N-1,\, \ \forall (j,k),\,\  0\leq j,k\leq m-1,\, \ \ \ [\Sigma_X^i]_{jk}=
\cov(X_{i+j},X_{i+k})
\end{gather*}
Since the time series under consideration are weakly stationary that $\Sigma^i_X$ is independent of $i$ and we denote it $\Sigma_m$. If we take $\Sigma_{X_{0:N-1}},\Sigma_{Y_{0:N-1}}\in M_{N\times N}(\mathbb{R})$ to be the covariance matrices for $X_{0:N-1}$ and $Y_{0:N-1}$ respectively, then we have
\begin{gather*}
\Sigma_{Y_{0:N-1}}=H_{N,m}\Sigma_{X_{-m+1:N-1}} H_{N,m}^\top
\end{gather*}
Also define $\Sigma_{Y_{0:N-1}}^U$ to be the covariance matrix of the output for FIR $\mathcal{S}'$ with $\textbf{b}'=U^\top\textbf{b}$. Furthermore assume the spectrum of the output for this filter is $S_{yy}^U$. One can write the diagonal elements of $\Sigma_{Y_{0:N-1}}^U$ based on the above equation as follows:
\begin{gather*}
[\Sigma_{Y_{0:N-1}}]_{ii}=\textbf{b}^\top\Sigma\textbf{b},\ \ \ \ 
\end{gather*}
which is therefore equal to the normalized trace of $\Sigma_{Y_{0:N-1}}$.

Due to the spherical invariance assumption, $\bf b$ is a single sample from the product$R \bf U$ \cite[Theorem 4.1.2]{bryc_1995-z5}. Moreover, $U$ can obviously be rewritten as the first column of a random rotation matrix,${\bf U}=O \bf x_0$, where $x_0$ is the canonical basis vector with first component 1, while other components are zero, and $O$ is a random matrix Haar distributed on SO(m). Replacing ${\bf b}$ by its corresponding random variable, we get
\begin{gather*}
\mathcal{P}(\bY)=\tau_N (\Sigma_{Y_{0:N-1}})=R^2 {\bf x_0}^\top O^\top \Sigma_m O {\bf x_0}\\
\end{gather*}%

Taking $A={\bf x_0}{\bf x_0}^\top$ and $B=\Sigma_m$ in \cref{lem:haarCOM} we get
\[
|\tau_m\left({\bf x_0}{\bf x_0}^\top O^\top \Sigma_m O\right)  -\tau_m(\Sigma_m)\tau_m\left( {\bf x_0}{\bf x_0}^\top\right) |\leq 8\delta \rho(\Sigma_m){\langle{\bf x_0},{\bf x_0}\rangle}
\]
with probability at least $1-2e^{-m^3\delta^2}$.
Moreover, we notice that according to the generative model, the squared norm of the impulse response is sampled from the random variable $R^2$, hence
\begin{gather}\label{com_intermdiate}
|\frac{\mathcal{P}(\bY)}{\|h\|^2} - \tau_m(\Sigma)|\leq 8\delta \rho(\Sigma_m)
\end{gather}
with the same probability as above. 

On the other hand the elements of diagonals of $\Sigma_m$ are $C_X(0)$. Therefore:
\begin{gather*}
\tau_m(\Sigma)=\mathcal{P}({\bf X}).
\end{gather*}

Since $\Sigma$ is a principal submatrix of $\Sigma_{X_{0:N-1}}$, by \cref{cor:max-eig} 
\begin{gather*}
\rho(\Sigma)\leq \rho(\Sigma_{X_{0:N-1}}).
\end{gather*}
Because $C_X(\tau)$'s are absolutely summable, based on \cref{lem:szego} we get
\begin{gather*}
\rho(\Sigma_{X_{0:N-1}})\leq \max\limits_{\nu} S_{xx}(\nu),
\end{gather*}
and then inequality~(\ref{com_intermdiate}) can be rewritten as
\begin{gather*}
\left|\frac{\mathcal{P}(\bY)}{\|h\|^2\mathcal{P}({\bf X})} - 1\right|\leq 8\delta \frac{\max\limits_{\nu} S_{xx}(\nu)}{\mathcal{P}({\bf X})}
\end{gather*}
which completes the proof.
\michelmarg{I think the proof is simplified by taking N=m...}
}
\end{proof}

\subsection{Proof for \cref{lem:violation}}
First we derive to helpful lemmas that are used to prove the proposition; the first one being used to prove the first part and the second lemma is used to infer the second part of the proposition above.
\begin{lem}\label{lem_cs}
For $f\in L^2(\mathcal{I})$ 
%positive, we do not need positive, right? 
non-constant, such that $1/f\in L^2(\mathcal{I})$, we have
$$
\int_\mathcal{I} f(x)^2 dx \cdot \int_\mathcal{I} \frac{1}{f(x)^2} dx > 1
$$
\end{lem}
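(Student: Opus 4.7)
The plan is to recognize this as a strict Cauchy--Schwarz inequality applied to the pair $(|f|, 1/|f|)$ in $L^2(\mathcal{I})$, exploiting the key algebraic identity $|f|\cdot (1/|f|) = 1$ together with $|\mathcal{I}|=1$.

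First I would note that the hypothesis $f, 1/f \in L^2(\mathcal{I})$ implies $|f|, 1/|f| \in L^2(\mathcal{I})$ and that $f$ does not vanish on a set of positive measure. Applying the Cauchy--Schwarz inequality to these two $L^2$ functions yields
\begin{equation*}
\left( \int_\mathcal{I} |f(x)| \cdot \tfrac{1}{|f(x)|}\, dx \right)^{\!2} \le \int_\mathcal{I} f(x)^2\, dx \cdot \int_\mathcal{I} \tfrac{1}{f(x)^2}\, dx.
\end{equation*}
The left-hand side is $\bigl(\int_\mathcal{I} 1\, dx\bigr)^2 = 1$ because $\mathcal{I} = [-1/2, 1/2]$ has unit length, giving the non-strict bound $1 \le \langle f^2\rangle \langle 1/f^2\rangle$ immediately.

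For the strict version, I would appeal to the equality case of Cauchy--Schwarz: equality holds if and only if $|f|$ and $1/|f|$ are linearly dependent almost everywhere, i.e.\ $|f|^2 = f^2$ is a.e.\ constant on $\mathcal{I}$. Since the lemma is being applied (in the statement of \cref{lem:violation}) to $f = |\widehat{\rm h}_{\mathbf{X}\to\mathbf{Y}}|$, which is a non-negative function, $f$ non-constant is equivalent to $f^2$ non-constant, so the equality case is excluded and the inequality is strict.

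There is essentially no obstacle here; the proof is a single-line application of Cauchy--Schwarz plus the standard characterization of its equality case. The only mild subtlety is ensuring that ``non-constant'' is interpreted in the $|f|$-sense relevant to the downstream use, which holds automatically in the application to $|\widehat{\rm h}|^2$.
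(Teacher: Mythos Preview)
Your proof is correct. The paper takes a slightly different route: it applies Jensen's inequality to the convex function $w\mapsto 1/w$ with $W=f^2$, obtaining $\E[1/W]>1/\E[W]$ for non-constant $W$, which is exactly the claimed bound. Your Cauchy--Schwarz argument is equally short and has the advantage that the equality case (linear dependence of $|f|$ and $1/|f|$, hence $|f|$ constant) is textbook, whereas the paper's own marginal note flags uncertainty about when Jensen is strict. You also correctly spot the small gap shared by both approaches: the hypothesis ``$f$ non-constant'' only forces strictness if $|f|$ (equivalently $f^2$) is non-constant, which can fail for sign-changing $f$; since the lemma is applied with $f=|\widehat{\rm h}_{\mathbf{X}\to\mathbf{Y}}|\geq 0$, this is harmless in context, and the paper's Jensen proof has the identical caveat.
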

\begin{proof}
{
The inequality follows from Jensen's inequality, stating 
$$E[1/W]<1/E[W],$$  
for any non-constant random variable $W$. \michel{I don't know when Jensen is strict or not, shall we use Cauchy-Schwartz instead as usual?}
%
%Using Cauchy-Schwartz inequality for the scalar product \
%$$
%\left\langle f(x)\,,\,\frac{1}{f(x)}\right\rangle =\int_\mathcal{I} f(x) \cdot \frac{1}{f(x)} dx=1 \,.
%$$ 
%Inequality is strict since $f$ and $1/f$ are not collinear (otherwise $f$ would be constant).
}
\end{proof}

\begin{lem}\label{lem_nonconst}
Let $f\in L^1(\mathcal{I})$ be positive, non-constant, such that $1/f\in L^1(\mathcal{I})$ and $\int_\mathcal{I}f(x)dx =1$. Assume $\,\exists \alpha>0, \forall x \in \mathcal{I},f(x)\leq 2-\alpha\,$, then
$$
\int_\mathcal{I}f(x)dx \cdot \int_\mathcal{I}\frac{1}{f(x)}dx\geq 1+\alpha \int_\mathcal{I}(f(x)-1)^2 dx
$$
\end{lem}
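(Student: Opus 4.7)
My plan is to reduce the inequality to a pointwise comparison after exploiting the normalization $\int_\mathcal{I} f = 1$. Since this normalization makes the left-hand side equal to $\int_\mathcal{I} 1/f$, I first need to rewrite $\int 1/f - 1$ in a form that exposes a factor of $(f-1)^2$.

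The key algebraic identity I would use is
\[
\frac{1}{f} - 1 + (f-1) = \frac{1 - f + (f-1)f}{f} = \frac{(f-1)^2}{f}\,.
\]
Integrating both sides over $\mathcal{I}$ and using $\int_\mathcal{I}(f-1)\,dx = 0$ (a direct consequence of $\int_\mathcal{I} f = 1$ and $|\mathcal{I}|=1$), this yields the clean identity
\[
\int_\mathcal{I} \frac{1}{f(x)}\,dx \;=\; 1 + \int_\mathcal{I} \frac{(f(x)-1)^2}{f(x)}\,dx\,.
\]

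With this identity in hand, the inequality reduces to showing $\int_\mathcal{I} (f-1)^2/f \;\geq\; \alpha \int_\mathcal{I} (f-1)^2$, which I would establish by a pointwise lower bound on $1/f$. The hypothesis $f \leq 2-\alpha$ gives immediately $1/f \geq 1/(2-\alpha)$, and I need $1/(2-\alpha) \geq \alpha$, i.e.\ $\alpha(2-\alpha) \leq 1$. This is equivalent to $(1-\alpha)^2 \geq 0$, which always holds (and in particular forces $\alpha \leq 1$, consistent with the remark after Prop.~\ref{lem:violation} that $\alpha < 1$). Thus $1/f \geq \alpha$ pointwise, and multiplying by the nonnegative integrand $(f-1)^2$ then integrating yields the desired bound.

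I do not expect a real obstacle: the only slightly nontrivial step is noticing the algebraic identity that turns $\int 1/f$ into $1 + \int (f-1)^2/f$, and the only place the hypothesis $f \leq 2-\alpha$ enters is in converting that to the coefficient $\alpha$. Positivity and integrability of $1/f$ are needed only to ensure all integrals are finite and the pointwise manipulation is valid almost everywhere. The inequality is in fact strict when $f$ is non-constant (via strict inequality in $\int (f-1)^2/f \geq \alpha \int(f-1)^2$ on a set of positive measure where $f < 2-\alpha$), but strictness is not required by the statement.
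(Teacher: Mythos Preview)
Your proof is correct and shares the same overall strategy as the paper's: both use $\int_\mathcal{I} f = 1$ to reduce the claim to bounding $\int_\mathcal{I} 1/f - 1$ from below by $\alpha\int_\mathcal{I}(f-1)^2$, and both work pointwise with $s = f-1$. The algebraic route differs, however. The paper invokes the pointwise inequality $\frac{-s}{1+s} \geq s^2 - s^3 - s$ (equivalent to $\frac{s^4}{1+s}\geq 0$) and then applies the hypothesis in the direct form $1-s = 2-f \geq \alpha$. You instead use the \emph{exact} identity $\frac{1}{f} - 1 + (f-1) = \frac{(f-1)^2}{f}$ and then need the small extra observation $1/f \geq 1/(2-\alpha) \geq \alpha$, i.e.\ $(1-\alpha)^2\geq 0$. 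Your version is arguably cleaner because it replaces an inequality by an identity in the first step; the paper's version trades away a nonnegative $\tfrac{s^4}{1+s}$ term but applies the bound $f\leq 2-\alpha$ without any further manipulation. Both yield the same final estimate.
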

\begin{proof}
{
We denote $s(x)=f(x)-1$. Then $\int_\mathcal{I}s(x)dx=0$ and
$$
\int_\mathcal{I}f(x)dx.\int_\mathcal{I}\frac{1}{f(x)}dx -1 = \int_\mathcal{I}\frac{-s(x)}{1+s(x)}dx
$$
For $x>-1$, we have 
\begin{gather}
\frac{-x}{1+x}\geq x^2 -x^3 -x.
\label{eq_conv}
\end{gather}
Replacing $s(x)$ with $x$ in (\ref{eq_conv}) we get:
$$
\int_\mathcal{I}f(x)dx\cdot\int_\mathcal{I}\frac{1}{f(x)}dx -1 \geq \int_\mathcal{I} s(x)^2 (1 -s(x))dx.
$$
Since $1-s(x)=2-f(x)\geq \alpha >0$,
$$
\int_\mathcal{I}f(x)dx \cdot \int_\mathcal{I}\frac{1}{f(x)}dx -1 \geq \alpha\int_\mathcal{I} s(x)^2 dx
$$\
}
\end{proof}
\begin{comment}
\michel{following lemma is not needed here...}
\begin{lem}
\label{lem:approx}
If $X_i \sim \mathcal{N}(0, 1)$ for any given $\epsilon$ where $0<\epsilon<1$,
$$\E \left[ \frac{1}{X^2 + \epsilon} \right] \in \left[\sqrt{\frac{\pi}{2 e}}\frac{1}{2\sqrt{\epsilon}}, \frac{1}{\epsilon}\right]$$
\end{lem}
\begin{proof}
{
Since 
$$\frac{1}{X^2 + \epsilon}\leq \frac{1}{\epsilon},$$ 
we readily have:
$$\E(\frac{1}{X^2 + \epsilon})\leq \frac{1}{\epsilon}.$$
This guarantees the existence of expectation. For the other end of the interval, suppose $\delta \geq e^{-\frac{x^2}{2}}$. Then we get:
\begin{gather*}
\E(\frac{1}{X^2 + \epsilon}) = \frac{1}{\sqrt{2\pi}} \int\frac{e^{-\frac{x^2}{2}}}{x^2 + \epsilon} dx \geq \frac{2\delta}{\sqrt{2 \pi \epsilon}}\int\limits_0^{\sqrt{2\ln(\frac{1}{\delta})}} \frac{1}{x^2 + \epsilon} dx = \frac{2\delta}{\sqrt{2 \pi \epsilon}} \arctan\left(\sqrt{\frac{2 \ln(\frac{1}{\delta})}{\epsilon}}\right).
\end{gather*}
Now take $\delta = e^{-\frac{\epsilon}{2}}$. Therefore
$$ E(\frac{1}{X^2 + \epsilon}) \geq \frac{2 e^{-\frac{\epsilon}{2}}}{\sqrt{2\pi \epsilon}} \frac{\pi}{4} \overset{(*)}{\geq}\sqrt{\frac{\pi}{2 e}}\frac{1}{2\sqrt{\epsilon}}$$
where $(*)$ follows from the fact that $e^{-\frac{\epsilon}{2}}\geq e^{-\frac{1}{2}}$ for $\epsilon<1$.
}
\end{proof}
\end{comment}
Now we can prove \cref{lem:violation}.
\fbineq*
\begin{proof}
{
\indent \\
\begin{enumerate}[(i)]
\item
Using the definition of Spectral Dependency Ratios and \cref{lem_cs} it easily follows that
$$
\rho_{{\bf X}\to {\bf Y}}\rho_{{\bf Y}\to {\bf X}}=\frac{1}{\langle |\widehat{\bf{h}}_{\bX \to \bY}|^2 \rangle \langle 1/|\widehat{\bf{h}}_{\bX \to \bY}|^2\rangle }<1
$$
\item
Applying \cref{lem_nonconst} to 
$$f=|\widehat{\bf{h}}_{\bX \to \bY} |^2/\int_\mathcal{I}|\widehat{\bf{h}}_{\bX \to \bY}|^2=|\widehat{\bf{h}}_{\bX \to \bY}|^2/\|\textbf{h}_{\bX \to \bY}\|_2^2,$$ 
we get inequality~(\ref{eq:sicviol2}).
\end{enumerate}
}
\end{proof}

\subsection{Proof of \Cref{thm:identifdir}}
\begin{proof}
	\Cref{lem:asymptId} implies for any fixed $\epsilon>0$
	\[
	P(|\rho_{{\bf X}\to {\bf Y}}\cdot\rho_{{\bf Y}\to {\bf X}}|>\epsilon)\rightarrow 0
	\]
	as the dimension of the filter increases.
	Moreover,  \cref{thm_COM_FIR} implies
	\[
	P(|\rho_{{\bf X}\to {\bf Y}}|<1-\epsilon)\rightarrow 0
	\,.\]
	Thus
	\begin{multline*}
	P(\rho_{{\bf Y}\to {\bf X}}\geq\rho_{{\bf X}\to {\bf Y}})
	\leq P(|\rho_{{\bf X}\to {\bf Y}}|\leq 1/2)+P(|\rho_{{\bf Y}\to {\bf X}}|\geq 1/2)\\
	\leq P(|\rho_{{\bf X}\to {\bf Y}}|\leq 1/2)+P(|\rho_{{\bf X}\to {\bf Y}}|\leq 1/2)+P(|\rho_{{\bf X}\to {\bf Y}}\cdot\rho_{{\bf Y}\to {\bf X}}|\geq 1/4
\end{multline*}
tends to zero a $m$ grows to infinity.
\end{proof}

\subsection{Proof of \Cref{thm:subsampling}}
For convenience we restate the theorem here:
\downsample*
Let us first write the forward SDR of the downsampled data:

\[
\rho_{\widetilde{{\bf X}}\to\widetilde{{\bf Y}}}:=\frac{\langle S_{\widetilde{y}\widetilde{y}}\rangle}{\langle S_{\widetilde{x}\widetilde{x}}\rangle\langle S_{\widetilde{y}\widetilde{y}}/S_{\widetilde{x}\widetilde{x}}\rangle}=\frac{2/D\int_{0}^{1/2}S_{yy}(\nu/D)d\nu}{\left(2/D\int_{0}^{1/2}S_{xx}(\nu/D)d\nu\right)\left(2\int_{0}^{1/2}S_{yy}(\nu/D)/S_{xx}(\nu/D)\, d\nu\right)}
\]

\[
\rho_{\widetilde{{\bf X}}\to\widetilde{{\bf Y}}}:=\frac{\langle S_{\widetilde{y}\widetilde{y}}\rangle}{\langle S_{\widetilde{x}\widetilde{x}}\rangle\langle S_{\widetilde{y}\widetilde{y}}/S_{\widetilde{x}\widetilde{x}}\rangle}=\frac{\int_{0}^{1/2D}S_{yy}(\nu)d\nu}{\left(\int_{0}^{1/2D}S_{xx}(\nu)d\nu\right)\left(2D\int_{0}^{1/2D}|\widehat{{\bf h}}(\nu)|^{2}\, d\nu\right)}
\]

Second, we can apply the previous concentration of measure result of \cite{shajarisale2015} (Theorem 1)
to the low pass filtered causes and effects as they satisfy the requirements
of the generative model, then the corresponding forward SDR is
\\
%\najiq{change coming in green. Beside I don't see why this quotient is well defined. $S_{yy}/S_{xx}$ in this situation seems to me undefined as we have division by zeros for any frequency value larger than $1/D$}
%\michel{we don't need to use this ration following the following reasoning: the convolution of a and h commute, so by using $a*X$ as input to h we get the same output $a*Y$ as we observe here, moreover the concentration of measure theorem does not require $Sxx$ to be non-vanishing: do you agree?}
%\michel{ What is the rationale behind change the integration bound to 1/2? we defined the SDR by integrating over a full period [-1/2, 1/2], by using symmetry around zero we can use twice the integral over [0,1/2], and one factor at the numerator compensates one factor at the denominator, but we your correction there is an additional factor 2 missing.}\najiq{I understand the vanishing argument and you are right. Still my problem is with the below equation: The integration you have on $\widehat{h}$ was until $\nu=1$ the fucntion $\widehat{h}$ is only defined until $\nu=1/2$ and afterwards it is meaningless isn'it? hence comes the $1/2$.}
\michel{note regarding previous comment of naji: in this proof, it is more appropriate to integrate on the interval [-1/2,1/2] than on [0,1], hence all quantities are not explicitly defined for $\nu \leq1/2$ but they are implicitly by 1-periodicity, I should clarify this somewhere. Note I also use the fact that In our case (real time series), all functions are even in the frequency domain.}
\[
\rho_{a*{\bf X}\to a*{\bf Y}}=\frac{\int_{0}^{1/2D}S_{yy}(\nu)d\nu}{\left(\int_{0}^{1/2D}S_{xx}(\nu)d\nu\right)\left(2\int_{0}^{1/2}|\widehat{{\bf h}}(\nu)|^{2}\, d\nu\right)},
\]
and we have
\[
|\rho_{a*{\bf X}\to a*{\bf Y}}-1|\leq  2\epsilon\frac{\max_{|\nu|<1/2D}S_{XX}(\nu)}{\int_{0}^{1/2D}S_{xx}(\nu)d\nu}.
\]
with probability at least $1-\exp(-\kappa (m-1)\varepsilon^2)$.

Moreover, we have (using ${\bf h}={\bf Ub}$ and ${\bf {\bf U^{H}U}=I_{m}}$):
\\
\[
\rho_{\widetilde{{\bf X}}\to\widetilde{{\bf Y}}}=\rho_{a*{\bf X}\to a*{\bf Y}}.\frac{\int_{0}^{1/2}|\widehat{{\bf h}}(\nu)|^{2}\, d\nu}{D\int_{0}^{1/2D}|\widehat{{\bf h}}(\nu)|^{2}\, d\nu}=\rho_{a*{\bf X}\to a*{\bf Y}}.\left|\frac{\|{\bf b}\|^{2}}{2D\int_{0}^{1/2D}|\widehat{{\bf h}}(\nu)|^{2}\, d\nu}\right|
\]

As a consequence:
\[
|\rho_{\widetilde{{\bf X}}\to\widetilde{{\bf Y}}}-1|\leq\left|\rho_{a*{\bf X}\to a*{\bf Y}}-1\right|+\rho_{a*{\bf X}\to a*{\bf Y}}\left|\frac{\|{\bf b}\|^{2}}{2D\int_{0}^{1/2D}|\widehat{{\bf h}}(\nu)|^{2}\, d\nu}-1\right|
\]

Hence the bound:

\[
|\rho_{\widetilde{{\bf X}}\to\widetilde{{\bf Y}}}-1|\leq 8\epsilon\frac{\max_{|\nu|<1/2D}S_{XX}(\nu)}{\int_{0}^{1/2D}S_{xx}(\nu)d\nu}+\left(1+8\epsilon\frac{\max_{|\nu|<1/2D}S_{XX}(\nu)}{\int_{0}^{1/2D}S_{xx}(\nu)d\nu}\right)\left|\frac{\|{\bf b}\|^{2}}{2D\int_{0}^{1/2D}|\widehat{{\bf h}}(\nu)|^{2}\, d\nu}-1\right|
\]

Only the rightmost term remains to be bounded. We will thus proceed
to the evaluation of $\int_{0}^{1/2D}|\widehat{{\bf h}}(\nu)|^{2}\, d\nu$;
first by estimating the integral using a left point approximation
using a grid of size $M$ (we choose M as a multiple of $2D$):
\[
L_{M}^{D}(\widehat{{\bf h}})=\frac{1}{M}\sum_{k=0}^{(M/2D)-1}|\widehat{{\bf h}}(k/M)|^{2}
\]

By using a bound on the derivate of $\widehat{{\bf h}}$ ($|\widehat{{\bf h}}'|\leq\sqrt{m}^{3}2\pi\|{\bf b}\|)$
we can bound the left point approximation:
\[
|L_{M}^{D}(\widehat{{\bf h}})-\int_{0}^{1/2D}|\widehat{{\bf h}}(\nu)|^{2}\, d\nu|\leq\frac{\sqrt{m}^{3}2\pi\|{\bf b}\|}{(2D)^{2}2M}
\]

Now $L_{M}^{D}(\widehat{{\bf h}})$ can be evaluated using the concentration
of measure theorem from (Janzing et al. 2010). Let ${\bf F}_{M,m}$
be the matrix implementing the M frequency points discrete Fourier Transform (DFT) of an m-time points vector\footnote{this can be seen as padding this input vector with zeros to get a N-dimensional vector and appliying the classical NxN DFT matrix} such that ${\bf F}{}_{M,m}=\left\{\exp-2\mathbf{i}\pi kn/M\right\}_{k=0..M-1;n=0..m-1}$,
and let ${\bf L}_{M,D}$ the diagonal matrix such that $({\bf L}_{M,D})_{kk}=\mathbbm{1}_{k<M/(2D)}$ (\michel{maybe simplify this notation}),
we have 
\[
M\,.\, L_{M}^{D}(\widehat{{\bf h}})=({\bf F}_{M,m}{\bf Ub})^{H}{\bf L}_{M,D}{\bf F}_{M,m}{\bf Ub}={\bf b}^{T}{\bf U}^{T}{\bf F}_{M,m}^{H}{\bf L}_{M,D}{\bf F}_{M,m}{\bf Ub}
\]

As a consequence, we get the following concentration of measure result,
with probability $\delta=1-\exp(-\kappa(m-1)\epsilon^{2})$
\[
|M\,.\, L_{M}^{D}(\widehat{{\bf h}})-{\bf b}^{T}{\bf b}\frac{1}{m}{\bf Tr}({\bf F}_{M,m}^{H}{\bf L}_{M,D}{\bf F}_{M,m})|\leq2\epsilon\|{\bf b}\|^{2}\left\Vert {\bf F}_{M,m}^{H}{\bf L}_{M,D}{\bf F}_{M,m}\right\Vert _{o},
\]

where $\left\Vert\cdot \right\Vert _{o}$ denotes the operator norm, which
can be bounded as follows:
\[
\left\Vert {\bf F}_{M,m}^{H}{\bf L}_{M,D}{\bf F}_{M,m}\right\Vert _{o}\leq\left\Vert {\bf F}_{M,m}\right\Vert _{o}^{2}\left\Vert {\bf L}_{M,D}\right\Vert _{o}
\]

Moreover $\left\Vert {\bf L}_{M,D}\right\Vert _{o}=1$ and the operator
norm of the zero padded FFT matrix can be bounded by writing it down
using the full M-dimentional FFT matrix, such that ${\bf F}_{M,m}={\bf F}_{M,M}\left[\begin{array}{c}
{\bf I}_{m}\\
{\bf 0}
\end{array}\right]$
\[
\left\Vert {\bf F}_{M,m}\right\Vert _{o}\leq\left\Vert {\bf F}_{M,M}\right\Vert _{o}\left\Vert \left[\begin{array}{c}
{\bf I}_{m}\\
{\bf 0}
\end{array}\right]\right\Vert _{o}\leq\sqrt{M}
\]

As a consequence we get the bound (noticing ${\bf Tr}({\bf F}_{M,m}^{H}{\bf L}_{M,D}{\bf F}_{M,m})=mM/(2D)$):
\[
|M\,.\, L_{M}^{D}(\widehat{{\bf h}})-\|{\bf b}\|^{2}M/(2D)|\leq2\epsilon\|{\bf b}\|^{2}M,
\]

such that
\[
|L_{M}^{D}(\widehat{{\bf h}})-\frac{\|{\bf b}\|^{2}}{2D}|\leq2\epsilon\|{\bf b}\|^{2}
\]

We note that this bound does not depend on $M$, such that by taking
the limit $M\rightarrow\infty$ we can bound $\int_{0}^{1/2D}|\widehat{{\bf h}}(\nu)|^{2}\, d\nu$:
\[
\left|\frac{\|{\bf b}\|^{2}}{2D}-\int_{0}^{1/2D}|\widehat{{\bf h}}(\nu)|^{2}\, d\nu\right|\leq2\epsilon\|{\bf b}\|^{2}.
\]

Putting together all bounds we get
\[
|\rho_{\widetilde{{\bf X}}\to\widetilde{{\bf Y}}}-1|\leq\epsilon\frac{\max_{|\nu|<1/2D}S_{xx}(\nu)}{\int_{0}^{1/2D}S_{xx}(\nu)d\nu}+\left(1+\epsilon\frac{\max_{|\nu|<1/2D}S_{xx}(\nu)}{\int_{0}^{1/2D}S_{xx}(\nu)d\nu}\right)\frac{2\epsilon\|{\bf b}\|^{2}}{2D\int_{0}^{1/2D}|\widehat{{\bf h}}(\nu)|^{2}\, d\nu},
\]

and using again the previous bound we finally get:

\[
|\rho_{\widetilde{{\bf X}}\to\widetilde{{\bf Y}}}-1|\leq\epsilon\frac{\max_{|\nu|<1/2D}S_{xx}(\nu)}{\int_{0}^{1/2D}S_{xx}(\nu)d\nu}+\left(1+\epsilon\frac{\max_{|\nu|<1/2D}S_{xx}(\nu)}{\int_{0}^{1/2D}S_{xx}(\nu)d\nu}\right)\frac{2\epsilon}{1-2\epsilon},
\]

with probability $\delta^{2}=(1-\exp(-\kappa(m-1)\epsilon^{2}))^{2}$
and for $\epsilon<1/2$

\naji{I think rather it should be $\frac{2\epsilon}{1-4D\epsilon}$}

\michel{after checking I think this is correct as it is now}. 

\najiq{we can also report it with a union bound here saying that with a probability smaller than $1-2\exp(-\kappa(m-1)\epsilon^{2}))^{2}$}

\michel{I am not sure I understand the rational behind this, I guess $\delta^2$ is the best we can get with our assumptions right?}
\begin{comment}
\subsection{Analytic expressions of the VAR model}

Using basic properties of the Z-transform, we can
derive from (\ref{eq:1a}) the following analytic expressions of the input PSD $S_{xx}$: 
\[
S_{xx}(\nu)=|\widehat{n}(\nu)|^{2}=|\widetilde{n}(\exp(2\pi\mathbf{i}\nu))|^{2}\,,
\]
with
\[
\tilde{n}(z)=\frac{1}{1-\sum_{k} a_{k}z^{-k}}.
\]
Moreover, the transfer function corresponding to the mechanism in (\ref{eq:2}) is
\[
\widetilde{m}(z)=\frac{\sum_{k} c_{k}z^{-k}}{1-\sum_{k} b_{k}z^{-k}}.
\]
As a consequence, testing SIC on the VAR model in the forward direction amounts (when neglecting
the filtered noise $\xi$) to test independence between
$$
|\widehat{h}(\nu)|^2=|\widetilde{m}(\exp(2\pi\mathbf{i}\nu))|^{2}
$$
and
$$
S_{xx}(\nu)=|\tilde{n}(\exp(2\pi \mathbf{i} \nu)|^2\,,
$$ which are parametrized by the coefficients $\{b_{k},c_{k}\}$
and $\{a_{k}\}$ respectively.
\end{comment}
\section{Additional background}
\subsection{Probabilistic interpretation}\label{sec:probainterp}
\begin{figure}%{htp}{.35\linewidth}
\centering 
\includegraphics[width=.5\linewidth]{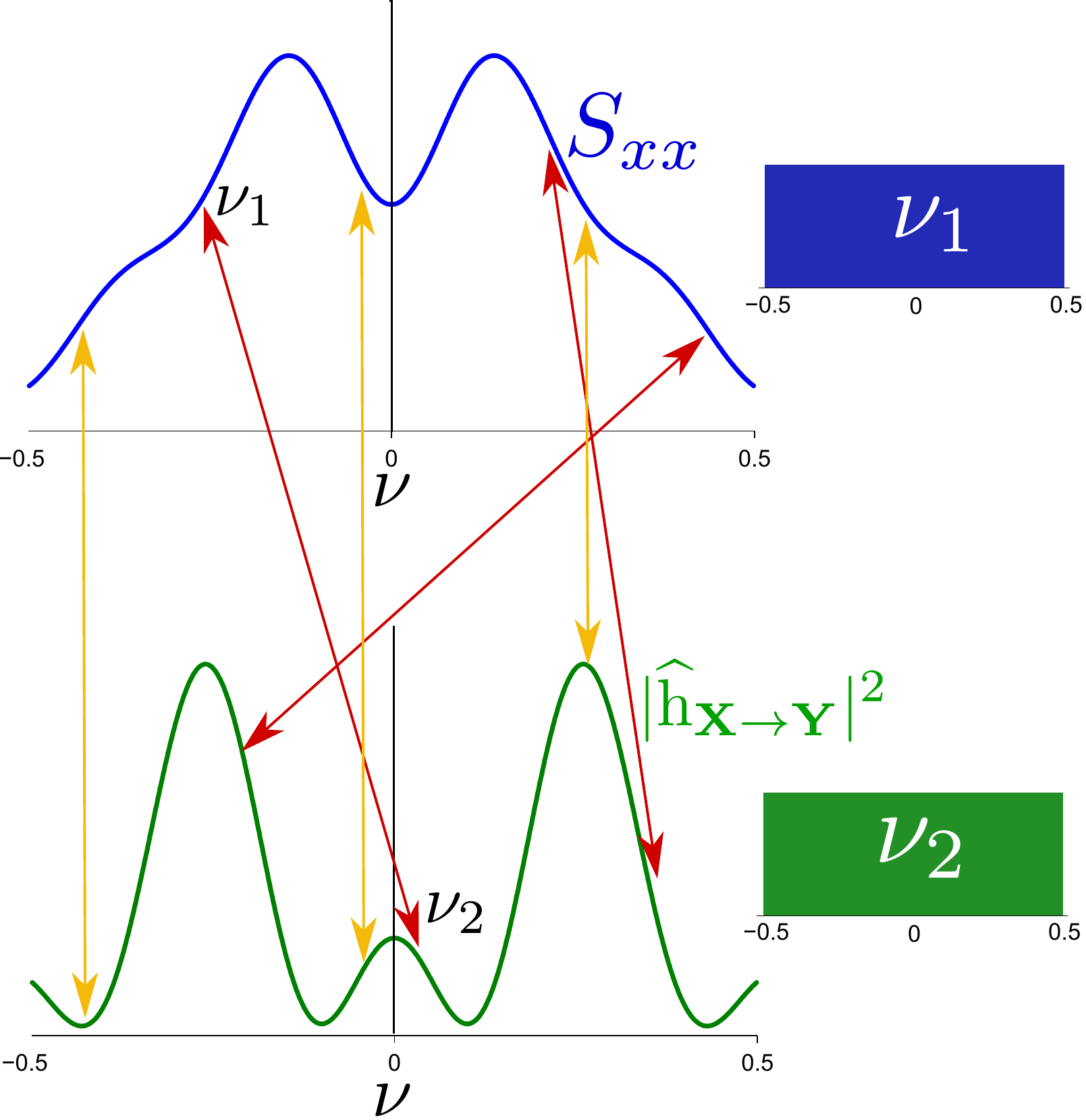}
\caption{Probabilistic view of spectral independence: 
%assume $\nu_1$ and $\nu_2$ are uniformly distributed random variables. SIC is satisfied when 
the product has the same expectation when frequencies are matched at random (red), compared to when they are chosen  identical (orange).
\label{fig:SICrandIllust}\naji{I like this!!}}
\end{figure}
To motivate why \cref{eq:sic} is called an {\it independence} condition, we notice that the difference between the left and the right hand side can be written as a covariance: %
\begin{equation} \label{eq:covSIC}
\langle S_{xx} \cdot |\widehat{\rm h}_{\mathbf{X}\rightarrow \mathbf{Y}}|^2 \rangle -\langle S_{xx}\rangle \langle |\widehat{\rm h}_{\mathbf{X}\rightarrow \mathbf{Y}}|^2\rangle=
 \mathbb{C}{\rm ov}\left( S_{xx}, |\widehat{\rm h}_{\mathbf{X}\rightarrow \mathbf{Y}}|^2 \right)\,.
\end{equation}
To understand the rephrasing as {\it covariance}, note that the maps
$\nu \mapsto S_{xx}(\nu)$ and $\nu \mapsto |\widehat{\rm h}_{\mathbf{X}\rightarrow \mathbf{Y}}(\nu)|^2$
can be formally interpreted as random variables on the probability space $[-1/2,1/2]$ equipped with the uniform measure. 
From the purely mathematical point of view, this interpretation is certainly justified because any measurable map from a measure space to $\mathbb{R}$ can be considered as a random variable. For a more intuitive approach, one can think of a random experiment where one chooses
a frequency $\nu\in [-1/2,1/2]$ according to the  uniform distribution and then observes
$S_{xx}(\nu)$ and $|\widehat{\rm h}_{\mathbf{X}\rightarrow \mathbf{Y}}(\nu)|^2$. \naji{I am not sure if this is needed or does not make the things more complicated.}Finally, after a large (actually infinite) number of runs, the covariance of these quantities is given by \eqref{eq:covSIC}.
Accordingly,  SIC can be interpreted  as  follows: The expectation of
 $S_{xx}(\nu)|\widehat{\rm h(\nu)}_{\mathbf{X}\rightarrow \mathbf{Y}}|^2$ is the same as if two frequencies $\nu_1$ and $\nu_2$ were drawn independently (as 
 illustrated on \cref{fig:SICrandIllust}) for both functions,
% is that the product random variable has the same mean as if it was generated by using two independent copies of the frequency random variable, $\nu_1$ and $\nu_2$, for each term of the product, 
which amounts to match at random the frequencies of the amplification factor and of the input PSD. Conversely, SIC would be violated, for instance, if the system mainly amplifies the frequencies with low intensity, as it happens for the `anti-causal' direction  
in our motivating example in Subsection~\ref{subsec:extremeExample}. However, spectral independence does not correspond to classical statistical independence for two reasons. First ``independence'' is measured at the level of the parameters of the causal system (filter coefficients and input PSD properties) and not at the level of the observed random variables. Second, statistical independence is a stronger statement than uncorrelatedness, and SIC corresponds essentially to the latter.

\subsection{Intuitive Example\label{subsec:extremeExample}}
 As illustrated in Fig.~\ref{fig:sic_intuition}, assume that $\X$ is a white noise process, such that it has a spectral density function with uniform distribution, i.e. $S_{xx}(\nu)=c>0$ for all $\nu\in[-1/2,\,1/2]$.
 \domi{again $[0,1]¢ instead?$} \michel{I would say rather not for the physical interpretability, but do not care that much, we can make a footnote with the alternative option...}
Here, we reject the hypothesis that 
$\Y$ causes $\X$ because it would require the corresponding filter mapping $\Y$ to $\X$ to be {\it tuned} relative to the input signal $\bY$:
it amplifies the frequencies having small power in the input signal while it weakens
those that have large power `in order' to generate an output whose power is uniformly distributed over the frequencies.  
 This results in  $|\widehat{\rm h}_{\Y\to\X}|^2$ and $S_{yy}$ being strongly anticorrelated (see Fig.~\ref{fig:sic_intuition} bottom right panel).  Thus, we have an extreme example
 of violation of ICM:
  $S_{yy}$ encodes all the second order statistics
 of $\bY$, the hypothetical input (for zero mean Gaussian time series it even describes
the statistics of the input signal completely).  
On the other hand, $\widehat{\rm h}_{\Y\to\X}$ is a representative feature of the mechanism.
Thus, the mechanism that generates $\X$ from $\Y$  is \textit{informative} about its input. 

\begin{figure}
\centering
\includegraphics[width=\linewidth]{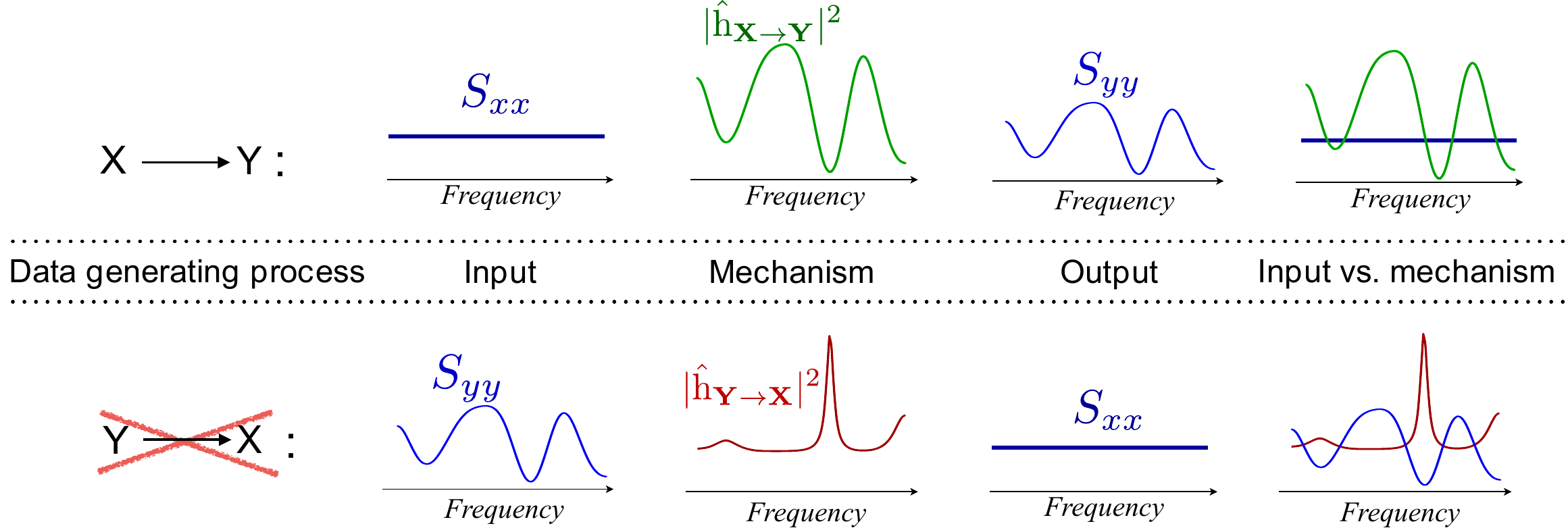}
\caption{\small{This figure shows two different possible causal scenarios: 1 - Upper Diagram where $\X$ causes $\Y$ and $\Y$ with PSD $S_{yy}$ is generated from $\X$ with $S_{xx}$ that gets multiplied pointwise with $|\h_{\X\to\Y}|^2$. 2 - Lower diagram where the inverse process takes place, where $S_{yy}$ is the PSD of input process.
\label{fig:sic_intuition}}}
\end{figure}

%\newpage 
\section{SIC and group invariance}\label{app:groupinv}\label{sec:invagen}
\paragraph{Group invariance perspective on ICM}
The group invariance framework assesses ICM by quantifying the \textit{genericity} of the relationship between input $x$, corresponding to the PSD of the putative cause time series in the context of this paper, and mechanism $m$, corresponding to the filter. As explained in \cite{besserve2018group}, this requires defining two objects:
%\begin{itemize}
(1) the \textit{generic group} $\G$ is a compact topological group that acts on cause properties (i.e. elements of the group are transformation that modify the properties of the cause), thus equipped with a unique Haar probability measure $\mu_\G$, % (see appendix\michel{to do}),
(2) the \textit{contrast} $C$ is a real valued function. 

The contrast and generic group introduced in such a way allow to compute the expected contrast value when randomly "breaking" the cause-mechanism relationship using generic transformations according to the following definition.
\begin{defn}
	Given a contrast $C$, the \textit{Expected Generic Contrast} (EGC) of a cause mechanism pair $(x,m)$ is defined as:
	\begin{equation}\label{eq:EGC}
		\textstyle{\langle C\rangle_{m,x}=\mathbb{E}_{g\sim \mu_\G}C(mgx)}\,.
	\end{equation}
	The relation between $m$ and $x$ is $\G$-generic under $C$, %(or the pair $(x,m)$ is $\G$-generic)
	whenever
	\begin{equation}\label{eq:genericity}
		\textstyle{C(mx)=\langle C\rangle_{m,x}}
	\end{equation}
	holds approximately.
\end{defn}
%From this definition, any $x$ with invariant attribute has a generic relation to any mechanism. 
%\domi{it may make sense to describe what m and x will be later in our example and what the expectation is}
 \Cref{eq:genericity} is the \textit{genericity equation}, which expresses the idealized ICM postulate (hence the term ``holds approximately"). \michel{explain more if possible}%that is not meant to be satisfied exactly in practice but justified by assuming, in appropriate contexts, that the value of $C(mx)$ concentrates around its mean (see \citep{icml2010_062} for an example).
%\subsection{Group invariance perspective on spectral independence}
%\subsubsection{Group invariance framework for ICM}
\citet{besserve2018group} give a probabilistic interpretation of the concept of genericity. Assume we are given a generative model such that the cause $x$ is a single sample drawn from a meta-distribution\footnote{meta-distributions  have some similarities with the approach of \citet{lopez2015towards}} $\mathcal{P}_\mathcal{X}$ (see \cref{fig:genmodsamp}).  To estimate genericity irrespective of the possible values of $x$, we consider the \textit{genericity ratio} $C(mx)/\langle C\rangle_{m,x}$: this quantity should be close to one with high probability in order to satisfy ICM assumptions. %and in several cases (e.g. \cref{sec:gtview}) leads to a scale invariant measure of genericity. 
Assume $\mathcal{P}_\mathcal{X}$ is a $\G$-invariant distribution, under mild assumptions \citep{wijsman1967cross} $x$ can be parametrized as $x=g\tilde{x}$ where $g$ is a sample from a $\mu_{\G}$-distributed variable $G$, and $\tilde{x}$ is a sample from anther RV independent of $G$. %independent RVs and $g\sim\mu_{\G}$. 
%Then (see appendix B for details\michel{add to appendix})
\begin{equation}\label{eq:ratio}
	\mathbb{E}_x \!\!\left[\frac{C(mx)}{\langle C\rangle_{m,x}}\right]\!=\!\mathbb{E}_{\tilde{x}}\mathbb{E}_{{g}}\!\! \left[\frac{C(mg\tilde{x})}{\langle C\rangle_{m,g\tilde{x}}}\right]\!=\!1
\end{equation} 
This tells us that the postulate of genericity is valid at least ``on average" for the generative model. On the contrary, if this average would be different from 1 as it may happen for a non-invariant $\mathcal{P}_\mathcal{X}$, the postulate is unlikely be valid for individual examples. As represented on \cref{fig:genmodsamp}, the same reasoning can be applied when sampling the mechanism from an invariant distribution.

\paragraph{The case of SIC.}
By using the power of the time series ${\bf Y}$  in (\ref{eq:conv}) as a contrast, and frequency translations as the generic group, we can show that Spectral Independence postulate correponds to a genericity equation in the group invariance framework. 
%Since PSD's are even functions for real valued signals, the generic transformations are defined by their action on positive frequencies and the negative frequencies are built by symmetry.
\begin{prop}
	\label{thm:CMU-SIC}
	Let $\G$ be the group of modulo 1/2 translations that acts on the PSD by shifting its graph for positive frequencies ($\nu \in [0,\,1/2]$) while the graph for negative frequencies is defined so that the transformed PSD is even. Using the total power as a contrast, $\G-$genericity is equivalent to SIC. 
\end{prop}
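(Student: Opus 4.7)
The plan is to unfold each side of the genericity equation \eqref{eq:genericity} for this particular choice of $(\G, C)$ and recognize directly the SIC postulate \eqref{eq:sic}. First, I would make the contrast explicit: since the mechanism is the filter with impulse response $\h_{\X\to\Y}$, applied to an input time series whose PSD is $S_{xx}$, the total output power is
\[
C(m\cdot x) = \mathcal{P}(\Y) = \langle S_{yy} \rangle = \langle |\widehat{\rm h}_{\X\to\Y}|^2 \, S_{xx} \rangle,
\]
using \eqref{eq:ampli}. Thus the left-hand side of \eqref{eq:genericity} already matches the left-hand side of SIC \eqref{eq:sic}.

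Next, I would describe the group action carefully. An element $g_\tau \in \G$, parametrized by $\tau \in [0,1/2)$ with Haar measure equal to the uniform distribution, maps $S_{xx}$ to $S_{xx}^\tau$, defined on $[0,1/2]$ by the cyclic shift $S_{xx}^\tau(\nu) = S_{xx}(\nu - \tau \bmod 1/2)$ and extended to $[-1/2,1/2]$ by evenness. Crucially, $S_{xx}^\tau$ is still a valid (non-negative, even, continuous) PSD, so that $C(m g_\tau x) = \langle |\widehat{\rm h}_{\X\to\Y}|^2 \, S_{xx}^\tau \rangle$ is well defined.

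The key computation is then the expected generic contrast. Using the evenness of $|\widehat{\rm h}_{\X\to\Y}|^2$ and $S_{xx}$, write $\langle f \rangle = 2 \int_0^{1/2} f(\nu) d\nu$ for any even function $f$, and apply Fubini:
\begin{align*}
\langle C \rangle_{m,x}
&= \mathbb{E}_{\tau \sim \mu_\G} \bigl[ 2 \textstyle{\int_0^{1/2}} |\widehat{\rm h}_{\X\to\Y}(\nu)|^2 \, S_{xx}^\tau(\nu)\, d\nu \bigr] \\
&= 2 \textstyle{\int_0^{1/2}} |\widehat{\rm h}_{\X\to\Y}(\nu)|^2 \, \bigl(\mathbb{E}_\tau S_{xx}(\nu - \tau \bmod 1/2)\bigr) d\nu.
\end{align*}
By translation invariance of the uniform measure on $[0,1/2)$, the inner expectation equals $2 \int_0^{1/2} S_{xx}(\nu')\, d\nu' = \langle S_{xx} \rangle$, which is independent of $\nu$. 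Pulling this constant out gives $\langle C \rangle_{m,x} = \langle |\widehat{\rm h}_{\X\to\Y}|^2 \rangle \, \langle S_{xx} \rangle$.

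Finally, equating the two expressions yields
\[
\langle |\widehat{\rm h}_{\X\to\Y}|^2 \, S_{xx} \rangle = \langle |\widehat{\rm h}_{\X\to\Y}|^2 \rangle \, \langle S_{xx} \rangle,
\]
which is precisely \eqref{eq:sic}, establishing the equivalence in both directions. The only subtlety worth double-checking is that shifting only the positive-frequency half and symmetrizing by evenness is compatible with taking integrals over the full interval $\mathcal{I}$, i.e.\ that the Haar-average of the shifted PSD really is the constant $\langle S_{xx} \rangle$; the evenness of both $S_{xx}$ and $|\widehat{\rm h}_{\X\to\Y}|^2$ is what makes this step clean. No other step poses a serious obstacle.
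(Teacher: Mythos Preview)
Your proposal is correct and follows essentially the same approach as the paper's own proof: both unfold the genericity equation, apply Fubini to swap the integral over the group with the integral over frequency, and use translation invariance of the uniform Haar measure on $[0,1/2)$ to reduce the inner average to the constant $\langle S_{xx}\rangle$. If anything, your write-up is slightly more explicit than the paper's about the role of evenness in passing between the half-interval and the full interval $\mathcal{I}$.
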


\begin{proof}
Suppose that for a given mechanism $m$ and given input $S_{xx}$ the $\G-$genericity assumption is satisfied. Noticing that $\mu_{\G}$ is the uniform probability measure over $[0,\,1/2]$. This amounts to
\begin{gather*}
\int_{-1/2}^{1/2} S_{xx}(\nu)|\widehat{{\rm h}}(\nu)|^2 d\nu=\int_{0}^{1/2} \left(2\int_{0}^{1/2} |\widehat{\rm h}(\nu)|^2 S_{xx}(\nu-g)\mu_G(g)d\nu \right) dg\\
= 4\int_{0}^{1/2}\int_{0}^{1/2} |\widehat{\rm h}(\nu)|^2 S_{xx}(\nu-g)d\nu dg\\
= 4\int_{0}^{1/2} |\widehat{\rm h}(\nu)|^2 \left(\int_{0}^{1/2}S_{xx}(\nu-g)dg \right) d\nu\\
=\int_{-1/2}^{1/2} S_{xx}(\nu) d\nu \cdot  \int_{-1/2}^{1/2} |\widehat{\rm h} (\nu)|^2 d\nu
\end{gather*}
This corresponds to the formula of the SIC postulate.\ $\square$
\end{proof}

\paragraph{Whitening: a way to enforce group invariance.}

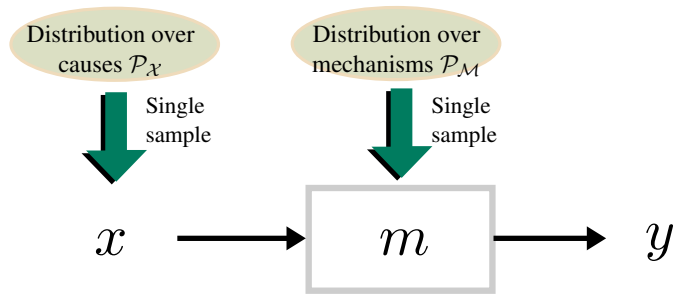
\begin{figure}%{h}{.5\linewidth}
	\centering
	\scalebox{.5}{\definecolor{cc47800}{RGB}{196,120,0}
\definecolor{cc47801}{RGB}{70,70,70}
\definecolor{c788a21}{RGB}{120,138,33}
\definecolor{c000001}{RGB}{0,0,1}
\definecolor{c007c5f}{RGB}{0,124,95}

\begin{tikzpicture}[y=0.80pt, x=0.80pt, yscale=-1.000000, xscale=1.000000, inner sep=0pt, outer sep=0pt]

  \path[draw=cc47801,opacity=0.270,miter limit=10.00,line
    width=4.825pt,rounded corners=0.0000cm] (361.4134,218.9056) rectangle
    (535.1547,316.0917);

\path[shift={(-28.71698,-32.01401)},draw=cc47800,fill=c788a21,opacity=0.270,miter
    limit=10.00,line width=2.400pt]
    (304.8291,114.9084)arc(-0.008:180.008:98.129906 and
    34.102)arc(-180.008:0.008:98.129906 and 34.102) -- cycle;

\path[shift={(237.32127,-32.70997)},draw=cc47800,fill=c788a21,opacity=0.270,miter
    limit=10.00,line width=2.400pt]
    (304.8291,114.9084)arc(-0.008:180.008:98.129906 and
    34.102)arc(-180.008:0.008:98.129906 and 34.102) -- cycle;

\begin{scope}[cm={{0.0,0.51476,-0.49753,0.0,(844.34353,-119.62392)}}]
    \begin{scope}[cm={{0.61811,0.0,0.0,0.61811,(459.80646,392.0967)}},fill=c000001]
      \path[fill=c000001] (31.9662,631.1048) -- (203.6921,631.1048) --
        (203.6921,569.4855) -- (298.1414,663.9348) -- (205.2683,756.8079) --
        (205.2683,697.7749) -- (31.9459,697.7749) -- (31.9662,631.1048) -- cycle;
    \end{scope}
    \begin{scope}[cm={{0.61811,0.0,0.0,0.61811,(454.80645,383.52527)}},fill=c007c5f]
      \path[fill=c007c5f] (31.9662,631.1048) -- (203.6921,631.1048) --
        (203.6921,569.4855) -- (298.1414,663.9348) -- (205.2683,756.8079) --
        (205.2683,697.7749) -- (31.9459,697.7749) -- (31.9662,631.1048) -- cycle;
    \end{scope}
  \end{scope}

\begin{scope}[shift={(20,0)},cm={{0.0,0.51476,-0.49753,0.0,(554.43134,-116.00195)}}]
    \begin{scope}[cm={{0.61811,0.0,0.0,0.61811,(459.80646,392.0967)}},fill=c000001]
      \path[fill=c000001] (31.9662,631.1048) -- (203.6921,631.1048) --
        (203.6921,569.4855) -- (298.1414,663.9348) -- (205.2683,756.8079) --
        (205.2683,697.7749) -- (31.9459,697.7749) -- (31.9662,631.1048) -- cycle;
    \end{scope}
    \begin{scope}[cm={{0.61811,0.0,0.0,0.61811,(454.80645,383.52527)}},fill=c007c5f]
      \path[fill=c007c5f] (31.9662,631.1048) -- (203.6921,631.1048) --
        (203.6921,569.4855) -- (298.1414,663.9348) -- (205.2683,756.8079) --
        (205.2683,697.7749) -- (31.9459,697.7749) -- (31.9662,631.1048) -- cycle;
    \end{scope}
  \end{scope}
%  \path[draw=black,line join=miter,line cap=butt,miter limit=4.00,line
   % width=2.160pt] (540.0625,268.7148) -- (643.0641,268.7148);
%  \path[draw=black,line join=miter,line cap=butt,miter limit=4.00,line
    %width=2.160pt] (236.5694,266.5391) -- (339.5711,266.5391);

\node (A) at (236.5694,266.5391) {};
\node (B) at (359.5711,266.5391) {};
\draw[-Triangle,line width=4pt] (A) edge (B);
\node (C) at (535.5694,266.5391) {};
\node (D) at (643.5711,266.5391) {};
\draw[-Triangle,line width=4pt] (C) edge (D);
\node[align=center,scale=1.7](title1) at (4.9cm,2.5cm) { Distribution over \\ causes $\mathcal{P_X}$};

\node[align=center,scale=1.7](title2) at (12.5cm,2.5cm) {Distribution over \\ mechanisms $\mathcal{P_M}$};

\node[align=center,scale=4](input) at (4.9cm,7.6cm) {$x$};
\node[align=left,scale=1.6](sampling1) at (6.7cm,4.4cm) {Single \\sample};

\node[align=left,scale=1.6](sampling2) at (14.3cm,4.4cm) {Single \\sample};

\node[align=center,scale=4](mech) at (12.7cm,7.6cm) {$m$};
\node[align=center,scale=4] (output) at (19.5cm,7.6cm) {$y$};

\end{tikzpicture}}
	\caption{Generative model including distributions over causes and mechanisms. Modified from \citet{besserve2018group}. \label{fig:genmodsamp}}
\end{figure}

As can be understood from \cref{eq:ratio} one possible source of misfit between ICM based approaches and real data is a lack of $\G$-invariance of the actual generating mechanism behind the observed data. In this section, we suggest a simple approach to adapt ICM methods to "baseline" properties of the set of causes. 
The genericity equation is compatible with a $\G$-invariant probabilistic generative model of the cause. Such an invariance assumption can be checked on real data, for example by verifying the $\G$-invariance of the empirical average of all putative causes. %If the random attribute $\alpha$ from section~\ref{sec:invagen} modeling the generation of the empirical data was to be invariant its mean $\beta=\mathbb{E}\alpha$ should be invariant. 
We can then seek to correct any lack of invariance: assume $b$ the average of empirically observed causes is not group invariant. If there exists an invertible transformation $w$ such that $w \circ b=u$ becomes invariant,  we can define a new group invariance framework with 
$$
w^{-1}\circ \mathcal{G}\circ w=\{w^{-1}\,g\,w,\,g\in\mathcal{G}\},
$$
as new generic group and $C_w:x \mapsto C(w \circ x)$ as a new contrast. $w$ is called a whitening transformation as it maps a predefined average distribution into a "white" invariant average distribution. 

%The fact that $\omega\G\omega^{-1}$ is a group follows simply from the definition of the group. Moreover, the $b$ now satisfies the genericity equation~\ref{eq:genericity}. 

%	\textstyle{C(mx)=\langle C\rangle_{m,x}}

%To show that $C_w$ is a proper contrast function we check the conditions in definition \ref{defn:contrast}. For the first condition and for any $g\in\G$ we have:
%\begin{gather*}
%C_w(w^{-1}g wx)=C(gwx)\overset{*}{=}C(wx)=C_w(x)
%\end{gather*}
%where $*$ follows from the fact that $C$ is an invariant contrast function. Now take $x_1,x_2\in\X$. Since $w\in\M$ and due to the fact that $U'$ is a CMU we have $R(w.x_1)=R(w.x_2)$. $C$ is a faithful contrast and therefore $C(w.x_1)=C(w.x_2)$. But this means $C_w(x_1)=C_w(x_2)$ and this completes the proof.

%The following result then holds.
%\begin{thm}
%	\label{thm:whitening}
%	Let $w\in\M$ be an automorphism. Then  $U_w=(A,\mathcal{M},w^{-1}\,\mathcal{G}\,w)$ is a CMU. Moreover, if $C$ is a CMC for the original CMU, $C_w:x \mapsto C(wx)$ is a CMC for the transformed CMU.  
%\end{thm}
\newpage

\section{Supplemental figures}

\begin{figure}[!htb]
\centering
\includegraphics[width=.6\textwidth]{}
\caption{SIC performance comparison before and after whitening (see Sec.~\ref{sec:invar})\label{fig:boxplot}}
\end{figure}

\end{document}